\newtheorem{theorem}{Theorem}[section]
\newtheorem{corollary}{Corollary}[theorem]
\newtheorem{rem}{Remark}[theorem]
\newtheorem{lemma}{Lemma}[theorem]
\theoremstyle{definition}
\newtheorem{definition}{Definition}[section]
\newcommand{\x}{\mathbf{x}}
\newcommand{\y}{\mathbf{y}}
\newcommand{\A}{\mathbf{A}}
\newcommand{\w}{\mathbf{w}}
\newcommand{\X}{\mathbf{X}}
\newcommand{\ramp}{\mathbf{r}}
\newcommand{\Z}{\mathbf{Z}}
\newcommand{\z}{\mathbf{z}}
\begin{document}
\title{Rigorous State Evolution Analysis for Approximate Message Passing with Side Information} 
\author{
Hangjin Liu, \textit{Student Member, IEEE}, Cynthia Rush,\textit{ Member, IEEE}, \\and Dror Baron\textit{ Senior Member, IEEE}\thanks{
A subset of this work has been presented at the IEEE International Symposium on Information Theory,
July 2019 \cite{Liu2019ISIT}.
The work of Liu and Baron was supported in part by NSF EECS $\#1611112$, and that of Rush by NSF CCF $\#1849883$.
Liu and Baron are with the Electrical and Computer Engineering Department at North Carolina State University,
Raleigh, NC, email $\{$hliu25, barondror$\}$@ncsu.edu. Rush is with the Department of Statistics at Columbia University, New York, NY email cynthia.rush@columbia.edu.
}
}

\date{\vspace{-5ex}}

\maketitle

\begin{abstract}
A common goal in many research areas is to reconstruct an unknown signal $\x$ from noisy linear measurements.  Approximate message passing (AMP) is a class of low-complexity algorithms that can be used for efficiently solving such high-dimensional regression tasks.  Often, it is the case that side information (SI) is available during reconstruction.  For this reason, a novel algorithmic framework that incorporates SI into AMP, referred to as approximate message passing with side information (AMP-SI), has been recently introduced. In this work, we provide rigorous performance guarantees for AMP-SI when there are statistical dependencies between the signal and SI pairs and the entries of the measurement matrix are independent and identically distributed (i.i.d.) Gaussian.  The AMP-SI performance is shown to be provably tracked by a scalar iteration referred to as state evolution (SE).  Moreover, we provide numerical examples
that demonstrate empirically that the SE can predict the AMP-SI mean square error accurately. 
\end{abstract}

\section{Introduction}
High-dimensional linear regression is a well-studied model used in many applications including compressed sensing\cite{DMM2009}, 
imaging\cite{Arguello2011}, and machine learning and statistics\cite{Hastie2001}. The unknown signal $\x \in\mathbb{R}^n$ is viewed through the linear model:
\begin{equation}\label{eq:1-1}
\y=\A \x+ \w,
\end{equation}
where $\mathbf{y}\in\mathbb{R}^m$ are the measurements, $\A\in\mathbb {R}^{m\times n}$ is a known measurement matrix, and  $\w\in \mathbb R^m$ is measurement noise. The goal is to estimate the unknown signal $\x$ having knowledge only of the noisy measurements $\y$ and the measurement matrix $\A$.  When the problem is under-determined (i.e., $m<n$), in order for reconstruction to be successful, it is necessary to exploit structural or probabilistic characteristics of the input signal $\x$.  Often a prior distribution on the input signal $\x$ is assumed, and in this case approximate message passing (AMP) algorithms\cite{DMM2009} can be used for the reconstruction task.

AMP~\cite{DMM2009, RanganGAMP2010} is a class of low-complexity algorithms that can be used for efficiently solving high-dimensional regression tasks (\ref{eq:1-1}). AMP works by iteratively generating estimates of the unknown input vector,  $\x$, using a possibly non-linear denoiser function tailored to any prior knowledge about $\x$. 
One favorable feature of AMP is that under some technical conditions on the measurement matrix $\A$ and 
signal
$\x$, the observations at each iteration of the algorithm are almost surely equal in distribution to $\x$ plus independent and identically distributed (i.i.d.)\,\,Gaussian noise in the large system limit, i.e., $m, n \rightarrow \infty$ with ${m}/{n} \rightarrow \delta \in (0, \infty)$.

\textbf{AMP with Side Information (AMP-SI):} In information theory~\cite{Cover06}, when different communication systems share side information (SI), overall communication can become more efficient. Recently~\cite{Dror2017, Ma2018}, 
an algorithmic framework dubbed AMP-SI was introduced that incorporates
SI into AMP for high-dimensional regression tasks (\ref{eq:1-1}).  AMP-SI has been empirically demonstrated to have good reconstruction quality and is easy to use. For example, it has been proposed to use AMP-SI for channel estimation in emerging millimeter wave communication systems~\cite{saleh1987statistical}, where the time dynamics of the channel structure allow previous channel estimates to be used as SI when estimating the current channel structure~\cite{Ma2018}.

In this work, we model the observed SI, denoted by ${{\widetilde{\x}}\in\mathbb{R}^n}$,  as depending statistically on the unknown signal $\x$ through some joint probability density function (pdf), $f({\X},\widetilde{{\X}})$. AMP-SI uses a conditional denoiser, $g_t:\mathbb{R}^{2n}\rightarrow \mathbb{R}^n$, to incorporate SI,
\begin{equation}\label{eq:eta_2}
g_t(\mathbf{a}, \mathbf{b})=\mathbb{E}[\X | \X+\lambda_t\mathcal{N}(\mathbf{0},
\pmb{\mathbb{I}}_n)=\mathbf{a}, \widetilde \X= \mathbf{b}],
\end{equation}
where $\mathcal{N}(\boldsymbol \mu,\boldsymbol\Sigma)$ is a Gaussian random vector with mean vector $\boldsymbol \mu$ and $\boldsymbol \Sigma$, and $\pmb{\mathbb{I}}_n$ denotes
an identity matrix of size $n \times n$.  The constant $\lambda_t$ used in \eqref{eq:eta_2} is determined by the state evolution introduced below.

The AMP-SI algorithm iteratively updates estimates of the input signal $\x$: let $\x^0  = \mathbf{0}$, the all-zeros vector, then
\begin{align}
\ramp^t &=\y-\A\x^t+\frac{\ramp^{t-1}}{\delta} [ \text{div} \, g_{t-1}(\x^{t-1}+\A^T\ramp^{t-1}, \widetilde \x)], \label{eq:1-5} \\
 \x^{t+1}&={g_{t}}(\x^{t}+\A^T\ramp^{t}, \widetilde \x), \label{eq:1-6}
\end{align}
where 
$\x^t\in \mathbb{R}^n$ is the estimate of $\x$ at iteration $t$, the measurement rate is $\delta={m}/{n}$, and
$\ramp^t$ is the residual or unexplained part of the measurements. The term $\x^{t}+\A^T\ramp^{t}$ is known as pseudo data and in the non-SI case has been proved
to be equal in distribution to the input $\x$ plus white Gaussian noise in the large system limit. For a differential function, 
$g: \mathbb{R}^{2n} \rightarrow \mathbb{R}^n$, we let $\text{div} \, g(\mathbf{a}, \mathbf{b})$ be the divergence of the function with respect to (w.r.t.) the first argument, namely,
we use \begin{equation}
\text{div}\, g(\mathbf{a}, \mathbf{b}) = \sum_{i=1}^n \frac{\partial g_i}{\partial a_i}(\mathbf{a}, \mathbf{b}).\end{equation}The AMP-SI algorithm~\eqref{eq:1-5}--\eqref{eq:1-6} uses the conditional denoiser \eqref{eq:eta_2}.

\textbf{State Evolution (SE):} It has been proven that the performance of AMP, as measured, for example, by the normalized squared $\ell_2$-error $\frac{1}{n}||\x^t-\x||_2^2$ between the estimate  $\x^t$ and true signal $\x$, can be accurately  predicted by a scalar recursion referred as SE\cite{Bayati2011,Rush_Finite18}. SE analyses require that the matrix $\A$ be
i.i.d.\ Gaussian and various assumptions on the elements of the signal. 
The SE equation for AMP-SI is as follows. Assume the entries of the noise $\w$ are i.i.d.\ $\sim f(W)$ with $\sigma_w^2 = \mathbb{E}[W^2]$, and let $\lambda_0^2 = \sigma_w^2 + \lim_m ||\x||^2/m$. Note that $\lambda_0^2$ is the initial variance of the difference between the pseudo-data at iteration $t=0$, i.e.\ $\x^{0}+\A^T\ramp^{0} = \A^T \y$, and signal $\x$,
\begin{equation}
\lambda_t^2 = \sigma_w^2 +  \lim_m \frac{1}{m}\mathbb{E}_{\Z}\left[||g_{t-1}(\x + \lambda_{t-1}\Z, \widetilde{\x}) - \x||^2\right],
\label{eq:SE2}
\end{equation}
where
$\Z\sim \mathcal{N}(\mathbf{0},\pmb{\mathbb{I}}_n)$. In particular, the value of $\lambda_t$ defined above is used to define the conditional denoiser in \eqref{eq:eta_2} to be used in the AMP-SI algorithm in~\eqref{eq:1-5}--\eqref{eq:1-6}.

Considering AMP-SI~\eqref{eq:1-5}-\eqref{eq:1-6},  however, we cannot directly apply the existing AMP theoretical results~\cite{Bayati2011, Rush_Finite18}, as the conditional denoiser in~\eqref{eq:eta_2} is not a separable denoiser, in that its output at any index $i$ may depend on all other indices of the input.  Even when the signal and SI pair, $(\x, \widetilde{\x})$, is sampled i.i.d.\ 
from the joint pdf
$f(X, \widetilde{X})$, each entry of the signal $\x$ is generated
according to a different conditional density depending on the corresponding SI $\widetilde{x}_i$, and therefore the signal $\x$ now has independent, but not identically
distributed entries.  
The usual results~\cite{Bayati2011, Rush_Finite18} require that the same
denoiser be applied to each entry of the pseudo-data. Therefore, even in the case of an i.i.d.\ distributed signal and SI pair, the theoretical results in~\cite{Bayati2011,Rush_Finite18} do not apply, because different scalar denoisers are needed for each index.
Recent results~\cite{Berthier2017}, however, extend the asymptotic SE analysis to a larger class of possible denoisers, allowing, for example, each element of the input to use a different non-linear denoiser as is the case in AMP-SI.  We employ these results to rigorously relate the SE presented in \eqref{eq:SE2} to the AMP-SI algorithm in~\eqref{eq:1-5}-\eqref{eq:1-6}.

{\bf Related Work:}
While integrating SI into reconstruction algorithms is not new, AMP-SI introduces a unified framework within AMP supporting arbitrary signal and SI dependencies.
Prior work using SI has been either heuristic, limited 
to specific applications, or outside the AMP framework. 

For example, Wang and Liang~\cite{wang2015approximate} integrate SI into AMP for a specific signal prior density, but the method is difficult to apply to other signal models.  Ziniel and Schniter~\cite{DCSAMP} develop an AMP-based reconstruction algorithm for a
time-varying signal model based on Markov processes for the support and amplitude.  This signal model is easily incorporated into the AMP-SI framework as discussed in the analysis of the birth-death-drift model of~\cite{Dror2017, Ma2018}.  Manoel \emph{et al}.\ implement an AMP-based algorithm in which
the input signal is repeatedly reconstructed in a streaming fashion,
and information from past reconstruction attempts is aggregated into a prior,
thus improving ongoing reconstruction results~\cite{manoel2017streaming}.
This reconstruction scheme resembles that of AMP-SI, in particular when the Bernoulli-Gaussian model is used (see Section~\ref{subsec:examples iid}).

{\bf Contribution and Outline:} Ma \emph{et al}.~\cite{Ma2018} use numerical experiments to show that SE~\eqref{eq:SE2} accurately tracks the performance of AMP-SI~\eqref{eq:1-5}-\eqref{eq:1-6}, as was shown rigorously for standard AMP, and they conjecture that rigorous theoretical guarantees can be given for AMP-SI as well. In this work, we analyze AMP-SI performance when the input signal and SI are drawn according to a general pdf $f(\X, \widetilde{\X})$ obeying some finite moment conditions, the AMP-SI denoiser~\eqref{eq:eta_2} is uniformly Lipschitz, and the measurement matrix $\A$ is i.i.d.\ Gaussian.

In Section~\ref{main_result}, we provide the main results. Examples for various signal and SI models, and numerical experiments comparing the empirical performance of AMP-SI and the SE predictions, are provided in Section~\ref{examples}. The technical proofs of our main results are given in Sections~\ref{main_proof} and~\ref{main_proof_non-separable}.

{\bf Notation:}
Throughout the work we will use bold symbols to indicate vectors and non-bold to indicate scalars.  Capital letters will indicate random variables (RVs), and lowercase letters their realizations.  For example, $\widetilde{\X}$ is a random vector of SI, $\widetilde{\x}$ is a realization,  and $\widetilde{x}_i$ is the $i^{th}$ entry of the realization.  With a slight abuse of notation we also use bold, capital letters to indicate random matrices, like the measurement matrix $\A$.  We let $||\cdot||$ denote the  Euclidean norm, and $\overset{p}{=}$ denotes convergence in probability. The expectation $\mathbb{E}_\Z$ represents the expected value w.r.t.\  $\Z$.

\section{Main Results}\label{main_result}
Our main results provide AMP-SI performance guarantees when considering \textit{pseudo-Lipschitz} loss functions, defined below.

\begin{definition}
\label{def:PLfunc}
\textbf{\emph{Pseudo-Lipschitz functions}}~\cite{Berthier2017}: For $k\in \mathbb{N}_{>0}$ and any $n,m\in \mathbb{N}_{>0}$, a function $\phi : \mathbb{R}^n\to
\mathbb{R}^m$ is \emph{pseudo-Lipschitz of order $k$}, or \emph{PL(k)}, if there exists a constant $L$, referred to as the pseudo-Lipschitz constant of $\phi$, such that for $\x, \y \in \mathbb{R}^n$,
\begin{equation*}
\frac{\left|\left|\phi(\x)-\phi(\y)\right|\right|}{\sqrt{m}}\leq L\Big(1+ \Big(\frac{||\x||}{\sqrt{n}}\Big)^{k-1}+ \Big(\frac{||\y||}{\sqrt{n}}\Big)^{k-1}\Big) \frac{||\x-\y||}{\sqrt{n}}.
\end{equation*}
For $k = 1$, this definition coincides with the standard definition of a Lipschitz function. %

A sequence (in $n$) of PL(k) functions $\{\phi_n\}_{n\in \mathbb{N}_{>0}}$
is called \emph{uniformly pseudo-Lipschitz}
of order $k$, or \emph{uniformly PL(k)}, if, denoting by $L_n$ the pseudo-Lipschitz constant of $\phi_n$, we have $L_n < \infty$ for each $n$ and $\lim\sup_{n\to\infty}L_n < \infty$. We refer to a function as \emph{uniformly Lipschitz} if it is uniformly PL(1).
\end{definition}
Throughout the work we will use the following result about pseudo-Lipschitz functions.  Its proof can be found in Appendix~\ref{app:technical}.

\begin{lemma}
\label{prop_Lipschitz}
For any PL(k) function $\phi : \mathbb{R}^n\to
\mathbb{R}^m$ with PL constant $L > 0$, there exists a constant $L' = \max\{2L, ||\phi(\mathbf{0})||/\sqrt{m}\} > 0$ where $\mathbf{0} \in \mathbb{R}^n$ is a vector of zeroes, such that for $\x\in\mathbb{R}^n$, 
\begin{equation*} 
\frac{\left|\left|\phi(\x)\right|\right|}{\sqrt{m}}
\leq L'\Big(1+ \Big(\frac{||\x||}{\sqrt{n}}\Big)^k\Big).
\end{equation*}
\end{lemma}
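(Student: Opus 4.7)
The plan is a direct computation using the PL(k) inequality with $\y = \mathbf{0}$, followed by a triangle inequality to separate the $\phi(\mathbf{0})$ term from a pure growth term in $\x$. Concretely, I would first write
\[
\frac{||\phi(\x)||}{\sqrt{m}} \;\leq\; \frac{||\phi(\mathbf{0})||}{\sqrt{m}} + \frac{||\phi(\x)-\phi(\mathbf{0})||}{\sqrt{m}},
\]
and apply Definition~\ref{def:PLfunc} with $\y=\mathbf{0}$ to obtain the bound
\[
\frac{||\phi(\x)-\phi(\mathbf{0})||}{\sqrt{m}} \;\leq\; L\Big(1 + \Big(\tfrac{||\x||}{\sqrt{n}}\Big)^{k-1}\Big)\,\tfrac{||\x||}{\sqrt{n}}.
\]
Setting $r = ||\x||/\sqrt{n}$, the right-hand side equals $Lr + Lr^k$.

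Next, the key elementary observation is that, for $k\geq 1$ and $r\geq 0$, one has $r \leq 1 + r^k$: this is immediate by case split on whether $r \leq 1$ (so $r\leq 1\leq 1+r^k$) or $r>1$ (so $r\leq r^k\leq 1+r^k$). Applying this to the linear term gives $Lr + Lr^k \leq L(1+r^k) + Lr^k \leq 2L(1+r^k)$. I would also handle the constant term using the trivial inequality $1 \leq 1 + r^k$, so $||\phi(\mathbf{0})||/\sqrt{m} \leq (||\phi(\mathbf{0})||/\sqrt{m})\,(1+r^k)$. Combining these two pieces yields
\[
\frac{||\phi(\x)||}{\sqrt{m}} \;\leq\; \Big(\tfrac{||\phi(\mathbf{0})||}{\sqrt{m}} + 2L\Big)\Big(1 + r^k\Big).
\]

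Finally, I would absorb the coefficient into the stated $L'$ by noting that $\tfrac{||\phi(\mathbf{0})||}{\sqrt{m}} + 2L \leq 2\max\{2L,\,||\phi(\mathbf{0})||/\sqrt{m}\} = 2L'$, which gives the desired inequality (up to the constant factor built into $L'$). The only subtlety, and the step that deserves care in the write-up, is the splitting trick $r \leq 1 + r^k$ for general $k\geq 1$: this is where the $r^{k-1}$ factor in the PL(k) definition interacts with the stand-alone $r$ coming from $||\x-\mathbf{0}||/\sqrt{n}$, and a naive bound like $1 + r^{k-1}\leq 2\max(1,r^{k-1})$ would leave an extra $r$ that cannot be absorbed into $L'(1+r^k)$ without this observation. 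No further machinery is needed; the argument is a couple of lines of elementary inequalities.
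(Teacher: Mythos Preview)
Your approach is essentially identical to the paper's: apply the PL($k$) inequality with $\y=\mathbf{0}$, separate $\|\phi(\mathbf{0})\|/\sqrt{m}$ by the triangle inequality, and control the cross term $Lr$ via the case split $r\le 1$ versus $r>1$ (which you package neatly as $r\le 1+r^k$). The paper reaches the same intermediate bound $Lr+Lr^k+\|\phi(\mathbf 0)\|/\sqrt m$ and then performs the same case split.

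The factor of $2$ you flag is not a defect of your argument but of the stated constant. The value $L'=\max\{2L,\|\phi(\mathbf 0)\|/\sqrt m\}$ is in general too small: for $n=m=1$, $k=2$, $\phi(x)=2+x+\tfrac{1}{2}x|x|$ one has $\phi'(x)=1+|x|$, so the PL(2) constant is $L=1$, and $\|\phi(0)\|=2$, giving $L'=2$; yet $\phi(1/2)=21/8>5/2=L'(1+(1/2)^2)$. Correspondingly, the paper's own proof has a gap at the step ``$L(1+r^k)+\|\mathbf a\|/\sqrt m\le L'(1+r^k)$'' in the case $r\le 1$ (set $\|\mathbf a\|/\sqrt m\ge 2L$ and $r\to 0$). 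Your conclusion with $2L'$ in place of $L'$ is the correct one, and since every downstream use of the lemma only needs the existence of such a constant, nothing in the paper is affected.
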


In Section~\ref{sec:sep} we present theoretical results for the simplified situation where the signal and the SI have i.i.d.\ entries, meaning that separable denoisers can be used. Though the separable denoiser is a simplified version of the main result, we believe that stating this case will elucidate the main technical pieces used in the more general non-separable case. In Section~\ref{sec:non-sep} we provide theoretical analysis of the general signal and SI model using non-separable denoisers in the AMP-SI iteration.

\subsection {Separable denoiser} 
\label{sec:sep}
In the case of $(\x, \widetilde{\x})$ sampled i.i.d.\ 
from the joint pdf
$f(X, \widetilde{X})$, the conditional denoiser of AMP-SI \eqref{eq:eta_2}
is separable.  Define $\eta_t: \mathbb{R}^2 \rightarrow \mathbb{R}$ as
\begin{equation}\label{eq:eta_2_iid}
\eta_t(a, b)=\mathbb{E}[X | X+\lambda_t\mathcal{N}(0,
1)=a, \widetilde X= b],
\end{equation}
and the AMP-SI algorithm in \eqref{eq:1-5}-\eqref{eq:1-6} simplifies to
\begin{align}
&\ramp^t =\y-\A\x^t+\frac{\ramp^{t-1}}{\delta} \sum_{i=1}^n \eta'_{t-1}([\x^{t-1}+\A^T\ramp^{t-1}]_i, \widetilde x_i), \label{eq:1-5_iid} \\
& x_i^{t+1} ={\eta_{t}}([\x^{t}+\A^T\ramp^{t}]_i, \widetilde x_i), \quad \text{ for } i = 1, 2, \ldots, n, \label{eq:1-6_iid}
\end{align}
where the derivative $\eta_t'(s, \cdot) = \frac{\partial}{\partial s} \eta_t(s, \cdot)$. We highlight that the difference between these equations and the standard (no SI) AMP equations is that the denoiser used in AMP-SI incorporates the SI. Owing to $\x$ no longer being i.i.d., the conditional denoiser \eqref{eq:eta_2_iid} depends
on the index $i$, meaning that different scalar denoisers will be used at different indices,
based on different SI values at different indices.  For the denoiser in \eqref{eq:eta_2_iid}, the SE is as follows: let $\lambda_0^2 = \sigma_w^2 + \mathbb{E}[X^2]/\delta$ and for $t \geq 1$,
\begin{equation}
\lambda_t^2 = \sigma_w^2 + \frac{1}{\delta}\mathbb{E}\left[(\eta_{t-1}(X + \lambda_{t-1}Z, \widetilde{X}) - X)^2\right],
\label{eq:SE2_iid}
\end{equation}
where $(X, \widetilde{X}) \sim f(X, \widetilde{X})$ are independent of $Z\sim \mathcal{N}(0,1)$.

The following result characterizes the asymptotic performance of separable AMP in \eqref{eq:1-5_iid}-\eqref{eq:1-6_iid}
when the performance is measured with pseudo-Lipschitz loss.

\begin{theorem}
\label{thm:SE}
For any PL(2) functions, $\phi: \mathbb{R}^{2} \rightarrow \mathbb{R}$ and $\psi: \mathbb{R}^{3} \rightarrow \mathbb{R}$, 
define sequences of  functions, $\phi_m: \mathbb{R}^{2m} \rightarrow \mathbb{R}$ and $\psi_n: \mathbb{R}^{3n} \rightarrow \mathbb{R}$, 
as follows: for vectors $\mathbf{a}, \mathbf{b} \in \mathbb{R}^m$ and $\x, \y, \widetilde{\x} \in \mathbb{R}^n$,
\begin{equation}
\begin{split}
 \phi_m\left(\mathbf{a}, \mathbf{b}\right) &:= \frac{1}{m} \sum_{i=1}^m \phi\left(a_i, b_i\right), \quad \text{ and } \quad \psi_n\left(\x, \y,\widetilde{\x}\right) := \frac{1}{n} \sum_{i=1}^n \psi\left(x_i, y_i,\widetilde{x}_i\right).
 \label{eq:sum_funcs}
\end{split}
\end{equation}
Then the functions in \eqref{eq:sum_funcs} are uniformly PL(2). Next, assume the following:
\begin{itemize}
\item[\textbf{(A1)}] The measurement matrix $\A$ has i.i.d.\  Gaussian entries with mean $0$ and variance $1/m$.
\item[\textbf{(A2)}] The noise $\w$ 
is i.i.d.\ $\sim f(W)$ with finite  $\mathbb{E}[W^{2}]$.
\item[\textbf{(A3)}] The signal and SI pair $(\x, \widetilde{\x})$ are sampled i.i.d.\ from $f(X, \widetilde{X})$ with finite $\mathbb{E}[X^{2}]$ and finite $ \mathbb{E}[\widetilde{X}^{2}]$. 
\item[\textbf{(A4)}] For $t\geq 0$, the denoisers $\eta_t(\cdot, \cdot)$ defined in \eqref{eq:eta_2_iid} are Lipschitz continuous: for scalars $a_1, a_2, b_1, b_2$, and a constant $L > 0$,
$|\eta_t(a_1, b_1) - \eta_t(a_2, b_2)| \leq L||(a_1, b_1) - (a_2, b_2)||$.
\end{itemize}
Then, we have the following asymptotic results for the functions defined in \eqref{eq:sum_funcs},
\begin{equation}\label{eq:main}
\begin{aligned}
&\lim_m\phi_m\left(\ramp^t, {\w}\right) \overset{p}{=} \lim_m  \mathbb{E}\left[\phi_m\left(\mathbf{W} + \sqrt{\lambda_t^2 - \sigma_w^2} \, {\Z_1}, {\mathbf{W}}\right)\right],\\
&\lim_n \psi_n\left(\x^t + {\A}^T {\ramp}^t,\x, \widetilde{\x}\right)  \overset{p}{=} \lim_n \mathbb{E}\left[\psi_n\left({\mathbf{X}} + \lambda_t {\Z_2}, {\mathbf{X}},\widetilde{\mathbf{X}}\right)\right] ,
\end{aligned}
 \end{equation}
 where 
 $\Z_1\sim\mathcal{N}(\mathbf{0}, \pmb{\mathbb{I}}_m)$, $\Z_2\sim \mathcal{N}(\mathbf{0},\pmb{\mathbb{I}}_n)$
 and both  $\mathbf{Z}_1$ and $\mathbf{Z}_2$ are independent of $\mathbf{W}  \sim i.i.d.\ f(W)$ and $(\mathbf{X}, \widetilde{\mathbf{X}})  \sim i.i.d.\ f(X, \widetilde{X})$.  
The signal estimate $\x^t$ and residual $\ramp^t$
  are defined in the AMP-SI recursion~\eqref{eq:1-5_iid}-\eqref{eq:1-6_iid}, and $\lambda_t$ in the SE~\eqref{eq:SE2_iid}.
\end{theorem}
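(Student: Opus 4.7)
The plan is to reduce the statement to the non-separable AMP framework of Berthier et al.\ \cite{Berthier2017}, treating the realized SI vector $\widetilde{\x}$ as a parameter that induces a coordinatewise-indexed denoiser. Concretely, define the vector-valued denoiser $\mathbf{g}_t^{(n)}:\mathbb{R}^n\to\mathbb{R}^n$ by setting its $i$th coordinate to $[\mathbf{g}_t^{(n)}(\mathbf{a})]_i := \eta_t(a_i,\widetilde{x}_i)$. Then the separable AMP-SI recursion \eqref{eq:1-5_iid}-\eqref{eq:1-6_iid} is exactly the non-separable AMP recursion driven by $\mathbf{g}_t^{(n)}$, since the sum of coordinatewise partial derivatives in \eqref{eq:1-5_iid} is precisely $\operatorname{div}\mathbf{g}_{t-1}^{(n)}$. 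The first preliminary step is to check that the coordinate averages $\phi_m$ and $\psi_n$ defined in \eqref{eq:sum_funcs} are indeed uniformly PL(2); this is a short computation using Cauchy-Schwarz on the PL(2) bound for $\phi,\psi$, plus the elementary inequality $(\sum_i c_i)^{k-1}\le n^{k-2}\sum_i c_i^{k-1}$ for the appropriate power.

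Next I would verify the hypotheses needed to invoke the Berthier et al.\ result on $\mathbf{g}_t^{(n)}$. Uniform Lipschitzness follows directly from \textbf{(A4)}: given two inputs $\mathbf{a},\mathbf{a}'$, the same $\widetilde{\x}$ appears in both invocations, so $\|\mathbf{g}_t^{(n)}(\mathbf{a})-\mathbf{g}_t^{(n)}(\mathbf{a}')\|^2 \le L^2\|\mathbf{a}-\mathbf{a}'\|^2$ with the same $L$ for every $n$. The required second-moment / empirical-convergence conditions on the signal and noise follow from \textbf{(A2)}, \textbf{(A3)}, and the strong law of large numbers applied to $\|\x\|^2/n$, $\|\w\|^2/m$, and $\|\widetilde{\x}\|^2/n$. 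In particular, this yields $\lim_m \|\x\|^2/m = \mathbb{E}[X^2]/\delta$ so that $\lambda_0^2$ matches the initialization of \eqref{eq:SE2_iid}.

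The third step is to show that the general (possibly vector-valued) SE variance of the Berthier framework collapses to the scalar recursion \eqref{eq:SE2_iid} in our setting. Because $(\x,\widetilde{\x})$ has i.i.d.\ entries, the population quantity $\tfrac{1}{n}\mathbb{E}_{\Z}\|\mathbf{g}_{t-1}^{(n)}(\x+\lambda_{t-1}\Z)-\x\|^2$ equals the per-coordinate expectation $\mathbb{E}[(\eta_{t-1}(X+\lambda_{t-1}Z,\widetilde{X})-X)^2]$, and the law of large numbers (together with the PL(2) / Lipschitz growth bound from Lemma \ref{prop_Lipschitz}) promotes this to the almost-sure empirical limit demanded by Berthier et al. An inductive argument on $t$ then shows the SE variance of the non-separable recursion coincides with $\lambda_t^2$ at every iteration.

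Finally, the conclusion of \cite{Berthier2017}, applied to the uniformly PL(2) test functions $\phi_m,\psi_n$, gives the pseudo-Lipschitz convergence in the non-separable form; substituting the identification $\mathbf{g}_t^{(n)}(\x+\A^T\ramp^t) = \x^{t+1}$ and using the i.i.d.\ structure one last time to push expectations inside the limits produces \eqref{eq:main}. The main obstacle I anticipate is the reduction in step three: although morally obvious from the i.i.d.\ coordinatewise structure, matching the non-separable SE of \cite{Berthier2017} to the scalar SE \eqref{eq:SE2_iid} requires care in how the empirical joint distribution of $(\x,\widetilde{\x},\Z)$ is tracked through successive iterates, and in justifying the exchange of $\lim_n$ with the expectations in \eqref{eq:main}. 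This is where the uniform-PL growth bound (Lemma \ref{prop_Lipschitz}) plus uniform integrability arguments will be needed.
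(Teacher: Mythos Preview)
Your overall strategy matches the paper's: reduce to the non-separable framework of Berthier et al., verify its hypotheses from \textbf{(A1)}--\textbf{(A4)} via the SLLN, and read off \eqref{eq:main}. The uniform PL(2) check for $\phi_m,\psi_n$ and the uniform Lipschitz verification for the coordinatewise denoiser $\mathbf{g}_t^{(n)}$ are correctly outlined, and the SE-matching you flag as the main obstacle is in fact automatic in the i.i.d.\ setting (the paper spends no effort on it).

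There is, however, a genuine gap in your final step. You write that Berthier's conclusion is ``applied to the uniformly PL(2) test functions $\phi_m,\psi_n$'', but Berthier's theorem (restated in the paper as Theorem~\ref{thm:Berthier}) only admits test functions $\gamma_n:\mathbb{R}^{2n}\to\mathbb{R}$, not $\mathbb{R}^{3n}\to\mathbb{R}$. To invoke it for the second line of \eqref{eq:main} you must freeze the realized SI and consider $\gamma_n(\mathbf{a},\mathbf{b}):=\psi_n(\mathbf{a},\mathbf{b},\widetilde{\x})$. This $\gamma_n$ is \emph{not} obviously uniformly PL(2): applying the PL(2) bound for $\psi_n$ with the third slot held fixed produces a multiplicative factor $\|\widetilde{\x}\|/\sqrt{n}$ in the Lipschitz constant, which is random and not a priori bounded uniformly in $n$. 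Your proposal does not address this, and it is precisely the point where the separable-SI case is nontrivial.

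The paper resolves this by a conditioning argument: one works on the event $\mathcal{B}_n(C)=\{\|\widetilde{\x}\|/\sqrt{n}<C\}$, on which $\gamma_n$ \emph{is} uniformly PL(2), applies Berthier conditionally, and controls the complement via the SLLN for $\|\widetilde{\x}\|^2/n$ (using the finite second moment in \textbf{(A3)}). This in turn forces one to re-verify the Berthier hypotheses \textbf{(C3)} and \textbf{(C5)} in the conditional probability space, which the paper does by bounding the conditional probabilities by the unconditional ones divided by $P(\mathcal{B}_n(C))\to 1$. You should replace your anticipated obstacle with this one.
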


Section~\ref{main_proof} contains the proof of Theorem~\ref{thm:SE}. The proof follows from Berthier \emph{et al}.~\cite[Theorem 14]{Berthier2017} and the strong law of large numbers (SLLN).  The main details involve showing that assumptions $\textbf{(A1)} - \textbf{(A4)}$ allow us to apply~\cite[Theorem 14]{Berthier2017}.  We also note that by employing different aspects of AMP theory, for example \cite{javanmard2013state} or by generalizing \cite{Rush_Finite18}, we could get almost sure convergence in the asymptotics of Theorem~\ref{thm:SE} given in \eqref{eq:main}; however we instead use the Berthier \emph{et al}.~\cite[Theorem 14]{Berthier2017} result as it extends to the more general non-separable case studied in the next section.

We now show how Theorem~\ref{thm:SE} can be used to relate the SE to the output of the AMP-SI recursion~\eqref{eq:1-5_iid}-\eqref{eq:1-6_iid}, by considering a few interesting pseudo-Lipschitz loss functions. 
\begin{corollary}
\label{cor:SE}
Let assumptions $\textbf{(A1)} - \textbf{(A4)}$ be true.  Then letting $\psi^1: \mathbb{R}^{3} \rightarrow \mathbb{R}$ be the function $\psi^1(x, y, z) = (x-y)^2$, for $\lambda_t^2$ defined in \eqref{eq:SE2_iid}, it follows by Theorem~\ref{thm:SE},
\[\lim_{n \rightarrow \infty}  \frac{1}{n} ||\x^t + \A^T \ramp^t- \x||^2 \overset{p}{=}  \lambda_t^2.\]
Similarly, if $\psi^2: \mathbb{R}^{3} \rightarrow \mathbb{R}$ is defined as $\psi^2(x, y, z) = (\eta_t(x, z)-y)^2$, then by 
Theorem~\ref{thm:SE},
\[\lim_{n \rightarrow \infty}  \frac{1}{n} ||\x^{t+1}- \x||^2 \overset{p}{=}  \delta(\lambda_{t+1}^2 - \sigma_w^2).\]
\end{corollary}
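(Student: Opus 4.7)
The plan is to apply Theorem~\ref{thm:SE} directly, using $\psi^1$ for the first claim and $\psi^2$ for the second. The only nontrivial ingredients are: (i) verifying that both $\psi^1$ and $\psi^2$ satisfy the PL(2) hypothesis required by the theorem, and (ii) simplifying the right-hand side of the second line of~\eqref{eq:main} into the claimed form using assumption \textbf{(A3)} and the SE recursion~\eqref{eq:SE2_iid}.

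For step (i), $\psi^1(x,y,z)=(x-y)^2$ is PL(2) by the factorization $a^2-b^2=(a+b)(a-b)$ and the triangle inequality, with a universal constant. For $\psi^2(x,y,z)=(\eta_t(x,z)-y)^2$, assumption \textbf{(A4)} ensures the inner map $(x,y,z)\mapsto \eta_t(x,z)-y$ is Lipschitz, and squaring a Lipschitz function is a routine PL(2) construction whose constant depends only on the Lipschitz constant $L$ of $\eta_t$. The index-wise sums in~\eqref{eq:sum_funcs} then yield uniformly PL(2) sequences $\psi^1_n$ and $\psi^2_n$, as asserted in the statement of Theorem~\ref{thm:SE}.

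For step (ii), applying Theorem~\ref{thm:SE} with $\psi^1$, the left-hand side becomes $\lim_n \tfrac{1}{n}\|\x^t+\A^T\ramp^t-\x\|^2$, while the right-hand side collapses, using independence of $\Z_2$ from $\X$ and $\mathbb{E}[Z_{2,i}^2]=1$, to $\lim_n \tfrac{1}{n}\sum_{i=1}^n \lambda_t^2 = \lambda_t^2$. Applying Theorem~\ref{thm:SE} with $\psi^2$, the left-hand side equals $\lim_n \tfrac{1}{n}\|\x^{t+1}-\x\|^2$ by the denoiser update~\eqref{eq:1-6_iid}. The right-hand side, by the i.i.d.\ structure in \textbf{(A3)} and independence from $\Z_2$, reduces to the scalar expectation $\mathbb{E}[(\eta_t(X+\lambda_t Z,\widetilde{X})-X)^2]$, which is precisely $\delta(\lambda_{t+1}^2-\sigma_w^2)$ after rearranging~\eqref{eq:SE2_iid}.

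No step is genuinely hard once Theorem~\ref{thm:SE} is available; the main bookkeeping obstacle is carefully propagating the Lipschitz constant of $\eta_t$ through the squaring operation in $\psi^2$ so that the PL(2) constant of the resulting $\psi^2_n$ is finite and bounded uniformly in $n$, which is what licenses the invocation of Theorem~\ref{thm:SE} in the large-$n$ limit.
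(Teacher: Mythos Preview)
Your proposal is correct and follows essentially the same approach as the paper's proof: verify that $\psi^1$ and $\psi^2$ are PL(2) (the latter relying on assumption \textbf{(A4)} that $\eta_t$ is Lipschitz), then apply Theorem~\ref{thm:SE} and identify the limiting expectations via the SE recursion~\eqref{eq:SE2_iid}. The paper's own proof is only two sentences and omits the explicit simplification of the right-hand side that you carry out in step (ii), so your write-up is, if anything, more complete.
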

\begin{proof}
It is straightforward to show that $\psi^1$ and $\psi^2$ are both PL(2), and thus Theorem~\ref{thm:SE} can be applied.  Showing $\psi^2$ is PL(2) uses that $\eta_t$ is Lipschitz by assumption $\textbf{(A4)}$.
\end{proof}

\subsection{Non-separable denoiser} \label{sec:non-sep}

We now generalize Theorem~\ref{thm:SE} to the case of general (non-i.i.d.)\,\,dependencies between the signal and SI pair, 
which calls for the AMP algorithm using the non-separable AMP-SI denoiser of~\eqref{eq:eta_2}.

\begin{theorem}
\label{thm:non-separable}
For any sequences of order $k$ uniformly pseudo-Lipschitz functions, $\kappa_m: \mathbb{R}^{2m}\rightarrow \mathbb{R}$ and $\nu_n: \mathbb{R}^{3n} \rightarrow \mathbb{R}$, assume the following.
\begin{itemize}
\item[\textbf{(B1)}] The measurement matrix  $\A$ has i.i.d.\ Gaussian entries with mean $0$ and variance $1/m$.
\item[\textbf{(B2)}] For each $t$, the sequence (in $n$) of denoisers 
$g_{t}(\cdot,\cdot)$ defined in \eqref{eq:eta_2} are uniformly Lipschitz. 
\item[\textbf{(B3)}] 
The signal $\x$ and SI pair $\widetilde{\x}$ are sampled from a joint pdf $f(\X, \widetilde{\X})$ such that, elementwise, there are finite 
fourth moments
$\mathbb{E}[X_i ^4] < \infty$ and $\mathbb{E}[\widetilde{X}_i^4] < \infty$, and equal second moments, $\mathbb{E}[X_i^2] = \mathbb{E}[X_j^2]$ and $\mathbb{E}[\widetilde{X}_i^2] = \mathbb{E}[\widetilde{X}_j^2]$, for all $i, j \in \{1, 2, \ldots, n\}$. Moreover, $\mathrm{Cov}(X_i^2, X_j^2)\to 0$ and $\mathrm{Cov}(\widetilde{X}_i^2,\widetilde{X}_j^2)\to 0$, when $|i-j|\to \infty$.
\item[\textbf{(B4)}] The noise $\w$ 
is i.i.d.\ $\sim f(W)$ with finite second moment, $\mathbb{E}[W^2]$. 
\item[\textbf{(B5)}] For any iterations $s, t\in \mathbb{N} $ and for any $2 \times 2$ covariance matrix $\boldsymbol \Sigma$, the following limits exist,
\begin{align*}
&\lim_{n\to\infty}\frac{1}{n} \mathbb{E}_{\Z}\left[\x^T g_t(\x+\Z,\widetilde{\x})\right]< \infty,\\
&\lim_{n\to\infty}\frac{1}{n} \mathbb{E}_{\Z,\Z'}\left[g_t(\x+\Z,\widetilde{\x})^Tg_s(\x+ \Z',\widetilde{\x})\right] < \infty,
\end{align*}
where $(\Z, \Z')\sim N(\mathbf{0},\boldsymbol \Sigma \otimes \pmb{\mathbb I}_n)$, with $\otimes$ denoting the tensor product,\footnote{The tensor product of matrices $\A \in \mathbb{R}^{m \times n}$ and $\mathbf{B} \in \mathbb{R}^{p \times q}$, denoted $\A \otimes \mathbf{B} \in \mathbb{R}^{mp \times nq}$, equals $\begin{Bmatrix}
\A_{11}\mathbf B & \A_{12}\mathbf B & \dots\\
\A_{21}\mathbf B & \A_{22}\mathbf B & \dots\\ \vdots & \vdots & \ddots
\end{Bmatrix}$.}  and $g_t$, $g_s$ are denoisers defined in~\eqref{eq:eta_2} at iteration $t$ and $s$, respectively. 
\end{itemize}
Then,
\begin{equation}
\begin{aligned}\label{eq:main2}
& \lim_m \kappa_m\left({\ramp}^t, {\w}\right) \overset{p}{=} \lim_m \mathbb{E}_{{\Z_1}}\left[\kappa_m\left({\w} + \sqrt{\lambda_t^2 - \sigma_w^2} \, {\Z_1}, {\w}\right)\right], \\
&\lim_n \nu_n\left({\x}^{t} + {\A}^T {\ramp}^t, {\x},\widetilde{\x}\right) \overset{p}{=} \lim_n \mathbb{E}_{{\Z_2}}\left[\nu_n\left({\x} + \lambda_t {\Z_2}, {\x},\widetilde{\x}\right)\right],
\end{aligned}
\end{equation}
where $\Z_1, \Z_2$ are independent, standard Gaussian vectors. 
The signal estimate $\x^t$ and residual $\ramp^t$ are defined in the AMP-SI recursion~\eqref{eq:1-5}--\eqref{eq:1-6}, and $\lambda_t$ in the SE~\eqref{eq:SE2}.
\end{theorem}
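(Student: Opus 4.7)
The plan is to derive Theorem~\ref{thm:non-separable} as a specialization of the general non-separable AMP result of Berthier \emph{et al.}~\cite[Theorem 14]{Berthier2017}. The key conceptual move is to treat the side-information vector $\widetilde{\x}$ as a deterministic parameter of the $n$-th denoiser in the sequence, so that $\mathbf{a}\mapsto g_t(\mathbf{a},\widetilde{\x})$ and, similarly, $\mathbf{a}\mapsto \nu_n(\mathbf{a},\x,\widetilde{\x})$ become sequences of functions of a single vector argument that fit the Berthier \emph{et al.}\ template. By assumption (B2) these denoisers are uniformly Lipschitz, and $\kappa_m$, $\nu_n$ are uniformly PL($k$) by hypothesis, so the test-function class required by that theorem is covered.

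The next step is to verify the remaining structural requirements of \cite[Theorem 14]{Berthier2017}. Assumptions (B1) and (B4) directly provide the i.i.d.\ Gaussian measurement matrix and the i.i.d.\ finite-variance noise. Assumption (B3) is then used to obtain concentration of the empirical signal and SI norms: equal second moments give $\mathbb{E}[\tfrac{1}{n}\|\x\|^2] = \mathbb{E}[X_1^2]$, while the decaying covariances $\mathrm{Cov}(X_i^2, X_j^2)\to 0$ together with finite fourth moments and Chebyshev's inequality yield $\tfrac{1}{n}\|\x\|^2 \overset{p}{=} \mathbb{E}[X_1^2]$, and analogously for $\widetilde{\x}$. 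Assumption (B5) is then used inductively on $t$ to show that the SE recursion~\eqref{eq:SE2} produces a well-defined, finite sequence $\{\lambda_t\}$: expanding the squared norm in~\eqref{eq:SE2} as $\|\x\|^2 - 2\x^T\mathbb{E}_{\Z}\, g_{t-1}(\x+\lambda_{t-1}\Z,\widetilde{\x}) + \mathbb{E}_{\Z}\|g_{t-1}(\x+\lambda_{t-1}\Z,\widetilde{\x})\|^2$, the first term converges by the concentration just established, while the remaining two terms are exactly the limits named in (B5).

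With all hypotheses in place, Berthier \emph{et al.}'s result delivers PL($k$) convergence of the relevant empirical averages of the AMP iterates against a joint Gaussian reference whose covariance is indexed by iteration pairs $(s,t)$. Collapsing the diagonal of that covariance to our scalar SE parameter $\lambda_t^2$ yields the two displays in~\eqref{eq:main2}; the first, for $\kappa_m(\ramp^t,\w)$, uses additionally that the residual admits the decomposition $\ramp^t = \w + \text{(Gaussian-like term of variance }\lambda_t^2 - \sigma_w^2\text{)}$ in the large-system limit, which follows from the Onsager correction built into \eqref{eq:1-5}.

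The chief obstacle is the careful matching of the two frameworks: one must confirm that the form of signal/SI convergence provided by (B3) is precisely what~\cite{Berthier2017} requires of its inputs (their hypotheses are typically stated either almost surely or via empirical-measure convergence, whereas (B3) gives only convergence in probability), and that the bilinear and quadratic limits generated by the Onsager-corrected recursion are indeed those catalogued in (B5) for arbitrary covariance $\boldsymbol\Sigma$. Bridging the first gap will likely proceed either by passing to an almost-surely convergent subsequence, using that the conclusion is itself in probability, or by a direct Chebyshev argument for each iteration along the induction. Bridging the second gap amounts to tracking the full cross-iteration covariance structure $\mathbb{E}[\Z_s \Z_t^T]$ that appears in the Berthier \emph{et al.}\ state and checking that each of its entries arises as one of the limits assumed in (B5), which is the main bookkeeping burden of the proof.
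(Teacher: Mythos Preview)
Your overall strategy---reduce to \cite[Theorem~14]{Berthier2017} by freezing $\widetilde{\x}$, then verify their hypotheses from \textbf{(B1)}--\textbf{(B5)}---is exactly the paper's, and your Chebyshev argument for $\tfrac{1}{n}\|\x\|^2\overset{p}{\to}\mathbb{E}[X_1^2]$ from \textbf{(B3)} is precisely Lemma~\ref{lem:WLLN}. However, you misidentify the main technical obstacle.

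You claim that because $\nu_n$ is uniformly PL($k$), the frozen map $\gamma_n(\mathbf{a},\mathbf{b}):=\nu_n(\mathbf{a},\mathbf{b},\widetilde{\x})$ automatically sits in Berthier \emph{et al.}'s test-function class. It does not: applying the PL($k$) bound for $\nu_n$ to two inputs that agree in the third slot gives a bound of the form
\[
|\gamma_n(\mathbf{a},\mathbf{b})-\gamma_n(\mathbf{a}',\mathbf{b}')|
\;\le\; L\Big(1+\cdots+\Big(\tfrac{\|\widetilde{\x}\|}{\sqrt{n}}\Big)^{k-1}\Big)\,\tfrac{\|(\mathbf{a},\mathbf{b})-(\mathbf{a}',\mathbf{b}')\|}{\sqrt{2n}},
\]
so the effective PL constant of $\gamma_n$ contains the \emph{random} factor $(\|\widetilde{\x}\|/\sqrt{n})^{k-1}$ and is not uniform in $n$. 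The paper's proof handles this by conditioning on the high-probability event $\mathcal{B}'_n(C')=\{(\|\widetilde{\x}\|/\sqrt{n})^{k-1}<C'\}$, on which $\gamma_n$ \emph{is} uniformly PL($k$), applying Theorem~\ref{thm:Berthier} there, and then re-verifying \textbf{(C1)}--\textbf{(C5)} in that conditional probability space (a nontrivial step done via bounds of the type $P(\cdot\mid\mathcal{B}'_n)\le P(\cdot)/P(\mathcal{B}'_n)$). The issues you flag as the ``chief obstacle''---the convergence mode in \textbf{(B3)} and the \textbf{(B5)} bookkeeping---are comparatively routine (indeed \textbf{(B5)} is literally \textbf{(C5)} with $\widetilde\eta_n^t(\cdot)=g_t(\cdot,\widetilde{\x})$), and your proposed subsequence/Chebyshev workaround does not touch the real difficulty above. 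As a minor point, the first display in \eqref{eq:main2} needs no separate decomposition of $\ramp^t$; it follows immediately from Theorem~\ref{thm:Berthier} with $\rho_m=\kappa_m$.
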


Section~\ref{main_proof_non-separable} contains the proof of Theorem~\ref{thm:non-separable}. Theorem~\ref{thm:non-separable} can be used to relate the SE to the output of the AMP-SI recursion~\eqref{eq:1-5}--\eqref{eq:1-6}, which is easily seen by considering a few specific uniformly pseudo-Lipschitz loss functions.
\begin{corollary} 
\label{cor:SE-nonsep}
Under assumptions $\textbf{(B1)} - \textbf{(B5)}$, letting $\nu_n^1: \mathbb{R}^{3n} \rightarrow \mathbb{R}$ be defined as $\nu_n^1(\x, \y, \z) = \frac{1}{n}||\x-\y||^2$,
then by Theorem~\ref{thm:non-separable},
\[\lim_{n \rightarrow \infty}  \frac{1}{n} ||\x^t + \A^T \ramp^t- \x||^2 \overset{p}{=}  \lambda_t^2,\]
where $\lambda_t^2$ is defined in \eqref{eq:SE2}.  Under an additional assumption on the denoiser $g_t$ defined in \eqref{eq:eta_2}, namely for $\mathbf{0} \in \mathbb{R}^n$, the all-zeros vector, assuming $||g_t(\mathbf{0}, \mathbf{0})||/\sqrt{n} \leq C$ for some constant $C > 0$ that does not depend on $n$,  then similarly considering $\nu^2_n: \mathbb{R}^{3n} \rightarrow \mathbb{R}$ defined as $\nu_n^2(\x, \y, \z) = \frac{1}{n}||g_t(\x, \z)-\y||^2$, by Theorem~\ref{thm:non-separable}, \[\lim_{n \rightarrow \infty} \frac{1}{n} ||\x^{t+1}- \x||^2 \overset{p}{=}\delta(\lambda_{t+1}^2 - \sigma_w^2).\]
\end{corollary}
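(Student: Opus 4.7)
The plan is to apply Theorem \ref{thm:non-separable} to the two specific loss functions $\nu_n^1$ and $\nu_n^2$, and then evaluate the resulting expectation limits using the SE recursion \eqref{eq:SE2}. The only prerequisite that has to be checked by hand is that each of the sequences $\{\nu_n^1\}$ and $\{\nu_n^2\}$ is uniformly pseudo-Lipschitz; once this is done the theorem does all of the heavy lifting and the rest is direct algebraic identification.

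For $\nu_n^1(\x, \y, \z) = \tfrac{1}{n}\|\x-\y\|^2$, I would verify uniform PL(2) by direct computation. Writing the difference $|\nu_n^1(\mathbf{u}_1) - \nu_n^1(\mathbf{u}_2)|$ with $\mathbf{u}_i = (\x_i, \y_i, \z_i)\in\mathbb{R}^{3n}$, use the factorization $|a^2 - b^2| = (a+b)|a-b|$ together with the reverse triangle inequality in Euclidean norm to obtain a bound of the form $\tfrac{1}{n}(\|\mathbf{u}_1\| + \|\mathbf{u}_2\|)\|\mathbf{u}_1 - \mathbf{u}_2\|$, which rearranges into the PL(2) inequality with absolute constant independent of $n$. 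Applying Theorem \ref{thm:non-separable} and computing $\mathbb{E}_{\Z_2}[\tfrac{1}{n}\|\lambda_t\Z_2\|^2] = \lambda_t^2$ yields the first asserted limit.

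The harder verification, and in my view the main obstacle, is for $\nu_n^2(\x, \y, \z) = \tfrac{1}{n}\|g_t(\x, \z) - \y\|^2$. I would again expand by the $|a^2-b^2|$ identity, so that what remains to control is both the difference $\|g_t(\x_1, \z_1) - g_t(\x_2, \z_2)\|$ and the magnitudes $\|g_t(\x_i, \z_i)\|$, each normalized by $\sqrt n$. Assumption (B2) gives a uniform Lipschitz constant $L_g$ independent of $n$ for the difference, and Lemma \ref{prop_Lipschitz}, combined with the corollary's extra hypothesis $\|g_t(\mathbf{0}, \mathbf{0})\|/\sqrt n \leq C$, converts uniform Lipschitzness into a linear growth bound of the form $\|g_t(\x, \z)\|/\sqrt n \leq L'(1 + \|(\x,\z)\|/\sqrt{2n})$ with $L'$ independent of $n$. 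Feeding these two estimates into the expansion and bounding sums of norms by norms of the concatenation $\mathbf{u}_i$ shows that $\{\nu_n^2\}$ is uniformly PL(2).

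With Theorem \ref{thm:non-separable} now applicable, the remaining step is algebraic bookkeeping. Since $\x^{t+1} = g_t(\x^t + \A^T\ramp^t, \widetilde{\x})$, the LHS of \eqref{eq:main2} for $\nu_n^2$ reduces to $\tfrac{1}{n}\|\x^{t+1} - \x\|^2$, while the RHS is $\lim_n \tfrac{1}{n}\mathbb{E}_{\Z_2}\|g_t(\x + \lambda_t\Z_2, \widetilde{\x}) - \x\|^2$. Rewriting $\tfrac{1}{n} = \tfrac{m}{n}\cdot\tfrac{1}{m}$ with $m/n \to \delta$ and invoking the SE recursion \eqref{eq:SE2} at iteration $t+1$ identifies this limit with $\delta(\lambda_{t+1}^2 - \sigma_w^2)$, completing the proof.
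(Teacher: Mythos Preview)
Your proposal is correct and matches the paper's own argument essentially point for point: the paper also reduces the corollary to verifying that $\nu_n^1$ and $\nu_n^2$ are uniformly PL(2), handles $\nu_n^2$ via the difference-of-squares factorization together with assumption \textbf{(B2)} for the Lipschitz increment and Lemma~\ref{prop_Lipschitz} (using the extra hypothesis $\|g_t(\mathbf{0},\mathbf{0})\|/\sqrt{n}\le C$) for the magnitude bound, and then invokes Theorem~\ref{thm:non-separable} and identifies the limit with the SE recursion.
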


\begin{proof}
It is straightforward to show that $\nu_n^1$ and $\nu_n^2$ are both uniformly PL(2) functions, and thus Theorem~\ref{thm:non-separable} can be applied.  We show this result in detail for $\nu_n^2(\x, \y, \z) = \frac{1}{n}||g_t(\x,\z)-\y||^2$ in Appendix~\ref{app:technical}, and the result for $\nu_n^1$ follows similarly. 
\end{proof}

\begin{rem}
Note that the assumption $||g_t(\mathbf{0}, \mathbf{0})||/\sqrt{n} \leq C$ with $C$ not depending on $n$ in Corollary~\ref{cor:SE-nonsep} is not a terribly restrictive one.
For example, notice that if $[g_t(\mathbf{0}, \mathbf{0})]_i = [g_t(\mathbf{0}, \mathbf{0})]_j = C$ for all $i, j \in \{1, 2, \ldots, n\}$, with $C$ not growing with $n$, then the assumption would be true and this is a property we could reasonably expect from denoisers.  
Similarly, if $\max_{i \in \{1, 2, \ldots, n\}}[g_t(\mathbf{0}, \mathbf{0})]_i  \leq C$, then the assumption would hold.
\end{rem}

\section{Examples}
\label{examples}
In this section, we consider some example signal and SI models to show how to derive the conditional denoiser in \eqref{eq:eta_2}, use it
to construct the AMP-SI algorithm and SE, 
and then apply the theoretical guarantees of Theorem \ref{thm:SE} or Theorem \ref{thm:non-separable}.

\subsection{Separable Case (Theorem~\ref{thm:SE})} \label{subsec:examples iid}
Before we get to the examples, we state a lemma showing that functions with bounded derivatives are Lipschitz.
\begin{lemma}
\label{lem:Lipschitz}
A function $\phi: \mathbb{R}^2 \rightarrow \mathbb{R}$ having bounded derivatives, $0 < \mathsf{D}_1,  \mathsf{D}_2 < \infty,$
\[\Big \lvert \frac{\partial}{\partial x} \phi(x, y) \Big \lvert \leq \mathsf{D}_1 \qquad \text{and } \qquad \Big \lvert \frac{\partial}{\partial y} \phi(x, y)\Big \lvert \leq \mathsf{D}_2,\]
 is Lipschitz continuous (i.e.\ pseudo-Lipschitz of order 1) with Lipschitz constant $\sqrt{{2}(\mathsf{D}_1^2 + \mathsf{D}_2^2)}$.  
\end{lemma}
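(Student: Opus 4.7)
The plan is to bound the difference $\phi(x_1,y_1)-\phi(x_2,y_2)$ by splitting it along the two coordinate directions and applying the one-dimensional mean value theorem to each piece. Specifically, I would insert the intermediate point $(x_2, y_1)$ and use the triangle inequality to write
\begin{equation*}
|\phi(x_1,y_1) - \phi(x_2,y_2)| \;\leq\; |\phi(x_1,y_1) - \phi(x_2,y_1)| \;+\; |\phi(x_2,y_1) - \phi(x_2,y_2)|.
\end{equation*}
Since $\phi$ has bounded partial derivatives, applying the mean value theorem on each term (in the first coordinate for the first term, in the second coordinate for the second) yields the two bounds $\mathsf{D}_1|x_1 - x_2|$ and $\mathsf{D}_2|y_1 - y_2|$ respectively.

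Next I would convert this into the stated Euclidean-norm Lipschitz bound. The cleanest way is to first upper bound each of $\mathsf{D}_1$ and $\mathsf{D}_2$ by $\sqrt{\mathsf{D}_1^2 + \mathsf{D}_2^2}$, giving
\begin{equation*}
\mathsf{D}_1|x_1-x_2| + \mathsf{D}_2|y_1-y_2| \;\leq\; \sqrt{\mathsf{D}_1^2+\mathsf{D}_2^2}\,\bigl(|x_1-x_2| + |y_1-y_2|\bigr),
\end{equation*}
and then apply the elementary inequality $|a|+|b| \leq \sqrt{2}\,\sqrt{a^2+b^2}$ (a consequence of Cauchy--Schwarz applied to the vectors $(|a|,|b|)$ and $(1,1)$) to the remaining sum of absolute values. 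Combining these two steps produces exactly the constant $\sqrt{2(\mathsf{D}_1^2 + \mathsf{D}_2^2)}$ advertised in the statement and shows
\begin{equation*}
|\phi(x_1,y_1)-\phi(x_2,y_2)| \;\leq\; \sqrt{2(\mathsf{D}_1^2+\mathsf{D}_2^2)}\,\bigl\|(x_1,y_1)-(x_2,y_2)\bigr\|,
\end{equation*}
which is the desired Lipschitz (equivalently, PL(1)) property.

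There is no real obstacle: everything follows from the mean value theorem on the coordinate segments plus a two-term Cauchy--Schwarz step. The only subtle point worth remarking on is that the bound is not optimal, since the gradient-based argument $|\phi(x_1,y_1)-\phi(x_2,y_2)| \leq \sup \|\nabla\phi\|\cdot\|(x_1,y_1)-(x_2,y_2)\|$ would give the sharper constant $\sqrt{\mathsf{D}_1^2+\mathsf{D}_2^2}$; the extra factor of $\sqrt{2}$ in the stated lemma arises precisely because the proof proceeds axis-by-axis and then invokes $|a|+|b|\le \sqrt{2}\sqrt{a^2+b^2}$ rather than working with the gradient directly. For the purposes of the paper, the looser constant is irrelevant, so I would proceed with the axis-by-axis argument since it requires only the existence and boundedness of the partial derivatives, not a stronger differentiability assumption on $\phi$.
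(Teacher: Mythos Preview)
Your approach is essentially the same as the paper's: split along the coordinates via the triangle inequality, bound each piece by the partial-derivative bounds, then finish with Cauchy--Schwarz. One small difference worth pointing out: the paper applies Cauchy--Schwarz directly to the inner product $(\mathsf{D}_1,\mathsf{D}_2)\cdot(|x_1-x_2|,|y_1-y_2|)$, which immediately gives
\[
\mathsf{D}_1|x_1-x_2|+\mathsf{D}_2|y_1-y_2|\;\le\;\sqrt{\mathsf{D}_1^2+\mathsf{D}_2^2}\,\|(x_1,y_1)-(x_2,y_2)\|,
\]
and then rewrites this as $\sqrt{2(\mathsf{D}_1^2+\mathsf{D}_2^2)}\,\|(x_1,y_1)-(x_2,y_2)\|/\sqrt{2}$ to match the paper's PL(1) definition (Definition~\ref{def:PLfunc}), which normalizes by $\sqrt{n}=\sqrt{2}$. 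So the $\sqrt{2}$ in the stated constant comes from that normalization, not from any looseness in the axis-by-axis argument. Your route---first replacing each $\mathsf{D}_i$ by $\sqrt{\mathsf{D}_1^2+\mathsf{D}_2^2}$ and then using $|a|+|b|\le\sqrt{2}\sqrt{a^2+b^2}$---loses an extra $\sqrt{2}$ relative to this, so in the paper's PL(1) normalization you would end up with constant $2\sqrt{\mathsf{D}_1^2+\mathsf{D}_2^2}$ rather than $\sqrt{2(\mathsf{D}_1^2+\mathsf{D}_2^2)}$. The fix is trivial (just apply Cauchy--Schwarz once, to the right pair of vectors), and for all downstream uses the constant is irrelevant, but your closing remark misattributes the source of the $\sqrt{2}$.
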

\begin{proof}
By the triangle inequality, 
\begin{equation*}
\begin{split}
\lvert \phi(x_1, y_1) - \phi(x_2, y_2) \lvert 
&\leq \lvert \phi(x_1, y_1) - \phi(x_1, y_2) \lvert + \lvert \phi(x_1, y_2) - \phi(x_2, y_2) \lvert  \leq  \mathsf{D}_2 \lvert y_1- y_2 \lvert +  \mathsf{D}_1 \lvert x_1 - x_2 \lvert.
\end{split}
\end{equation*}
Then using the Cauchy-Schwarz property,
\begin{equation*}
\begin{split}
\lvert \phi(x_1, y_1) - \phi(x_2, y_2) \lvert 
&\leq  \sqrt{\mathsf{D}_2^2+  \mathsf{D}_1^2} \,\, \sqrt{ ( y_1- y_2)^2  + (x_1 - x_2)^2}=\sqrt{2(\mathsf{D}_2^2+  \mathsf{D}_1^2)}\,\, \frac{||(x_1, y_1) - (x_2, y_2)||}{\sqrt{2}}.
\end{split}
\end{equation*}
\end{proof}

\subsubsection{Gaussian-Gaussian Signal and SI}
In this model, referred to as the GG model henceforth, the signal has i.i.d.\ Gaussian entries with zero mean and finite variance $\sigma_x^2$, 
and we have access to SI in the form of the signal with additive white Gaussian noise (AWGN). 
In particular, the signal, $\x$, and SI, $\widetilde \x$, are related by
\begin{equation}\label{eq:SI form}
    \widetilde \x= \x+ \mathcal{N}(\mathbf{0}, \sigma^2 \pmb{\mathbb{I}}).
\end{equation}
As was shown in~\cite{Ma2018}, in this case, the AMP-SI denoiser~\eqref{eq:eta_2_iid} equals
\begin{equation}
\begin{aligned}
\eta_{t}(a,b)&=\mathbb{E}\left[X\middle| X+\lambda_{t}Z=a, \widetilde X=b\right] = \frac{\sigma_x^2 \sigma^2 a + \sigma_x^2 \lambda_{t}^2 b}{\sigma_x^2 (\sigma^2 + \lambda_{t}^2) + \sigma^2 \lambda_{t}^2},
\label{eq:GGdenoiser}
\end{aligned}
\end{equation} 
where we highlight that $\eta_t(a,b)$ is linear in $a$ and $b$, because
all related RVs are jointly Gaussian.
Then the SE~\eqref{eq:SE2_iid} can be computed as
\begin{equation}
\begin{split}
\lambda_t^2 = \sigma_w^2 + \frac{1}{\delta}\left[ \frac{ \sigma_x^2 \sigma^2 \lambda_{t-1}^2 }{\sigma_x^2 (\sigma^2 + \lambda_{t-1}^2) + \sigma^2 \lambda_{t-1}^2} \right] .
\end{split}
\end{equation}
We note that  as a result of Lemma \ref{lem:Lipschitz}, because
\[\Big \lvert \frac{\partial}{\partial a} \eta_{t}(a,b) \Big \lvert = \Big \lvert \frac{\sigma_x^2 \sigma^2}{\sigma_x^2 (\sigma^2 + \lambda_{t}^2) + \sigma^2 \lambda_{t}^2} \Big \lvert \leq 1 \quad
\text{ and } \quad \Big \lvert \frac{\partial}{\partial b} \eta_{t}(a,b) \Big \lvert = \Big \lvert \frac{ \sigma_x^2 \lambda_{t}^2}{\sigma_x^2 (\sigma^2 + \lambda_{t}^2) + \sigma^2 \lambda_{t}^2} \Big \lvert \leq 1,\]
the denoiser is pseudo-Lipschitz,
the assumptions $\textbf{(A1)} - \textbf{(A4)}$ are satisfied in the GG case, and we can apply Theorem~\ref{thm:SE}.

\subsubsection{Bernoulli-Gaussian Signal and SI}
The Bernoulli-Gaussian (BG) model reflects a
scenario in which one wishes to recover a sparse signal and has access to SI in the form of the signal with AWGN as in \eqref{eq:SI form}. 
In the BG model, 
each entry of the signal is independently generated according to $x_i\sim \epsilon\mathcal{N}(0,1)+(1-\epsilon)\delta_0$, where $\delta_0$ is the Dirac delta function at $0$. In words, the entries of the signal independently take the value $0$ with probability $1-\epsilon$ and are $\mathcal{N}(0,1)$ with probability $\epsilon$. 
In this case, originally studied in ~\cite{Ma2018}, the AMP-SI denoiser~\eqref{eq:eta_2_iid} equals
\begin{equation}
\begin{aligned}
\eta_{t}(a,b)=\mathbb{E}\left[X\middle| X+\lambda_{t}Z=a, \widetilde X=b\right]&=P \left (X\neq0|a, b \right)\mathbb{E}\left[X\middle|a, b, X\neq 0\right] \\&=P \left (X\neq0|a, b \right)\left( \frac{ \sigma^2 a + \lambda_{t}^2 b}{ \sigma^2 + \lambda_{t}^2 + \sigma^2 \lambda_{t}^2}\right),
\label{eq:BGdenoiser}
\end{aligned}
\end{equation} 
where
\begin{equation}
\begin{split}
P(X\neq0|a, b) &=\left( 1+ T_{a,b}\right)^{-1},
\label{eq:BGprobability}
\end{split}
\end{equation}
and letting $\rho_{\tau^2}(x)$ be the zero-mean Gaussian density with variance $\tau^2$ evaluated at $x$ and defining $\nu_{t} := \sigma^2 \lambda_{t}^2 (\sigma^2+\lambda_{t}^2 + \sigma^2 \lambda_{t}^2)$,
\begin{equation}
\begin{split}
T_{a,b}:=\frac{(1 - \epsilon) \rho_{\lambda_{t}^2}(a) \rho_{\sigma^2}(b) }{ \epsilon\rho_{1 + \sigma^2}(b)   \rho_{\frac{\sigma^2}{1 + \sigma^2} + \lambda_{t}^2}\Big(\frac{b}{1 + \sigma^2}  - a\Big)} &= \left(\frac{1 - \epsilon}{\epsilon}\right) \sqrt{\frac{\sigma^2+\lambda_{t}^2 + \sigma^2 \lambda_{t}^2}{ \lambda_{t}^2 \sigma^2}}\exp\left\{\frac{-( \sigma^2 a + \lambda_{t-1}^2 b)^2}{2 \sigma^2 \lambda_{t-1}^2 (\sigma^2+\lambda_{t}^2 + \sigma^2 \lambda_{t}^2)}\right\}\\
&= \left(\frac{1 - \epsilon}{\epsilon}\right) \left(\frac{\nu_{t}  \sqrt{2 \pi }}{ \lambda_{t}^2 \sigma^2}\right)  \rho_{\nu_{t} }\left( \sigma^2 a + \lambda_{t}^2 b\right).
\label{eq:Tab_def}
\end{split}
\end{equation}
Then the SE in~\eqref{eq:SE2_iid} can be computed as
\begin{equation}
\begin{split}
\lambda_t^2 =\sigma_w^2 + \frac{1}{\delta}\mathbb{E}\left[\left(\frac{ \sigma^2(X + \lambda_{t-1}Z)+\lambda_{t-1}^2\widetilde{X}}{(1 + T_{X + \lambda_{t-1}Z, \widetilde{X}})(\sigma^2 + \lambda_{t-1}^2 + \sigma^2 \lambda_{t-1}^2)} -X\right)^2\right].
\label{eq:BG_SE}
\end{split}
\end{equation}
As before, we use Lemma \ref{lem:Lipschitz} to show that the denoiser defined in \eqref{eq:BGdenoiser} and \eqref{eq:BGprobability} is Lipschitz continuous, and so assumptions $\textbf{(A1)} - \textbf{(A4)}$ are also satisfied in the BG case. Therefore,
we can apply Theorem~\ref{thm:SE}. The technical work showing the bounds
\[
\Big\lvert \frac{\partial}{\partial a} \eta_{t}(a,b) \Big\lvert \leq  1+  \frac{2(1 - \epsilon)}{\sigma_w \epsilon} \quad \text{ and } \quad \Big\lvert \frac{\partial}{\partial a} \eta_{t}(a,b) \Big\lvert \leq  1+  \frac{2(1 - \epsilon)}{\sigma \epsilon},
\]
as needed to apply Lemma \ref{lem:Lipschitz}, can be found in Appendix~\ref{app:examples}.

\subsubsection{Numerical Results} 
We conclude Section~\ref{subsec:examples iid} with numerical results
for scenarios where the signal, $\x$, and SI, $\widetilde{\x}$, are both i.i.d.,
and separable denoisers are appropriate.
We will see that the empirical mean square error (MSE) performance of AMP-SI 
is well-tracked by SE predictions.

\begin{figure}[t]
\centering
\subcaptionbox{n=100}{\includegraphics[width=0.33\textwidth]{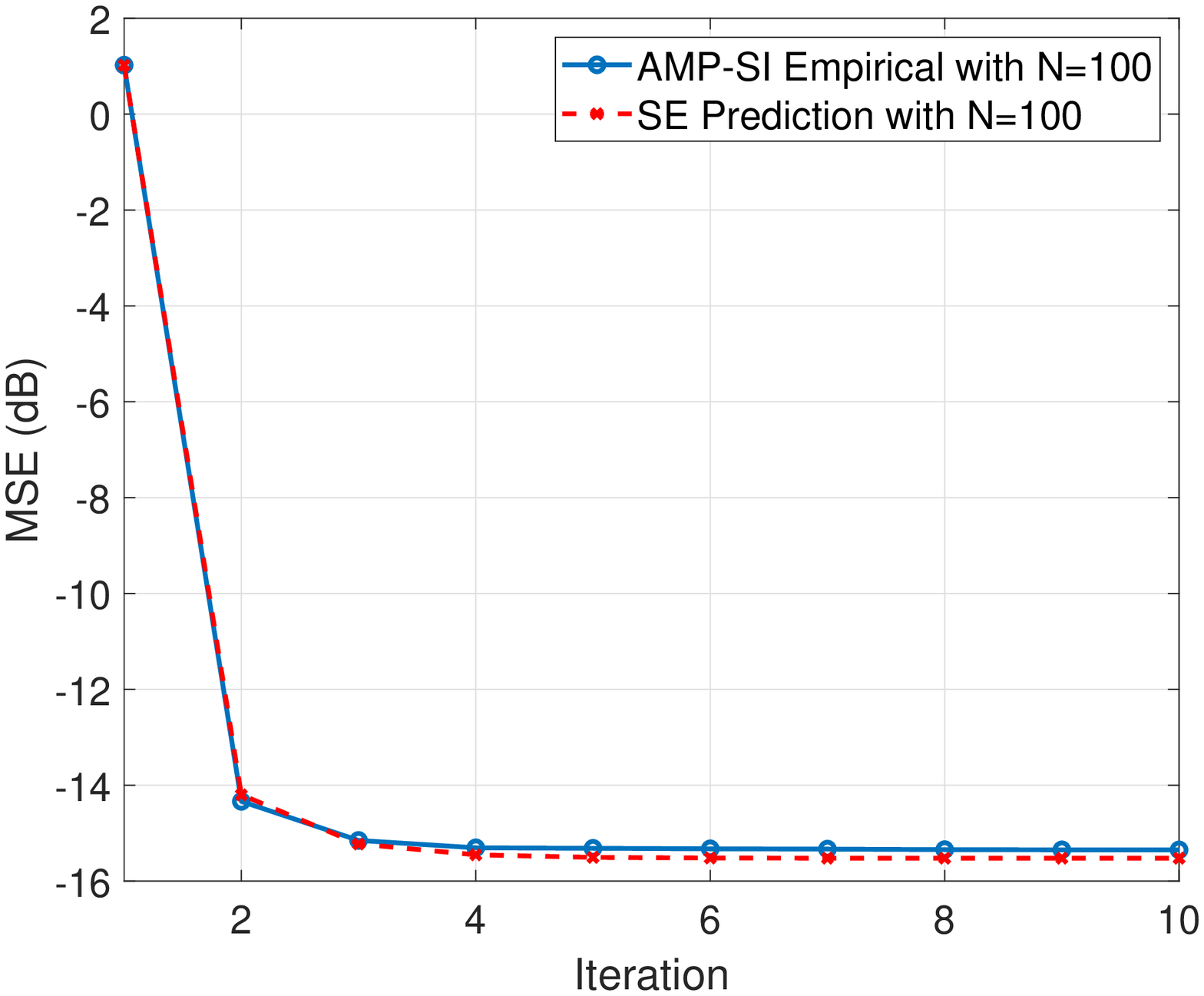}\label{GG_n=100}}%
\hfill
\subcaptionbox{n=1000}{\includegraphics[width=0.33\textwidth]{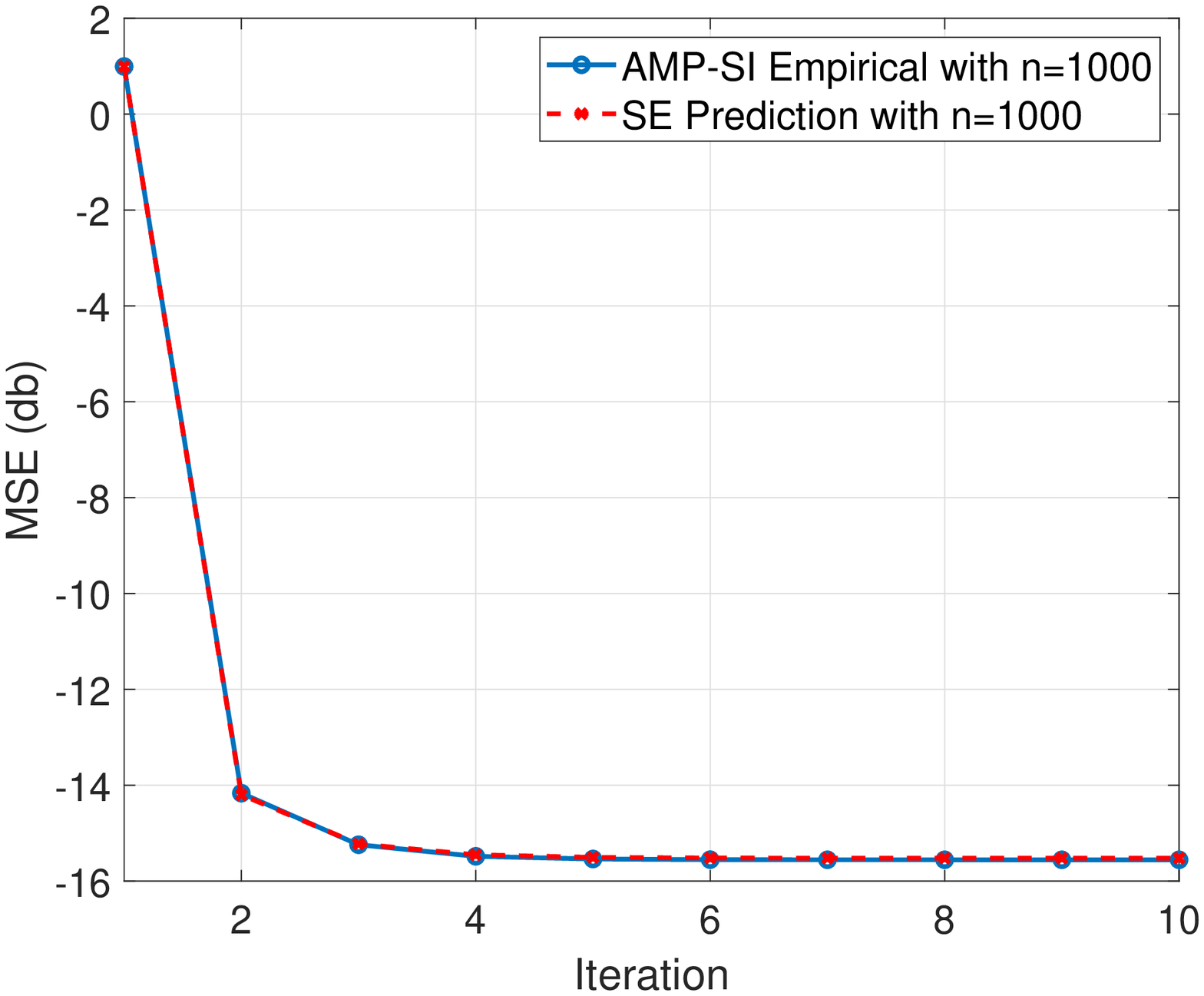}\label{GG_n=1000}}%
\hfill 
\subcaptionbox{n=10000}{\includegraphics[width=0.33\textwidth]{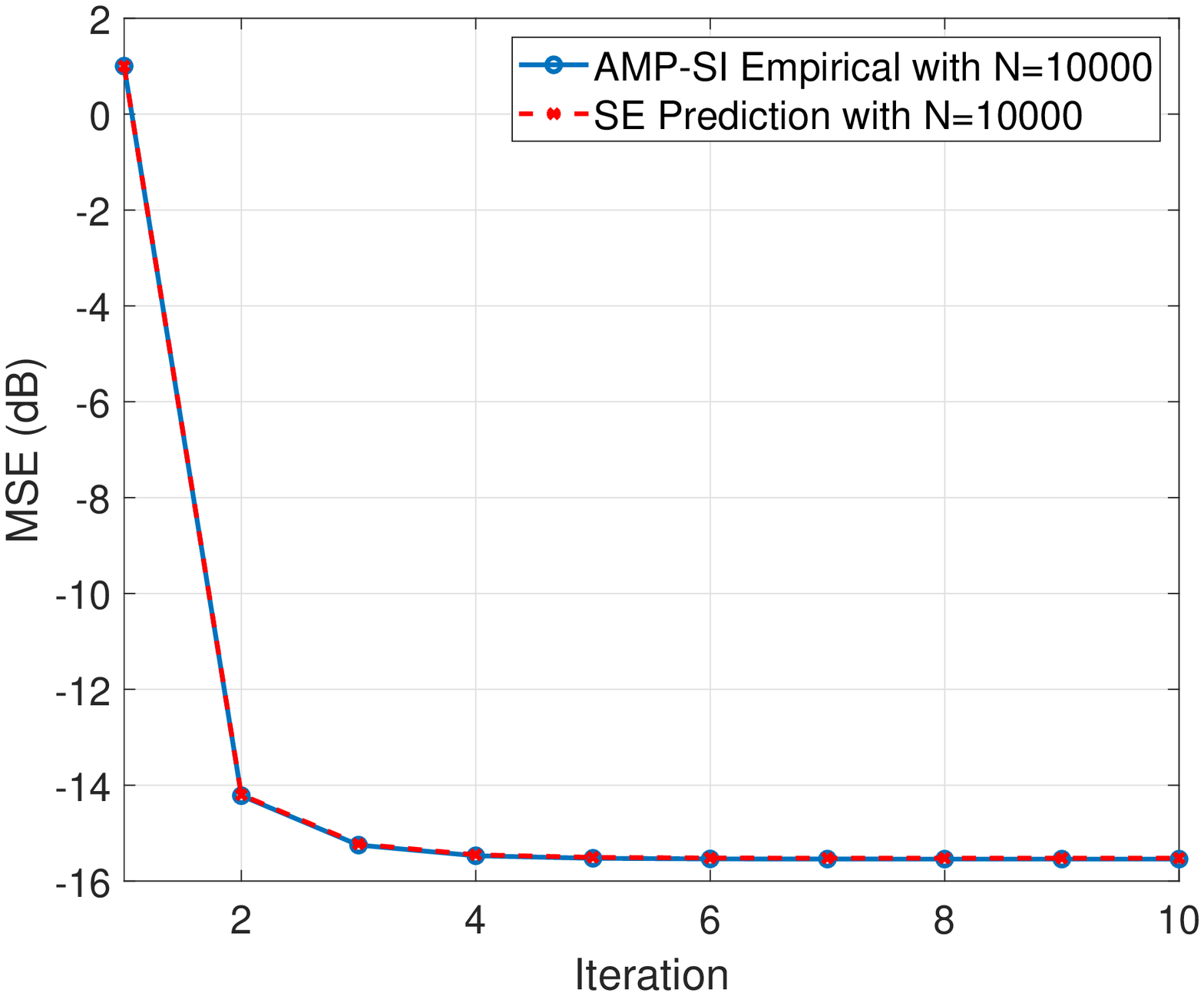}\label{GG_n=10000}}%
\caption{Empirical MSE performance of AMP-SI and SE prediction. (GG model, $\delta=0.3$, $\sigma_{x}=1$, $\sigma_{w}=0.1$, and $\sigma=0.2$.)}
\label{fig:GG}
\end{figure}

Fig.\ \ref{fig:GG} presents results for the GG scenario.
In this example, 
the signal variance $\sigma_x^2=1$, 
the measurement noise variance $\sigma_{w}^2=0.01$, 
and the variance of AWGN in the SI $\sigma^2=0.04$. 
Our empirical MSE results are averaged over 10 trials of GG signal recovery. 
The three panels of the figure contrast different signal lengths. 
For smaller $n$, there is a gap between the empirical MSE and the SE prediction, but
the gap shrinks as $n$ is increased. Overall, the empirical MSE tracks the SE prediction nicely,
especially for larger $n$.

Fig.\ \ref{fig:BG} presents results for the BG scenario.
We again averaged over 10 trials for the empirical results. The signal length $n=10000$, $m=3000$, 
the measurement noise variance $\sigma_{w}^2=0.01$, and $\epsilon=0.2$, where $20\%$ of the 
entries of the signal are nonzero. We consider several variances of AWGN in the SI,
$\sigma^2=0.04$, $\sigma^2=0.25$, and $\sigma^2=1$. 
Again, SE can predict the MSE achieved by AMP-SI at every iteration.

\begin{figure}[t]
 \centering
 \includegraphics[width=0.6\linewidth]{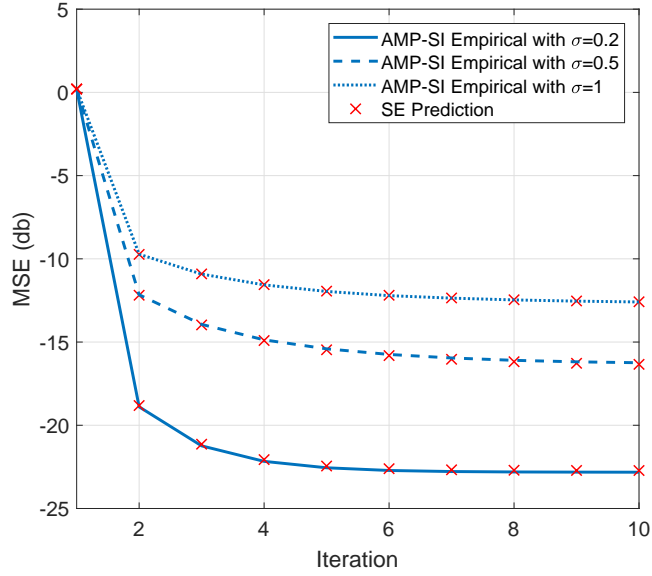}
 \caption{Empirical MSE performance of AMP-SI and SE prediction. (BG model, $n=10000$, $m=3000$, $\epsilon=0.2$, $\sigma_{w}=0.1$.)}
  \label{fig:BG}
\end{figure}

\subsection{Non-Separable Case (Theorem~\ref{thm:non-separable})}
\label{sec:ex:non-sep}

First we provide a 
non-separable extension of
Lemma~\ref{lem:Lipschitz}. 

\begin{lemma}
\label{lem:Lipschitz_non-separable}
A function $f: \mathbb{R}^{2K} \rightarrow \mathbb{R}$ for finite $K$, having bounded partial derivatives, $0 < \mathsf{D}_k,  \mathsf{D}'_k < \infty,$ where $k=1,2,...,K$, 
\[\Big \lvert \frac{\partial}{\partial x_k} f(\x, \y) \Big \lvert \leq \mathsf{D}_k \qquad \text{and } \qquad \Big \lvert \frac{\partial}{\partial y_k} f(\x, \y)\Big \lvert \leq \mathsf{D}'_k,\]
 is Lipschitz continuous (i.e.\ pseudo-Lipschitz of order 1) with Lipschitz constant $\sqrt{2K\sum_{k=1}^{K}(\mathsf{D}_k^2 + \mathsf{D}_k^{'2})}$.
\end{lemma}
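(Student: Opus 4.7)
The plan is to generalize the two-variable argument in the proof of Lemma~\ref{lem:Lipschitz} to $2K$ arguments by combining a telescoping triangle-inequality decomposition with the Cauchy-Schwarz inequality. First I would fix two arbitrary points $(\x_1,\y_1), (\x_2,\y_2) \in \mathbb{R}^{2K}$ and write the difference $f(\x_1,\y_1) - f(\x_2,\y_2)$ as a telescoping sum of $2K$ single-coordinate increments: starting from $(\x_1,\y_1)$, change one coordinate at a time to the corresponding coordinate of $(\x_2,\y_2)$, and apply the triangle inequality across the $2K$ resulting differences.

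Next, each single-coordinate increment reduces to a one-dimensional problem, so by the mean value theorem (equivalently the fundamental theorem of calculus along the connecting segment) together with the bounded partial derivative hypothesis, the $k$-th increment in the $\x$-block is bounded by $\mathsf{D}_k |x_{1,k} - x_{2,k}|$ and the $k$-th increment in the $\y$-block by $\mathsf{D}'_k |y_{1,k} - y_{2,k}|$. Summing yields
\[
|f(\x_1,\y_1) - f(\x_2,\y_2)| \leq \sum_{k=1}^K \mathsf{D}_k |x_{1,k} - x_{2,k}| + \sum_{k=1}^K \mathsf{D}'_k |y_{1,k} - y_{2,k}|.
\]

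Then I would apply Cauchy-Schwarz with weights $(\mathsf{D}_1,\ldots,\mathsf{D}_K,\mathsf{D}'_1,\ldots,\mathsf{D}'_K)$ paired against the coordinate differences to obtain the bound $\sqrt{\sum_{k=1}^K (\mathsf{D}_k^2+\mathsf{D}_k'^2)} \cdot ||(\x_1,\y_1)-(\x_2,\y_2)||$. Finally, to match the normalization of Definition~\ref{def:PLfunc}, which divides the input norm by $\sqrt{n}=\sqrt{2K}$, I would rewrite this as $\sqrt{2K \sum_{k=1}^K (\mathsf{D}_k^2+\mathsf{D}_k'^2)} \cdot ||(\x_1,\y_1)-(\x_2,\y_2)||/\sqrt{2K}$, identifying the PL(1) constant as $\sqrt{2K \sum_{k=1}^K (\mathsf{D}_k^2+\mathsf{D}_k'^2)}$ as claimed.

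There is essentially no obstacle: the argument is a direct $K$-fold generalization of the $K=1$ case already handled in Lemma~\ref{lem:Lipschitz}. The only item requiring care is the bookkeeping of the $\sqrt{2K}$ factor arising from the $\sqrt{n}$ normalization in the pseudo-Lipschitz definition, which accounts for the precise form of the stated Lipschitz constant.
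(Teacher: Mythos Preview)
Your proposal is correct and follows essentially the same approach as the paper: a telescoping/triangle-inequality decomposition into single-coordinate increments bounded via the mean value theorem, followed by Cauchy--Schwarz and the $\sqrt{2K}$ normalization from Definition~\ref{def:PLfunc}. The only cosmetic difference is that the paper first splits into the $\x$-block and $\y$-block before passing to coordinates, whereas you telescope coordinate-by-coordinate from the start; both routes yield the identical bound $\sum_k \mathsf{D}_k |x_{1,k}-x_{2,k}| + \sum_k \mathsf{D}'_k |y_{1,k}-y_{2,k}|$ and hence the same Lipschitz constant.
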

\begin{proof}
Using the Triangle Inequality,
\begin{equation*}
\begin{split}
&\lvert f(\x_1, \y_1) - f(\x_2, \y_2)\lvert = \lvert f(\x_1, \y_1) - f(\x_1, \y_2) + f(\x_1, \y_2) - f(\x_2, \y_2) \lvert \\
&\leq \lvert f(\x_1, \y_1) - f(\x_1, \y_2) \lvert + \lvert f(\x_1, \y_2) - f(\x_2, \y_2) \lvert 
\leq \sum_{k=1}^{K} \left( \mathsf{D}'_k \Big\lvert [\y_1]_{k}- [\y_2]_{k} \Big\lvert +  \mathsf{D}_k \Big\lvert [\x_1]_{k} - [\x_2]_{k} \Big\lvert \right).
\end{split}
\end{equation*}
Then, by Cauchy-Schwarz,
\begin{equation*}
\begin{split}
\lvert f(\x_1, \y_1) - f(\x_2, \y_2)\lvert^2 
&\leq \sum_{k'=1}^{K}(\mathsf{D}_{k'}^2 + \mathsf{D}_{k'}^{'2}) \,\, \sum_{k=1}^{K}\Big(( [\y_1]_{k}- [\y_2]_{k})^2  + ([\x_1]_{k}- [\x_2]_{k})^2\Big) \\
&= 2K\sum_{k=1}^{K}(\mathsf{D}_k^2 + \mathsf{D}_k^{'2})\left[ \frac{||{(\x_1, \y_1) - (\x_2, \y_2)||}^2}{2K}\right].
\end{split}
\end{equation*}
\end{proof}

\subsubsection{Block-sparse signal model with AWGN SI}

In this model, we assume that the signal vector $\x$ can be represented by $L$ i.i.d.\ sections or blocks with $K$ entries
per block, where $K$ is finite and fixed, i.e., $n=L\times K$. Among the $K$ elements in a group, 
one is non-zero, taking the value one, and the others are all zeros. We denote a single block $\ell$ of the signal $\x$ as $\x_{(\ell)}$, i.e.\ $\x_{(\ell)} \in \mathbb{R}^K$ contains the $K$ elements in block $\ell$.

Such block-sparse signals are used in sparse superposition codes (SPARCs), which, among other applications, have been studied as codes for both the AWGN channel and the unsourced random access channel. Joseph and Barron~\cite{Joseph2012} introduced SPARCs as an encoding scheme for the AWGN channel and showed that the maximum likelihood decoder is achievable at rates approaching the channel capacity, and then a series of works~\cite{BarbierKrzakala2014ISIT, Rush2017}, 
introduced and rigorously analyzed a computationally-efficient AMP decoder, showing that it can achieve the AWGN capacity asymptotically for SPARCs.  More recently, SPARCs have been studied in the context of the unsourced random access channel~\cite{Giuseppe2018, Rush2020}, where they have also been shown to achieve the channel capacity~\cite{Giuseppe2020}.

We assume that the SI blocks, $\widetilde \x_{(\ell)}$ for $\ell = 1, 2, \ldots, L$, are of the form:
\begin{equation}\label{eq:SI form sparse}
    \widetilde \x_{(\ell)}=\x_{(\ell)}+ \mathcal{N}(\mathbf{0}, \sigma^2 \pmb{\mathbb{I}}_K).
\end{equation}
In this case, the AMP-SI denoiser is separable only across the blocks. 
Within each block $\ell$,
we define a conditional distribution denoiser, $g^{\ell}_t: \mathbb{R}^{2K}\to\mathbb{R}^K$, 
as follows,
\begin{equation}
\label{eq:block_denoiser}
g^{\ell}_t(\mathbf{a}, \mathbf{b}) =\mathbb{E}\left[\X_{(\ell)}\Big|\X_{(\ell)}+\lambda_t\Z_1=\mathbf{a},\widetilde{\X}_{(\ell)}=\mathbf{b}\right] =\mathbb{E}\left[\X_{(\ell)}\Big|\X_{(\ell)}+\lambda_t\Z_1=\mathbf{a},\X_{(\ell)}+\sigma\Z_2=\mathbf{b}\right],
\end{equation}
where $\Z_1\sim \mathcal{N}(\mathbf{0},\pmb{\mathbb{I}}_K)$ is independent of $\Z_2\sim \mathcal{N}(\mathbf{0},\pmb{\mathbb{I}}_K)$.  Then the overall signal denoiser is given by $g_t: \mathbb{R}^{2n}\to\mathbb{R}^n$, defined as
\begin{equation}
\label{eq:overall_denoiser}
g_t(\mathbf{x}, \mathbf{y}) = (g^{1}_t(\mathbf{x}_{(1)}, \mathbf{y}_{(1)}), g^{2}_t(\mathbf{x}_{(2)}, \mathbf{y}_{(2)}), \ldots, g^{L}_t(\mathbf{x}_{(L)}, \mathbf{y}_{(L)})).
\end{equation}
We provide a closed form for the blockwise conditional distribution denoiser given in \eqref{eq:block_denoiser} in the following lemma.

\begin{lemma}
The blockwise denoiser defined in \eqref{eq:block_denoiser} has a closed form as a ratio of exponentials.  For index $i = 1, 2, \ldots, K$,
\begin{equation}
[g^{\ell}_t(\mathbf{a}, \mathbf{b})]_i =\frac{\exp\left\{(a_{i}/\lambda_t^2)+(b_i/\sigma^2)\right\}}{\sum_{k=1}^{K}\exp\left\{({a_k}/{\lambda_t^2})+({b_k}/{\sigma^2})\right\}}.
\label{eq:Lipschitz_den}
\end{equation}
\label{lem:denoiser_simplify}
\end{lemma}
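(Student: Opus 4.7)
The plan is to compute the conditional expectation directly, exploiting the fact that the block $\X_{(\ell)}$ is uniform on the $K$ standard basis vectors $\{e_1,\dots,e_K\}\subset\mathbb{R}^K$ (since exactly one entry equals $1$ and the rest are $0$). Because $[e_k]_i=\delta_{ik}$, the $i$-th coordinate of the conditional mean reduces to a posterior probability:
\[
[g^{\ell}_t(\mathbf{a},\mathbf{b})]_i \;=\; \sum_{k=1}^K [e_k]_i\, P(\X_{(\ell)}=e_k \mid \X_{(\ell)}+\lambda_t\Z_1=\mathbf{a},\,\X_{(\ell)}+\sigma\Z_2=\mathbf{b})\;=\;P(\X_{(\ell)}=e_i\mid \mathbf{a},\mathbf{b}).
\]
So the task collapses to evaluating this posterior by Bayes' rule.

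Next I would invoke conditional independence of the two observations $\mathbf{a}$ and $\mathbf{b}$ given $\X_{(\ell)}$, which holds because $\Z_1$ and $\Z_2$ are independent. Conditioned on $\X_{(\ell)}=e_k$, $\mathbf{a}\sim \mathcal{N}(e_k,\lambda_t^2\pmb{\mathbb{I}}_K)$ and $\mathbf{b}\sim \mathcal{N}(e_k,\sigma^2\pmb{\mathbb{I}}_K)$ independently, so the joint likelihood factors as a product of two Gaussian densities. Applying Bayes' rule with the uniform prior $P(\X_{(\ell)}=e_k)=1/K$ gives
\[
P(\X_{(\ell)}=e_i\mid\mathbf{a},\mathbf{b}) \;=\; \frac{\exp\!\left\{-\tfrac{\|\mathbf{a}-e_i\|^2}{2\lambda_t^2}-\tfrac{\|\mathbf{b}-e_i\|^2}{2\sigma^2}\right\}}{\sum_{k=1}^K \exp\!\left\{-\tfrac{\|\mathbf{a}-e_k\|^2}{2\lambda_t^2}-\tfrac{\|\mathbf{b}-e_k\|^2}{2\sigma^2}\right\}},
\]
where the uniform prior, the Gaussian normalizing constants, and the factor $1/K$ all cancel between numerator and denominator.

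Finally, I would expand the squared norms using $\|e_k\|^2=1$:
\[
\|\mathbf{a}-e_k\|^2 = \|\mathbf{a}\|^2 - 2a_k + 1, \qquad \|\mathbf{b}-e_k\|^2 = \|\mathbf{b}\|^2 - 2b_k + 1.
\]
Substituting, the $k$-independent terms $\|\mathbf{a}\|^2/(2\lambda_t^2)$, $\|\mathbf{b}\|^2/(2\sigma^2)$, $1/(2\lambda_t^2)$, $1/(2\sigma^2)$ all factor out of both sums and cancel between numerator and denominator, leaving only $\exp\{a_k/\lambda_t^2+b_k/\sigma^2\}$ for each $k$. This yields the claimed softmax-type expression. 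There is no real technical obstacle; the only point requiring a bit of care is justifying the cancellation cleanly and, implicitly, noting that the signal prior has no density on $\mathbb{R}^K$ (it is discrete), so the Bayes computation is a discrete mixture rather than a continuous one.
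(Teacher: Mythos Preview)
Your proposal is correct and follows essentially the same approach as the paper: both reduce the conditional expectation to the posterior probability $P(\X_{(\ell)}=e_i\mid\mathbf{a},\mathbf{b})$, apply Bayes' rule with the uniform prior and the product Gaussian likelihood, and cancel the $k$-independent factors to obtain the softmax form. Your use of the basis-vector notation and the norm expansion $\|\mathbf{a}-e_k\|^2=\|\mathbf{a}\|^2-2a_k+1$ makes the cancellation slightly more transparent than the paper's explicit coordinate-wise computation, but the argument is the same.
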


The proof of Lemma \ref{lem:denoiser_simplify} can be found in Appendix~\ref{app:examples} and involves computing the expectation given in \eqref{eq:block_denoiser}.

We now want to prove that we can apply Theorem~\ref{thm:non-separable}, so we will show that assumptions \textbf{(B1)}-\textbf{(B5)} hold in this case.  We first show \textbf{(B2)}, namely that the denoisers in \eqref{eq:overall_denoiser} are uniformly Lipschitz, with the following lemma.

\begin{lemma}
The denoising function $g_t$ defined in \eqref{eq:overall_denoiser} is Lipschitz (i.e.\ pseudo-Lipschitz of order 1) with Lipschitz constant $\sqrt{2} K\Big(\frac{1}{\sigma_w^2} + \frac{1}{\sigma^2}\Big)$.  Therefore $g_t$ is uniformly Lipschitz, since the Lipschitz constant does not depend on $n$.
\label{lem:B2_proof}
\end{lemma}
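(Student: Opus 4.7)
My plan is to verify the hypotheses of Lemma~\ref{lem:Lipschitz_non-separable} for each blockwise denoiser $g_t^{\ell}$ in its softmax form from Lemma~\ref{lem:denoiser_simplify}, and then to lift the per-block bound to the full denoiser $g_t$ using the blockwise separability of~\eqref{eq:overall_denoiser}. Since the resulting constant will depend only on $K$, $\sigma_w$, and $\sigma$, uniform Lipschitz continuity will follow immediately.

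First, I would write $p_i(\mathbf{a},\mathbf{b}) := [g_t^{\ell}(\mathbf{a},\mathbf{b})]_i$ and compute the standard softmax partial derivatives $\partial p_i/\partial a_j = (1/\lambda_t^2)(p_i\delta_{ij}-p_ip_j)$ and $\partial p_i/\partial b_j = (1/\sigma^2)(p_i\delta_{ij}-p_ip_j)$. Since $p_1,\ldots,p_K\in[0,1]$ sum to $1$, both $p_i(1-p_i)$ and $p_ip_j$ lie in $[0,1]$, giving $|\partial p_i/\partial a_j|\leq 1/\lambda_t^2$ and $|\partial p_i/\partial b_j|\leq 1/\sigma^2$. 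The SE recursion~\eqref{eq:SE2} yields $\lambda_t^2\geq\sigma_w^2$ for every $t$ (the additional summand being a non-negative expectation), so $1/\lambda_t^2\leq 1/\sigma_w^2$ uniformly in $t$ and $n$. Applying Lemma~\ref{lem:Lipschitz_non-separable} to each scalar component $p_i$ with $\mathsf{D}_k=1/\sigma_w^2$ and $\mathsf{D}'_k=1/\sigma^2$ for $k=1,\ldots,K$, and then summing the $K$ squared component-wise bounds, would yield
\begin{equation*}
\|g_t^{\ell}(\mathbf{a}_1,\mathbf{b}_1)-g_t^{\ell}(\mathbf{a}_2,\mathbf{b}_2)\|^{2} \;\leq\; K^{2}\bigl(\sigma_w^{-4}+\sigma^{-4}\bigr)\,\bigl\|(\mathbf{a}_1,\mathbf{b}_1)-(\mathbf{a}_2,\mathbf{b}_2)\bigr\|^{2}.
\end{equation*}

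Because $g_t$ in~\eqref{eq:overall_denoiser} concatenates the $L$ maps $g_t^{\ell}$ acting on disjoint blocks, the squared increment of $g_t$ decomposes as a sum over $\ell=1,\ldots,L$ and the identical per-block bound carries over to $g_t$. Rewriting in the PL(1) normalization of Definition~\ref{def:PLfunc}---with output dimension $n$ and input dimension $2n$---introduces a factor $\sqrt{2}$ from the ratio $\sqrt{2n}/\sqrt{n}$, and the elementary inequality $\sqrt{\sigma_w^{-4}+\sigma^{-4}}\leq \sigma_w^{-2}+\sigma^{-2}$ then produces the stated constant $\sqrt{2}\,K\bigl(\sigma_w^{-2}+\sigma^{-2}\bigr)$. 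I anticipate that the only delicate step will be this normalization bookkeeping, since the softmax derivative identities are routine and the blockwise lift requires no more than an orthogonal decomposition of the input increment across blocks.
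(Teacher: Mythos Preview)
Your proposal is correct and follows essentially the same approach as the paper's proof: both compute the softmax partial derivatives, bound them by $1/\lambda_t^2\leq 1/\sigma_w^2$ and $1/\sigma^2$, apply Lemma~\ref{lem:Lipschitz_non-separable} to each scalar component $[g_t^{\ell}]_j$, sum over the $K$ components and then over the $L$ disjoint blocks, and finish with the normalization and the inequality $\sqrt{\sigma_w^{-4}+\sigma^{-4}}\leq \sigma_w^{-2}+\sigma^{-2}$. Your explicit invocation of the SE recursion to justify $\lambda_t^2\geq\sigma_w^2$ is a nice touch that the paper leaves implicit.
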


The proof of Lemma \ref{lem:B2_proof} can be found in Appendix~\ref{app:examples}.

Next, we show that $\textbf{(B5)}$ is satisfied.  Assumption $\textbf{(B5)}$ requires that the following limits exist and are finite:
\begin{equation}
\lim_{n\to\infty}\frac{1}{n} \mathbb{E}_{\Z}[\x^T g_t(\x+\Z,\widetilde{\x})] \qquad \text{ and } \qquad \lim_{n \to\infty}\frac{1}{n} \mathbb{E}_{\Z,\Z'}\left[g_t(\x+ \Z,\widetilde{\x})^T,g_s(\x+ \Z',\widetilde{\x})\right],
\label{eq:blocksparse_B5}
\end{equation}
where $(\Z,\Z') \sim \mathcal{N}(\mathbf{0}, \boldsymbol \Sigma \otimes \pmb{\mathbb{I}}_{n})$.

\begin{lemma}
The limits in \eqref{eq:blocksparse_B5} exist and are finite.
\label{lem:B5_proof}
\end{lemma}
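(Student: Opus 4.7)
The plan is to exploit two structural features of the block-sparse SPARC model: (i) the softmax form of the blockwise denoiser derived in Lemma~\ref{lem:denoiser_simplify}, which implies the uniform bound $\|g^\ell_t(\mathbf{a},\mathbf{b})\|_2\le 1$ for any inputs (since the output is a probability vector over $\{1,\ldots,K\}$); and (ii) the fact that the block pairs $(\x_{(\ell)},\widetilde{\x}_{(\ell)})$ are i.i.d.\ across $\ell$, so that once the $\Z$-expectation is taken the normalized expression becomes an average of $L$ i.i.d.\ random variables to which the SLLN can be applied.

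For the first limit, I would first observe that because $g_t$ acts block-by-block per \eqref{eq:overall_denoiser} and $\Z_{(\ell)}$ affects only the $\ell$th output block, one can decompose
\begin{equation*}
\frac{1}{n}\mathbb{E}_{\Z}\!\left[\x^T g_t(\x+\Z,\widetilde{\x})\right]
= \frac{1}{KL}\sum_{\ell=1}^L h_t\!\left(\x_{(\ell)},\widetilde{\x}_{(\ell)}\right),
\qquad h_t(\mathbf{u},\mathbf{v}) := \mathbb{E}_{\Z_{(1)}}\!\!\left[\mathbf{u}^T g^1_t\!\left(\mathbf{u}+\Z_{(1)},\mathbf{v}\right)\right].
\end{equation*}
Since $(\x_{(\ell)},\widetilde{\x}_{(\ell)})$ are i.i.d.\ across $\ell$, the summands are i.i.d.\ random variables. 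By Cauchy--Schwarz combined with $\|g^1_t\|_2\le 1$ and $\|\x_{(1)}\|_2 = 1$ (each block is one-hot), each summand is bounded by $1$ in absolute value, so the common mean is finite. The SLLN then yields an almost sure finite limit equal to $K^{-1}\mathbb{E}[h_t(\x_{(1)},\widetilde{\x}_{(1)})]$.

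For the second limit, my approach is identical in spirit. The tensor-product covariance $\boldsymbol\Sigma\otimes\pmb{\mathbb{I}}_n$ makes the noise pairs $(\Z_{(\ell)},\Z_{(\ell)}')$ independent across $\ell$ and jointly $\mathcal{N}(\mathbf{0},\boldsymbol\Sigma\otimes \pmb{\mathbb{I}}_K)$ within each block, so taking the expectation over $(\Z,\Z')$ again reduces to a sum of $L$ i.i.d.\ terms $k_{t,s}(\x_{(\ell)},\widetilde{\x}_{(\ell)}) := \mathbb{E}_{\Z_{(1)},\Z_{(1)}'}[g^1_t(\x_{(1)}+\Z_{(1)},\widetilde{\x}_{(1)})^T g^1_s(\x_{(1)}+\Z_{(1)}',\widetilde{\x}_{(1)})]$. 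Using Cauchy--Schwarz once more, $|g^\ell_t(\cdot)^T g^\ell_s(\cdot)|\le \|g^\ell_t\|\|g^\ell_s\|\le 1$, so each summand is bounded, and the SLLN delivers the desired almost sure finite limit $K^{-1}\mathbb{E}[k_{t,s}(\x_{(1)},\widetilde{\x}_{(1)})]$.

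The main obstacle is really bookkeeping rather than a genuine difficulty: I need to verify that the $\Z$-expectation distributes cleanly across blocks so that the summands are honest i.i.d.\ random variables, and that the tensor-product covariance of $(\Z,\Z')$ preserves independence between blocks while imposing the prescribed $\boldsymbol\Sigma$-correlation within a block. Once these points are in place, the softmax bound from Lemma~\ref{lem:denoiser_simplify} makes the blockwise expectations trivially finite and the SLLN closes both arguments.
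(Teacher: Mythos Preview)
Your proposal is correct and follows essentially the same route as the paper: decompose into blocks, take the $\Z$-expectation to obtain an average of $L$ i.i.d.\ blockwise terms, and apply the SLLN. Your argument is in fact a bit cleaner than the paper's, since you use the softmax/probability-vector bound $\|g^\ell_t\|_2\le 1$ together with $\|\x_{(\ell)}\|_2=1$ to make finiteness of the blockwise expectation immediate, whereas the paper writes out the explicit exponential formulas, substitutes $\widetilde{\x}=\x+\widetilde{\z}$, and then simply asserts that the limiting expectation is ``not hard to show'' finite.
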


The proof of Lemma \ref{lem:B5_proof} can be found in Appendix~\ref{app:examples}.  It relies on arguing that the SLLN can be applied over the blocks, which are independent, and showing that the expectations that are the limiting
values are finite.
Therefore, $\textbf{(B1)} - \textbf{(B5)}$ are satisfied in the block-sparse signal model and we can apply Theorem~\ref{thm:non-separable}. 

Finally, we will derive the SE~\eqref{eq:SE2} for the block-sparse signal model.  Recall,
\begin{align*}
    \lambda_t^2& = \sigma_w^2 +  \lim_m \frac{1}{m}\mathbb{E}_{\Z}\left[||g_{t-1}(\x + \lambda_{t-1}\Z, \widetilde{\x}) - \x||^2\right] \\
    &=   \sigma_w^2 + \lim_L \frac{1}{\delta L} \sum_{\ell=1}^L \frac{1}{K}\sum_{k=1}^K \mathbb{E}_{\Z}\left[([g^{\ell}_{t-1}(\x_{(\ell)} + \lambda_{t-1}\Z_{(\ell)}, \widetilde{\x}_{(\ell)})]_k - [\x_{(\ell)}]_k)^2\right].
\end{align*}
Plugging in the explicit form of the denoiser given in Lemma~\ref{lem:denoiser_simplify}, we find,
\[\lambda_t^2 = \sigma_w^2 + \frac{1}{\delta} \lim_L \frac{1}{LK}\sum_{\ell=1}^{L}\mathbb{E}_{\Z}
\left[\sum_{k=1}^{K}\left(\frac{\exp\left(([\x_{(\ell)} + \lambda_{t-1}\Z_{(\ell)}]_{k}/\lambda_{t-1}^2)+([\widetilde{\x}_{(\ell)}]_k/\sigma^2)\right)}{\sum_{i=1}^{K}\exp\left(({[\x_{(\ell)} + \lambda_{t-1}\Z_{(\ell)}]_i}/{\lambda_{t-1}^2})+({[\widetilde{\x}_{(\ell)}]_i}/{\sigma^2})\right)}-[\x_{(\ell)}]_k\right)^2\right].\]
Since within any section $\ell = 1, 2, \ldots, L$, each element $k= 1, 2,\ldots, K$ is equally likely to be the non-zero element, we assume without loss of generality (WLOG) that $[\x_{(\ell)}]_1 =1$ and $[\x_{(\ell)}]_2 = [\x_{(\ell)}]_3 = \ldots = [\x_{(\ell)}]_K =0$.  We can then simplify the above,
\begin{align*}
\lambda_t^2 &= \sigma_w^2 + \frac{1}{\delta} \lim_L \frac{1}{LK}\sum_{\ell=1}^{L}\mathbb{E}_{\Z}
\left[\left(\frac{\exp\Big(\frac{1+\lambda_{t-1}[\Z_{(\ell)}]_{1}}{\lambda_{t-1}^2}+\frac{[\widetilde{\x}_{(\ell)}]_1}{\sigma^2}\Big)}{\exp\Big(\frac{1+\lambda_{t-1}[\Z_{(\ell)}]_{1}}{\lambda_{t-1}^2}+\frac{[\widetilde{\x}_{(\ell)}]_1}{\sigma^2}\Big) + \sum_{i=2}^{K}\exp\Big(\frac{ \lambda_{t-1}[\Z_{(\ell)}]_i}{\lambda_{t-1}^2}+\frac{[\widetilde{\x}_{(\ell)}]_i}{\sigma^2}\Big)}-1\right)^2 \right] \\
&\qquad \qquad + \frac{1}{\delta} \lim_L \frac{1}{LK}\sum_{\ell=1}^{L}\mathbb{E}_{\Z}
\left[\sum_{k=2}^{K}\left(\frac{\exp\Big(\frac{\lambda_{t-1}[\Z_{(\ell)}]_{k}}{\lambda_{t-1}^2}+\frac{[\widetilde{\x}_{(\ell)}]_k}{\sigma^2}\Big)}{\exp\Big(\frac{1+\lambda_{t-1}[\Z_{(\ell)}]_{1}}{\lambda_{t-1}^2}+\frac{[\widetilde{\x}_{(\ell)}]_1}{\sigma^2}\Big) + \sum_{i=2}^{K}\exp\Big(\frac{ \lambda_{t-1}[\Z_{(\ell)}]_i}{\lambda_{t-1}^2}+\frac{[\widetilde{\x}_{(\ell)}]_i}{\sigma^2}\Big)}\right)^2\right].
\end{align*}
Our last simplification uses the fact that $\widetilde{\x}_{(\ell)} = \x_{(\ell)}+\widetilde{\z}_{(\ell)} = \x_{(\ell)} + \mathcal{N}(\mathbf{0}, \sigma^2 \pmb{\mathbb{I}}_{K})$. Therefore,
\begin{align*}
\lambda_t^2 &= \sigma_w^2 + \frac{1}{\delta} \lim_L \frac{1}{LK}\sum_{\ell=1}^{L}\mathbb{E}_{\Z}
\left[\left(\frac{\exp\Big(\frac{1+\lambda_{t-1}[\Z_{(\ell)}]_{1}}{\lambda_{t-1}^2}+\frac{1 + [\widetilde{\z}_{(\ell)}]_1}{\sigma^2}\Big)}{\exp\Big(\frac{1+\lambda_{t-1}[\Z_{(\ell)}]_{1}}{\lambda_{t-1}^2}+\frac{1 + [\widetilde{\z}_{(\ell)}]_1}{\sigma^2}\Big) + \sum_{i=2}^{K}\exp\Big(\frac{ \lambda_{t-1}[\Z_{(\ell)}]_i}{\lambda_{t-1}^2}+\frac{[\widetilde{\z}_{(\ell)}]_i}{\sigma^2}\Big)}-1\right)^2 \right] \\
&\qquad \qquad + \frac{1}{\delta} \lim_L \frac{1}{LK}\sum_{\ell=1}^{L}\mathbb{E}_{\Z}
\left[\sum_{k=2}^{K}\left(\frac{\exp\Big(\frac{\lambda_{t-1}[\Z_{(\ell)}]_{k}}{\lambda_t^2}+\frac{[\widetilde{\z}_{(\ell)}]_k}{\sigma^2}\Big)}{\exp\Big(\frac{1+\lambda_{t-1}[\Z_{(\ell)}]_{1}}{\lambda_{t-1}^2}+\frac{1 +[\widetilde{\z}_{(\ell)}]_1}{\sigma^2}\Big) + \sum_{i=2}^{K}\exp\Big(\frac{ \lambda_{t-1}[\Z_{(\ell)}]_i}{\lambda_{t-1}^2}+\frac{[\widetilde{\z}_{(\ell)}]_i}{\sigma^2}\Big)}\right)^2\right].
\end{align*}
From the above, we note that each sum over $\ell$ is a sum of i.i.d.\ RVs (where the randomness is w.r.t. realizations from a RV $\widetilde{\Z}$), and appealing to the   SLLN gives
\begin{align*}
\lambda_t^2 &= \sigma_w^2 + \frac{1}{\delta K}\mathbb{E}_{\Z, \widetilde{\Z}}
\left[\left(\frac{\exp\Big(\frac{1+\lambda_{t-1}[\Z_{(1)}]_{1}}{\lambda_{t-1}^2}+\frac{1 + [\widetilde{\Z}_{(1)}]_1}{\sigma^2}\Big)}{\exp\Big(\frac{1+\lambda_{t-1}[\Z_{(1)}]_{1}}{\lambda_{t-1}^2}+\frac{1 + [\widetilde{\Z}_{(1)}]_1}{\sigma^2}\Big) + \sum_{i=2}^{K}\exp\Big(\frac{ \lambda_{t-1}[\Z_{(1)}]_i}{\lambda_{t-1}^2}+\frac{[\widetilde{\Z}_{(1)}]_i}{\sigma^2}\Big)}-1\right)^2 \right] \\
&\qquad \qquad + \frac{1}{\delta K}\mathbb{E}_{\Z, \widetilde{\Z}}
\left[\sum_{k=2}^{K}\left(\frac{\exp\Big(\frac{\lambda_{t-1}[\Z_{(1)}]_{k}}{\lambda_{t-1}^2}+\frac{[\widetilde{\Z}_{(1)}]_k}{\sigma^2}\Big)}{\exp\Big(\frac{1+\lambda_{t-1}[\Z_{(1)}]_{1}}{\lambda_{t-1}^2}+\frac{1 +[\widetilde{\Z}_{(1)}]_1}{\sigma^2}\Big) + \sum_{i=2}^{K}\exp\Big(\frac{ \lambda_{t-1}[\Z_{(1)}]_i}{\lambda_{t-1}^2}+\frac{[\widetilde{\Z}_{(1)}]_i}{\sigma^2}\Big)}\right)^2\right],
\end{align*}
where we have used the fact that the RVs were i.i.d.\ over $\ell$, and therefore the expectation is the same irrespective of $\ell$ so we have chosen to take the expectation for section $\ell = 1$, denoted on the vectors by $\Z_{(1)}$ and $\widetilde{\Z}_{(1)}$.

\subsubsection{Numerical Results}
We now provide concrete numerical examples where the blockwise denoiser defined in \eqref{eq:Lipschitz_den} is used inside AMP-SI to estimate $\x$. We also contrast these numerical results with those produced by using $K$ different \emph{scalar} denoisers in each section for pair $(\X_{(\ell)},\widetilde \X_{(\ell)})$.  The scalar denoisers will be modeled as pairwise i.i.d., or to be more specific, we treat each entry of $\x$, $x_i$, as Bernoulli with probability $1/K$ and the SI, $\widetilde{\x}$, as $\x$ plus AWGN. In this case, the separable denoiser is
\begin{equation}
\begin{aligned}
\eta_{t}(a,b)=\mathbb{E}\left[X\middle| X+\lambda_{t}Z=a, \widetilde X=b\right]&=P \left (X=1|a, b \right)=\rho_{\lambda_t^2}(a-1)\rho_{\sigma^2}(b-1),
\label{eq:separabledenoiser}
\end{aligned}
\end{equation} 
where we again assume $\rho_{\tau^2}(x)$ to be the zero-mean Gaussian density with variance $\tau^2$ evaluated at $x$.

It can be seen in Fig.\ \ref{fig:blockwise} that the MSE achieved by the blockwise denoiser of \eqref{eq:Lipschitz_den} is smaller than that achieved by the separable denoisers in~\eqref{eq:separabledenoiser}, where we average the empirical MSE results over 10 trials. In addition, the three panels of the figure compare different block lengths $K$. For small $K$, there is a gap between the MSE achieved by the blockwise denoiser and the separable denoisers; the gap increases as $K$ is increased.
\begin{figure}[h]
\centering
\subcaptionbox{K=5}{\includegraphics[width=0.33\textwidth]{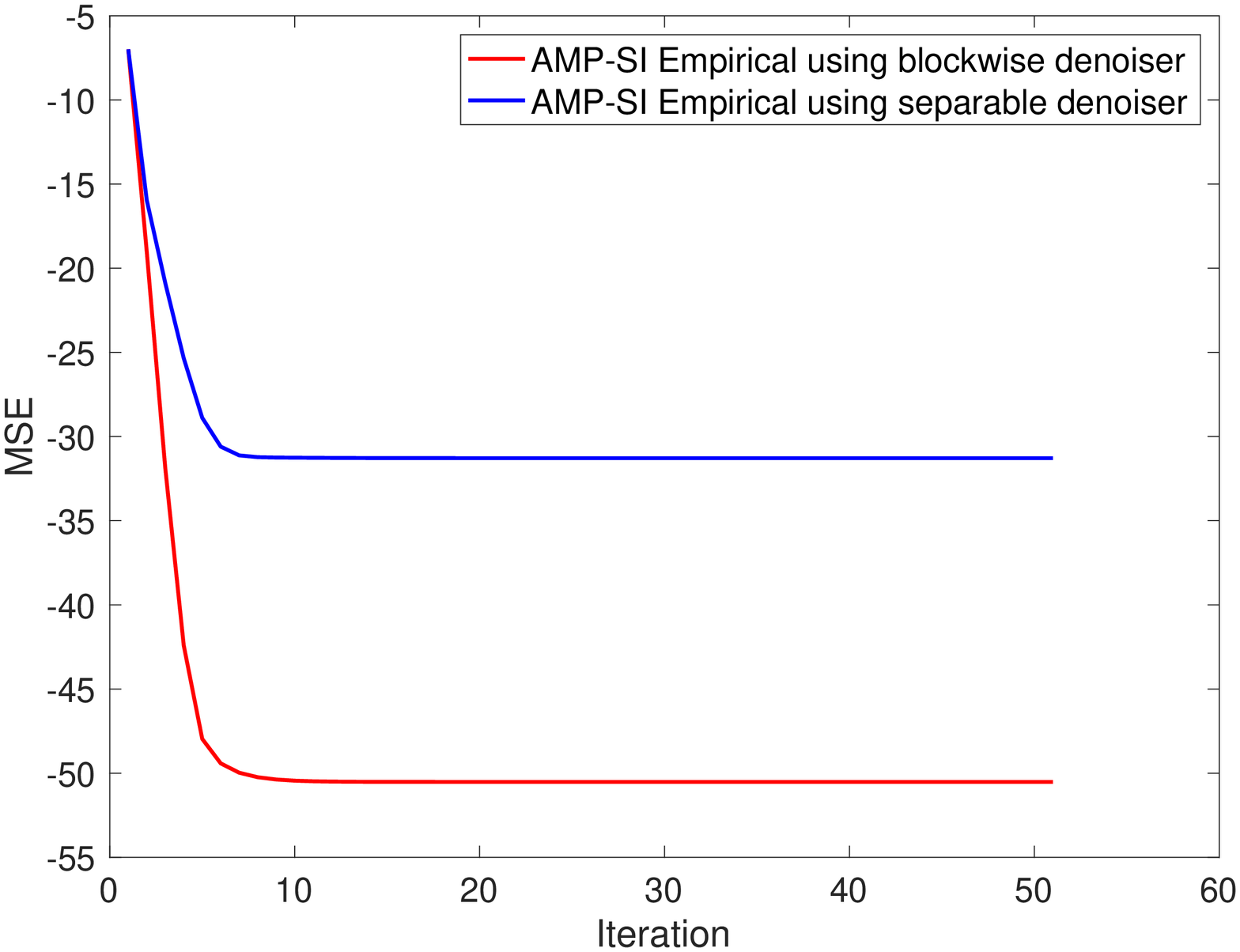}\label{K=5}}%
\hfill
\subcaptionbox{K=10}{\includegraphics[width=0.33\textwidth]{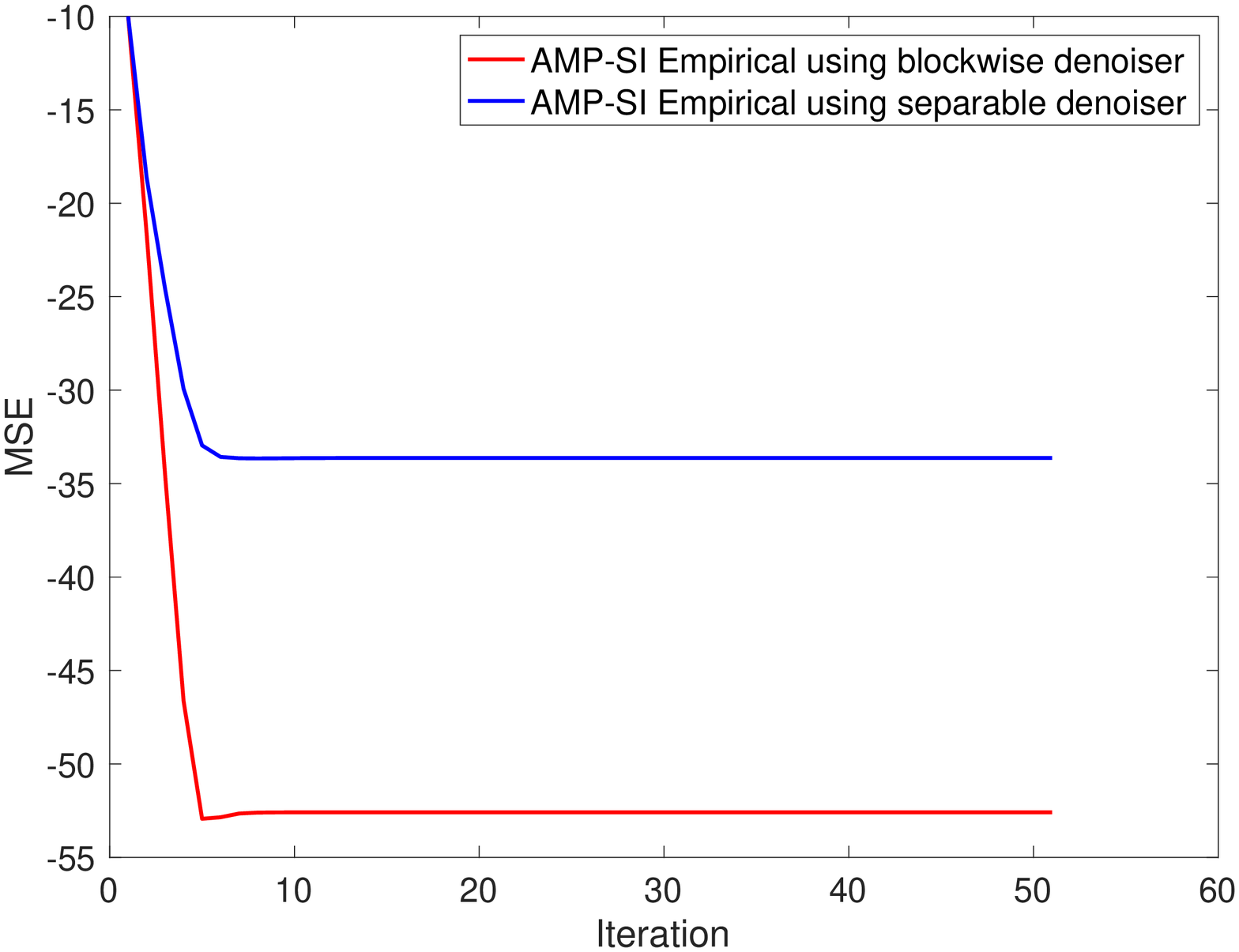}\label{K=10}}%
\hfill 
\subcaptionbox{K=20}{\includegraphics[width=0.33\textwidth]{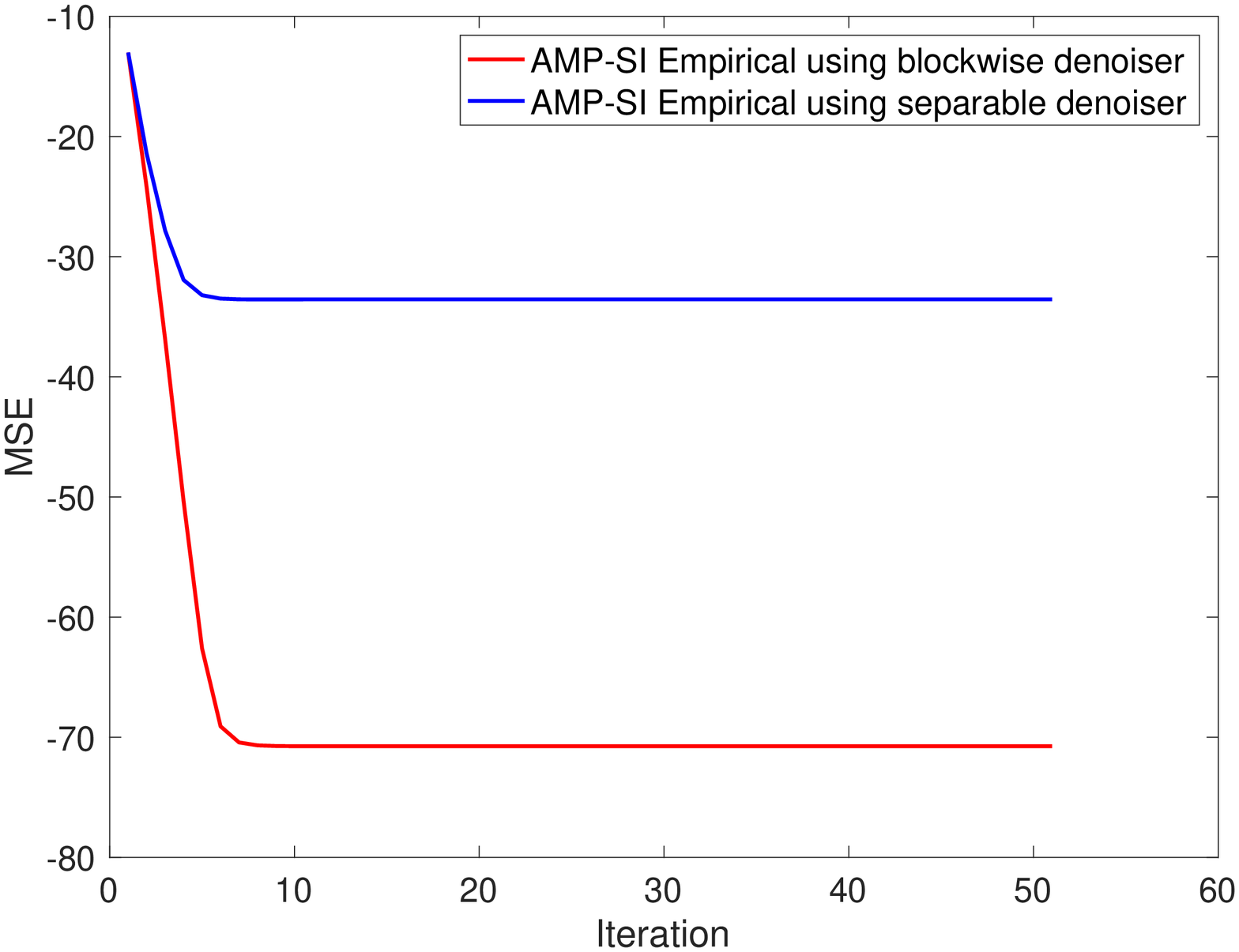}\label{K=20}}%
\caption{Empirical MSE performance of AMP-SI with blockwise denoisers and separable denoisers. ($n=8000$, $\delta=0.3$, $\sigma_{w}^2=0.04$, and $\sigma^2=0.08$.)}
\label{fig:blockwise}
\end{figure}

\section{Proof of Theorem~\ref{thm:SE}}\label{main_proof}

Recall that in this case we assume that $(\x, \widetilde{\x})$ are sampled i.i.d.\ 
from the joint pdf $f(X, \widetilde{X})$, and the conditional denoiser of AMP-SI is given in \eqref{eq:eta_2_iid} for the AMP algorithm in \eqref{eq:1-5_iid}-\eqref{eq:1-6_iid}.  In this case, the simplified SE is given in \eqref{eq:SE2_iid}.

The proof proceeds in three steps. First we show that the functions defined in \eqref{eq:sum_funcs} are uniformly PL(2) when $\phi$ and $\psi$ are PL(2).  This is a straightforward application of Cauchy-Schwarz.  

In the second and third steps, the aim is show that the asymptotic results given in \eqref{eq:main} are true.
In the second step, we show that our assumptions \textbf{(A1)}-\textbf{(A4)} will allow us to make an appeal to Berthier \emph{et al}.~\cite[Theorem 14]{Berthier2017}, which provides a relationship between a general SE and the AMP algorithm when the denoiser is non-separable. 
Finally, in the third step we apply the Berthier \emph{et al}.~\cite[Theorem 14]{Berthier2017} result and argue that the SLLN allows us to include the SI.

\subsection{Step 1}
In step $1$, our goal is to show that the functions defined in \eqref{eq:sum_funcs} are uniformly PL(2) when $\phi$ and $\psi$ are PL(2).
We show the result for $\phi$, and the result for $\psi$ follows similarly.

First, by the fact that $\phi$ is PL(2) ,
\begin{equation*}
\begin{aligned}
\lvert\phi_m(\mathbf{a}, \widetilde{\mathbf{a}}) - \phi_m(\mathbf{b}, \widetilde{\mathbf{b}}) \lvert 
&\leq \frac{1}{m}\sum_{i=1}^m \lvert\phi(a_i, \widetilde{a}_i)- \phi(b_i, \widetilde{b}_i) \lvert \\
&\leq \frac{L}{m}\sum_{i=1}^m\Big[1 +  \frac{\lvert \lvert(a_i, \widetilde{a}_i) \lvert \lvert}{\sqrt{2}}   +  \frac{\lvert \lvert(b_i, \widetilde{b}_i) \lvert \lvert }{\sqrt{2}} \Big]
 \frac{\lvert \lvert(a_i, \widetilde{a}_i) - (b_i, \widetilde{b}_i) \lvert \lvert}{\sqrt{2}}.
\end{aligned}
\end{equation*}
Next we apply Cauchy-Schwarz:
\begin{equation*}
\begin{aligned}
\lvert\phi_m(\mathbf{a}, \widetilde{\mathbf{a}}) - \phi_m(\mathbf{b}, \widetilde{\mathbf{b}}) \lvert^2 & \leq  \frac{L^2}{m}\sum_{i=1}^m\Big[1 +  \frac{\lvert \lvert(a_i, \widetilde{a}_i) \lvert \lvert}{\sqrt{2}}   +  \frac{\lvert \lvert(b_i, \widetilde{b}_i) \lvert \lvert}{\sqrt{2}} \Big]^2\frac{\lvert\lvert(\mathbf{a}, \widetilde{\mathbf{a}}) - (\mathbf{b}, \widetilde{\mathbf{b}}) \lvert\lvert^2}{2m}
\\& \leq 3L^2\Big[1 +  \frac{\lvert \lvert(\mathbf{a}, \widetilde{\mathbf{a}}) \lvert \lvert^{2}}{{2m}}   +  \frac{\lvert \lvert(\mathbf{b}, \widetilde{\mathbf{b}}) \lvert \lvert^{2} }{{2m}} \Big]
\frac{\lvert\lvert(\mathbf{a}, \widetilde{\mathbf{a}}) - (\mathbf{b}, \widetilde{\mathbf{b}}) \lvert\lvert^2}{2m}.
\end{aligned}
\end{equation*}
In the final inequality in the above, we have used  Cauchy-Schwarz in the following way: for any $r >0$ and $a_1, a_2, \ldots, a_m$ scalars, $(|a_1| +|a_2| + \ldots |a_m|)^r \leq m^{r-1}(|a_1|^r +|a_2|^r + \ldots |a_m|^r)$. Namely,
\begin{align*}
\frac{1}{m} \sum_{i=1}^m\Big[1 +  \frac{\lvert \lvert(a_i, \widetilde{a}_i) \lvert \lvert}{\sqrt{2}}   +  \frac{\lvert \lvert(b_i, \widetilde{b}_i) \lvert \lvert}{\sqrt{2}} \Big]^2 &\leq  3\sum_{i=1}^m\Big[\frac{1}{m} + \frac{\lvert \lvert(a_i, \widetilde{a}_i) \lvert \lvert^{2}}{2m}   +  \frac{\lvert \lvert(b_i, \widetilde{b}_i) \lvert \lvert^{2} }{2m} \Big]
\\
&=  3\Big( 1 + \sum_{i=1}^m \frac{a_i^2 + \widetilde{a}_i^2}{2m} + \sum_{i=1}^m \frac{b_i^2 + \widetilde{b}_i^2 }{2m}\Big).
\end{align*}
Finally, we note that this implies that the function $\phi_m$ is uniformly PL(2) as well.  Namely, we have the upper bound,
\begin{equation*}
\begin{aligned}
&\lvert\phi_m(\mathbf{a}, \widetilde{\mathbf{a}}) - \phi_m(\mathbf{b}, \widetilde{\mathbf{b}}) \lvert \leq \sqrt{3}L\Big[ 1 +  \frac{||(\mathbf{a}, \widetilde{\mathbf{a}})||}{\sqrt{2m}}   + \frac{||(\mathbf{b}, \widetilde{\mathbf{b}})||}{\sqrt{2m}}  \Big] \frac{ \lvert \lvert(\mathbf{a}, \widetilde{\mathbf{a}}) - (\mathbf{b}, \widetilde{\mathbf{b}}) \lvert \lvert}{\sqrt{2m}}.
\end{aligned}
\end{equation*}

\subsection{Step 2}
\label{sec:step2}
In this step, we show that our assumptions \textbf{(A1)}-\textbf{(A4)} will allow us to use Berthier \emph{et al}.~\cite[Theorem 14]{Berthier2017} to relate the SE equations to the AMP algorithm.  We first restate ~\cite[Theorem 14]{Berthier2017} for convenience, as it relates to the general AMP algorithm in \eqref{eq:1-5}-\eqref{eq:1-6} and SE \eqref{eq:SE2}.  Then we use the fact that the AMP algorithm~\eqref{eq:1-5_iid}-\eqref{eq:1-6_iid} and SE~\eqref{eq:SE2_iid} studied by Theorem~\ref{thm:SE}, which uses the denoiser $\eta_t$ defined in \eqref{eq:eta_2_iid}, is a special case of the more general setting.
First we note that to apply~\cite[Theorem 14]{Berthier2017}, it is enough if one's problem satisfies the following assumptions:
\begin{itemize}
\item[\textbf{(C1)}] The measurement matrix $\A$ has Gaussian entries with i.i.d.\  mean $0$ and variance $1/m$.
\item[\textbf{(C2)}] 
Define a sequence of denoisers $\widetilde{\eta}_{n}^t: \mathbb{R}^n \rightarrow \mathbb{R}^n$ to be those that apply the denoiser $g_t$ defined in \eqref{eq:eta_2} as follows: $\widetilde{\eta}_{n}^t({\x}) := g_t({\x}, \widetilde{{\x}})$.  For each $t$, the sequence (in $n$) of denoisers $\widetilde{\eta}_{n}^t(\cdot)$ is uniformly Lipschitz.
\item[\textbf{(C3)}]  $||\x||_2^2/n$ converges to a constant as $n\to\infty$.
\item[\textbf{(C4)}] The limit $\sigma_w=\lim_{m\to\infty}{||\w||_2}/{\sqrt{m}}$ is finite.
\item[\textbf{(C5)}]
For any iterations $s, t\in \mathbb{N} $ and for any $2 \times 2$ covariance matrix $\boldsymbol \Sigma$, the following limits exist,
\begin{align*}
&\lim_{n\to\infty}\frac{1}{n} \mathbb{E}_{\Z}[\x^T \widetilde{\eta}_{n}^t(\x+\Z)]< \infty,\\
&\lim_{n\to\infty}\frac{1}{n} \mathbb{E}_{\Z,\Z'}\left[\widetilde{\eta}_{n}^t(\x+ \Z)^T
\widetilde{\eta}_{n}^s(\x+ \Z')\right] < \infty,
\end{align*}
where $(\mathbf{Z}, \mathbf{Z}')\sim N(\mathbf{0}, \boldsymbol \Sigma \otimes \pmb{\mathbb I}_n)$, with $\otimes$ denoting the tensor product.
\end{itemize}
\begin{theorem}{\cite[Theorem 14]{Berthier2017}}
\label{thm:Berthier}
Under the assumptions $\textbf{(C1)} - \textbf{(C5)}$, for any sequences of uniformly pseudo-Lipschitz functions 
$\rho_m: \mathbb{R}^{m} \times \mathbb{R}^{m} \rightarrow \mathbb{R}$ and $\gamma_n: \mathbb{R}^{n} \times \mathbb{R}^{n} \rightarrow \mathbb{R}$,
\begin{equation*}
\begin{aligned}
&\lim_m ( \rho_m({\ramp}^t, {\w}) -  \mathbb{E}_{{\Z_1}}[\rho_m({\w} + \sqrt{\lambda_t^2 - \sigma_w^2} \, {\Z_1}, {\w})])\overset{p}{=}0, \\
&\lim_n\left( \gamma_n({\x}^{t} + {\A}^T {\ramp}^t, {\x}) - \mathbb{E}_{{\Z_2}}\left[\gamma_n({\x} + \lambda_t {\Z_2}, {\x})\right] \right) \overset{p}{=}0,
\end{aligned}
\end{equation*}
where $\Z_1\sim \mathcal{N}(\mathbf{0},\pmb{\mathbb{I}}_m)$, $\Z_2\sim \mathcal{N}(\mathbf{0},\pmb{\mathbb{I}}_n)$, $\x^t$ and $\ramp^t$ are defined in the AMP-SI recursion\eqref{eq:1-5}-\eqref{eq:1-6}, and $\lambda_t$ is defined in the SE~\eqref{eq:SE2}.
\end{theorem}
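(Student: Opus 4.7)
The plan is to prove this via the Gaussian conditioning technique originally due to Bolthausen and adapted to AMP by Bayati--Montanari, with the modifications Berthier et al.\ introduce to accommodate non-separable denoisers. Since the statement is an asymptotic distributional claim about pseudo-Lipschitz functionals of the iterates, the overall strategy is an induction on the iteration count $t$ that simultaneously tracks (i) concentration of empirical inner products of the iterates $\x^s + \A^T \ramp^s$ and $\ramp^s$ for $s \leq t$, and (ii) the limiting Gaussianity of those iterates to the degree required to evaluate the test functions $\rho_m$ and $\gamma_n$.

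First I would recast the AMP recursion in a form amenable to Gaussian conditioning by introducing effective-noise variables such as $\mathbf{h}^{t+1} := \A^T \ramp^t - (\x - \x^t)$ and $\mathbf{b}^t := \w - \ramp^t$, and writing each update as a deterministic linear combination of past iterates plus a term of the form $\A$ or $\A^T$ acting on a ``residual'' direction orthogonal to the previously queried subspace. By the standard Gaussian projection identity for i.i.d.\ Gaussian matrices (which is where \textbf{(C1)} enters), the conditional distribution of $\A$ given the sigma-algebra $\mathcal{S}_t$ generated by $\{\x, \widetilde{\x}, \w, \mathbf{h}^1,\ldots,\mathbf{h}^t, \mathbf{b}^0,\ldots,\mathbf{b}^{t-1}\}$ splits each new iterate into a $\mathcal{S}_t$-measurable deterministic part, plus an independent Gaussian whose covariance is expressible in terms of inner products of past iterates.

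Next I would perform the induction. The hypothesis at step $t$ is that empirical second-moment quantities such as $\frac{1}{n}(\mathbf{h}^s)^T \mathbf{h}^u$, $\frac{1}{m}(\mathbf{b}^s)^T \mathbf{b}^u$, and $\frac{1}{n}(\mathbf{h}^s)^T \widetilde{\eta}_n^u(\x + \mathbf{h}^u)$ converge in probability to the values dictated by the SE recursion~\eqref{eq:SE2}. Assumption \textbf{(C5)}, guaranteeing the existence of the relevant limiting expectations of denoiser inner products, is exactly what replaces the strong law of large numbers that would dispatch these sums in the separable case, while the uniform-Lipschitz assumption \textbf{(C2)} is used to propagate concentration from $\mathbf{h}^s$ to $\widetilde{\eta}^s_n(\x + \mathbf{h}^s)$ via a Gaussian Poincar\'e-type inequality, exploiting that $\widetilde{\eta}^s_n$ has Lipschitz constant independent of $n$. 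Assumptions \textbf{(C3)} and \textbf{(C4)} provide the base-case initialization and the noise contribution to $\lambda_0^2$. Once the covariance structure is established inductively, the pseudo-Lipschitz property of $\rho_m$ and $\gamma_n$, together with the second-moment bounds on the iterates that follow from Lemma~\ref{prop_Lipschitz}, transfers the vector-level Gaussian approximation to the claimed functional convergence.

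The hard part is the non-separable concentration step. In the separable proof one sums over i.i.d.\ coordinates and concentration is automatic from the SLLN; here one is forced to apply a Gaussian Poincar\'e inequality to a full vector-valued Lipschitz function, which typically yields only $O(1/n)$ variance bounds and hence convergence in probability rather than almost surely (which is why the conclusion is stated with $\overset{p}{=}$). A second technical hurdle is controlling the Onsager-correction coefficients, i.e., the linear-combination weights appearing when expanding $\A^T \ramp^t$ in terms of the orthogonalized past iterates; these involve inverses of Gram matrices of past iterates, so one must verify that those Gram matrices are uniformly well-conditioned with high probability, which ties back into the induction hypothesis and into the non-degeneracy guaranteed by \textbf{(C5)}. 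After dispatching these obstacles, the final convergence statement for uniformly pseudo-Lipschitz $\rho_m$ and $\gamma_n$ follows by a density/approximation argument from the vector-level distributional tracking.
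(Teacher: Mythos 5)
This statement is not proven in the paper at all: Theorem~\ref{thm:Berthier} is a verbatim restatement of an external result, \cite[Theorem 14]{Berthier2017}, which the authors import as a black box and whose hypotheses \textbf{(C1)}--\textbf{(C5)} they then verify under their own assumptions. There is therefore no in-paper proof to compare against; the paper's entire justification is the citation. Your sketch is, in substance, an accurate outline of how the result is actually established in the cited reference: the Gaussian conditioning (Bolthausen) technique with the Bayati--Montanari induction on iterations, the decomposition of $\A^T\ramp^t$ into a measurable part plus an independent Gaussian on the orthogonal complement of the previously queried subspace, the replacement of the coordinatewise SLLN by concentration of Lipschitz functions of Gaussian vectors (which is exactly why \textbf{(C2)} demands a Lipschitz constant uniform in $n$ and why the conclusion is only convergence in probability), and the role of \textbf{(C5)} in guaranteeing that the limiting Gram-matrix entries exist. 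Two caveats: first, what you have written is a strategy outline rather than a proof --- each of the inductive concentration claims, the Poincar\'e-type variance bounds, and the transfer to uniformly pseudo-Lipschitz test functions would require substantial work to make rigorous, and within the context of this paper the correct "proof" is simply the citation. Second, your claim that non-degeneracy of the Gram matrices is "guaranteed by \textbf{(C5)}" is not quite right: \textbf{(C5)} only asserts existence and finiteness of the limits, and Berthier \emph{et al}.\ handle possible degeneracy separately (by a perturbation argument), so that step of your outline would need repair if you were to carry it out in full.
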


We now want to apply Theorem~\ref{thm:Berthier}, but have it relate to the AMP algorithm in~\eqref{eq:1-5_iid}-\eqref{eq:1-6_iid} and SE in~\eqref{eq:SE2_iid} studied by Theorem~\ref{thm:SE}, which uses the denoiser $\eta_t$ defined in \eqref{eq:eta_2_iid}.  To do this, we note that the sequence of denoisers $\widetilde{\eta}_{n}^t(\cdot)$ used in assumptions \textbf{(C2)} and \textbf{(C5)} will be those that apply the denoiser $\eta_t$ in \eqref{eq:eta_2_iid} entrywise to its vector inputs.  Specifically,
\begin{equation}
\widetilde{\eta}_{n}^t({\x}) := \eta_t({\x}, \widetilde{{\x}}).
\label{eq:tilde_eta1}
\end{equation}

Now we would like to show that assumptions \textbf{(A1)}-\textbf{(A4)} demonstrate \textbf{(C1)}-\textbf{(C5)} (for $\widetilde{\eta}_{n}^t$ defined in \eqref{eq:tilde_eta1}) in order to apply Theorem~\ref{thm:Berthier}.  First we remind the reader of the strong law, a tool that we will use throughout.
\begin{theorem}
\label{thm:LLN}
{\textbf{\emph{Strong Law of Large Numbers (SLLN)}}}~\cite{JR2006}: Let
$X_1, X_2,...$ be a sequence of i.i.d.\ RVs with finite mean
$\mu$. Then 
\begin{equation}\label{eq:2-1}
P \left(\lim_{n\to\infty}\frac{1}{n}(X_1+X_2+...+X_n)=\mu\right)=1.
\end{equation}
In words, the partial averages $\frac{1}{n}(X_1+X_2+...+X_n)$ converge almost surely
to $\mu < \infty$.
\end{theorem}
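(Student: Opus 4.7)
The plan is to prove the Strong Law of Large Numbers via Etemadi's 1981 truncation argument, which handles exactly the finite first moment hypothesis stated here. First I would reduce to the non-negative case by writing $X_i = X_i^+ - X_i^-$ where $X_i^+ = \max(X_i, 0)$ and $X_i^- = \max(-X_i, 0)$. Each of these sequences remains i.i.d.\ and has finite mean, so it suffices to prove the result for non-negative $X_i$, which I assume henceforth.

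Next I would truncate by defining $Y_k := X_k \mathbf{1}\{X_k \leq k\}$ and show that the truncated and untruncated partial sums have the same almost-sure asymptotic behavior. A Fubini-type computation gives $\sum_{k=1}^\infty P(X_k \neq Y_k) = \sum_{k=1}^\infty P(X_1 > k) \leq \mathbb{E}[X_1] < \infty$, so the first Borel-Cantelli lemma yields $X_k = Y_k$ for all but finitely many $k$ almost surely. Consequently $S_n/n \to \mu$ a.s.\ if and only if $T_n/n \to \mu$ a.s., where $T_n := \sum_{k=1}^n Y_k$. Since $\mathbb{E}[Y_k] \to \mu$ by dominated convergence, it suffices to show $(T_n - \mathbb{E}[T_n])/n \to 0$ a.s. The key estimate is $\sum_{k=1}^\infty \mathrm{Var}(Y_k)/k^2 < \infty$, obtained from the tail identity $\mathbb{E}[Y_k^2] = \int_0^k 2y\, P(X_1 > y)\, dy$ and exchanging the order of summation with an integral.

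The remaining step is a subsequence-plus-sandwich argument. Fix $\alpha > 1$ and take $n_j := \lfloor \alpha^j \rfloor$. Chebyshev's inequality combined with the variance bound yields $\sum_{j=1}^\infty P(|T_{n_j} - \mathbb{E}[T_{n_j}]|/n_j > \epsilon) < \infty$ for every $\epsilon > 0$, and a second application of Borel-Cantelli gives $T_{n_j}/n_j \to \mu$ almost surely. To pass from the subsequence to the full sequence I would use the monotonicity $T_n \leq T_{n+1}$, which holds precisely because $Y_k \geq 0$: for $n_j \leq n < n_{j+1}$, $(n_j/n_{j+1})\, T_{n_j}/n_j \leq T_n/n \leq (n_{j+1}/n_j)\, T_{n_{j+1}}/n_{j+1}$. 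Taking $\limsup$ and $\liminf$ produces $\mu/\alpha \leq \liminf T_n/n \leq \limsup T_n/n \leq \mu\alpha$ almost simultaneously on a set of full probability, and letting $\alpha$ decrease to $1$ along a countable sequence of values completes the proof.

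The main obstacle is establishing the variance bound $\sum_k \mathrm{Var}(Y_k)/k^2 < \infty$ from only a first moment assumption; this is exactly why truncation at the growing threshold $k$, rather than at any fixed level, is essential. The monotonicity step that closes the gap between $n_j$ and $n_{j+1}$ is the elegant combinatorial device that converts subsequence convergence into full convergence, but it depends critically on the reduction to non-negative summands performed at the outset, without which the sandwich inequality is simply false.
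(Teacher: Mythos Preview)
Your proposal is a correct outline of Etemadi's proof of the strong law under a finite first moment, and the sketch is sound: the reduction to non-negative summands, the truncation $Y_k = X_k\mathbf{1}\{X_k\le k\}$, the Borel--Cantelli argument that $X_k=Y_k$ eventually, the variance bound $\sum_k \mathrm{Var}(Y_k)/k^2<\infty$, and the geometric-subsequence plus monotonicity sandwich are all the right ingredients, in the right order.

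That said, there is nothing to compare against here. In the paper this theorem is not proved at all; it is simply quoted as the classical Strong Law of Large Numbers with a textbook citation and then invoked as a tool (for instance, to verify conditions \textbf{(C3)}--\textbf{(C5)} and to pass from empirical averages to expectations in Step~3). So your write-up goes well beyond what the paper does: the paper treats the SLLN as a black box, whereas you have supplied a genuine proof.
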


Now we demonstrate that our assumptions $\textbf{(A1)} - \textbf{(A4)}$ stated in Section~\ref{main_result} are enough to satisfy the assumptions $\textbf{(C1)} - \textbf{(C5)}$ needed to apply Theorem~\ref{thm:Berthier}. 

Assumptions $\textbf{(A1)}$ and $\textbf{(C1)}$ are identical.  In what follows, we will show that $\textbf{(C2)}$ follows from $\textbf{(A4)}$, $\textbf{(C4)}$ follows from $\textbf{(A2)}$, and $\textbf{(C3)}$ follows from $\textbf{(A3)}$.  Finally we show that $\textbf{(C5)}$ follows from $\textbf{(A3)}$ and $\textbf{(A4)}$.

First consider assumption $\textbf{(C2)}$. The non-separable denoiser $\widetilde{\eta}_{n}^t(\x) = \eta_t({\x}, \widetilde{\x})$ applies the AMP-SI denoiser defined in \eqref{eq:eta_2} entrywise to its vector inputs. From $\textbf{(A4)}$, $\{\eta_t(\cdot, \cdot)\}_{t \geq 0}$ are Lipschitz continuous. Thus, for length-$n$ vectors $\x_1, \x_2$, and fixed SI $\widetilde{\x}$,
\begin{equation*}
\begin{split}
||\widetilde{\eta}_{n}^t({\x_1}) - \widetilde{\eta}_{n}^t({\x_2})||^2 &= \sum_{i=1}^n (\eta_t([\x_1]_i, \widetilde{x}_i) - \eta_t([\x_2]_i, \widetilde{x}_i))^2 \\
&\leq  \sum_{i=1}^n \frac{L^2}{2} ([\x_1]_i - [\x_2]_i)^2 =  \frac{L^2}{2} ||\x_1 - \x_2||^2,
\end{split}
\end{equation*}
and so 
$||\widetilde{\eta}_{n}^t({\x_1}) - \widetilde{\eta}_{n}^t({\x_2})||/\sqrt{n} \leq  L||\x_1 - \x_2||/\sqrt{2n}.$
The Lipschitz constant does not depend on $n$, so $\widetilde{\eta}_{n}^t({\cdot})$ is uniformly Lipschitz.

Now consider assumption $\textbf{(C4)}$. From $\textbf{(A2)}$, the measurement noise $\w$ in~\eqref{eq:1-1} has i.i.d.\ entries $\sim f(W)$ with zero-mean and finite $\mathbb{E}[W^2]$ for $W \sim f(W)$. Then applying the SLLN (Definition~\ref{thm:LLN}), 
\begin{equation*}
\lim_{m\to\infty}\frac{1}{m}||\w||_2^2=\lim_{m\to\infty}\frac{1}{m}\sum_{i=1}^mw_i^2
=\sigma_w^2 = \mathbb{E}[W^2] < \infty.
\end{equation*}
The proof of \textbf{(C3)} follows similarly using  the SLLN and the finiteness of $\mathbb E[X^2]$ assumed in \textbf{(A3)}.

We now show that $\textbf{(C5)}$ is met. Recall $Z\sim \mathcal{N}( \mathbf{0},\sigma_z^2 \pmb{\mathbb I}_n)$. Define $y_i :=x_i\mathbb{E}_{Z} \left[ \eta_t(x_i + Z_{i}, \widetilde{x}_i)\right]$ for $i = 1, 2, \ldots, n$. By assumption $\textbf{(A3)}$, the signal and SI $(\x, \widetilde{\x})$ are sampled i.i.d.\ from the joint density $f(X, \widetilde{X})$. It follows that $y_1, y_2, \ldots, y_n$ are also i.i.d., so by  Theorem~\ref{thm:LLN} if $\mathbb{E}[X\eta_t(X + Z,\widetilde X)] < \infty$ where $Z \sim \mathcal{N}(0,\sigma_z^2)$ is independent of $(X, \widetilde{X}) \sim f(X, \widetilde{X})$, then
\begin{equation*}
\begin{aligned}
\lim_{n\to\infty}\frac{1}{n}\sum_{i=1}^n x_i\mathbb{E}_{Z} \left[\eta_t(x_i+Z_{i},\widetilde x_i)\right] =  \mathbb{E}[X\eta_t(X + Z,\widetilde X)].
\end{aligned}
\end{equation*}
We now show that $\mathbb{E}[X\eta_t(X + Z,\widetilde X)] < \infty$.
First note that $\textbf{(A4)}$ assumes that $\eta_t(\cdot, \cdot)$ is Lipschitz, 
and therefore by Lemma~\ref{prop_Lipschitz},
for constant $L' = \max\{2L, |\eta_t(0, 0)|\}> 0$,
\begin{equation}
  |\eta_t(a_1, b_1)|  \leq L'\Big(1 + \frac{1}{\sqrt{2}}||(a_1, b_1)||\Big)  \leq L'(1 + |a_1| + |b_1|).
\label{eq:eta_bound_1}
\end{equation}
Now using \eqref{eq:eta_bound_1} and the triangle inequality,
\begin{equation}
\begin{aligned}
 \mathbb{E} [X \eta_t(X + Z, \widetilde{X})] &\leq L'\mathbb{E} [|X|  (1+ |X+Z| + |\widetilde{X}|)]\\
&\leq L'(  \mathbb{E} |X|  + \mathbb{E} [X^2] + \mathbb{E} |X|  \mathbb{E}|Z| + \mathbb{E} |X \widetilde{X}|) .
\label{eq:C5_bound2}
\end{aligned}
\end{equation}
Finally, by assumption $\textbf{(A3)}$ we have that $\mathbb{E}[X^2]$ and $\mathbb{E}[\widetilde{X}^2]$ are finite. Moreover, $\mathbb{E} [|X\widetilde{X}|]$ is also finite since $\mathbb{E} [|X\widetilde{X}|] \leq (\mathbb{E} [X^2])^{1/2} (\mathbb{E} [\widetilde{X}^2])^{1/2}$ by  Hölder's inequality.  Noting that for any random variable, $Y$, we have  $|Y|^r \leq 1 + |Y|^k$ for $1 \leq r \leq k$, meaning $\mathbb{E}[|Y|^r] < 1+ \mathbb{E}[|Y|^k]$, the boundedness of $ \mathbb{E} [X \eta_t(X + Z, \widetilde{X})]$ follows from \eqref{eq:C5_bound2} with assumption $\textbf{(A3)}$.

The proof of the second equation in $\textbf{(C5)}$ follows similarly to the proof of the first equation in $\textbf{(C5)}$.  Recall $(Z, Z')\sim N(\mathbf{0}, \boldsymbol \Sigma \otimes \pmb{\mathbb I}_n)$. Define $y_i :=\mathbb{E}_{Z,Z'}[\eta_t(x_i+ Z_i,\widetilde{x}_i)\eta_s(x_i+ Z'_i,\widetilde{x}_i)] $ for $i = 1, 2, \ldots, n$.  By assumption $\textbf{(A3)}$, the signal and SI $(\X, \widetilde{\X})$ are sampled i.i.d.\ from the joint density $f(X, \widetilde{X})$. It follows that $y_1, y_2, \ldots, y_n$ are also i.i.d., so by Theorem~\ref{thm:LLN} if $\mathbb{E} [\eta_t(X+Z,\widetilde X)\eta_s(X+Z',\widetilde X)] < \infty$ where $Z \sim \mathcal{N}(0,\sigma_z^2)$ and $Z' \sim \mathcal{N}(0,\sigma_{z'}^2)$, independent of $(X, \widetilde{X}) \sim f(X, \widetilde{X})$, then
\begin{equation*}
\begin{aligned}
\lim_{n\to\infty}\frac{1}{n} \sum_{i=1}^n \mathbb{E}_{Z,Z'}[\eta_t(x_i+ Z_i,\widetilde{x}_i)\eta_s(x_i+ Z'_i,\widetilde{x}_i)]= \mathbb{E}[\eta_t(X+Z,\widetilde X)\eta_s(X+Z',\widetilde X)].
\end{aligned}
\end{equation*}
We will now show that $\mathbb{E}[\eta_t(X+Z,\widetilde X)\eta_s(X+Z',\widetilde X)] < \infty$.
Using the bound \eqref{eq:eta_bound_1},
\begin{equation*}
\begin{aligned}
\mathbb{E}[\eta_t(X+Z,\widetilde X)\eta_s(X+Z',\widetilde X)] & \leq \mathbb{E}[|\eta_t(X+Z,\widetilde X)| |\eta_s(X+Z',\widetilde X)|]  \\
&\leq {L'}^2\mathbb{E} [(1+ |X+Z| + |\widetilde{X}|)  (1+ |X+Z'| + |\widetilde{X}|)].
\end{aligned}
\end{equation*}
Then using the triangle inequality,
\begin{equation*}
\begin{aligned}
&\mathbb{E} \left[(1+ |X+Z| + |\widetilde{X}|)  (1+ |X+Z'| + |\widetilde{X}|)\right] \leq \mathbb{E} \left[(1+ |X| + |Z| + |\widetilde{X}|)  (1+ |X| + |Z'| + |\widetilde{X}|)\right] \\
&=   1+  2\mathbb{E} \left[|X|\right]  + 2 \mathbb{E} [|\widetilde{X}|]  + 2\mathbb{E} [|X| |\widetilde{X}|]  + \mathbb{E} [ X^2] + \mathbb{E} [\widetilde{X}^2] \\
& \qquad+ \mathbb{E} [|X||Z'|]  + \mathbb{E} [|X||Z|] +\mathbb{E} [ |\widetilde{X}||Z'| ] + \mathbb{E} [ |\widetilde{X}| |Z|  ] +  \mathbb{E} |Z|   + \mathbb{E} [|Z'|]   + \mathbb{E} \left[|Z| |Z'|\right]  \\
&=  1+  2\mathbb{E} [|X|]  + 2 \mathbb{E} [|\widetilde{X}|]  + 2\mathbb{E} [|X \widetilde{X}|]  + \mathbb{E} [ X^2] + \mathbb{E} [\widetilde{X}^2] \\
&  \qquad+  \mathbb{E} [|Z'|](1 + \mathbb{E} [|X|]+ \mathbb{E}  [|\widetilde{X}|] )  +  \mathbb{E} [|Z|] ( 1 + \mathbb{E} [|X|] + \mathbb{E}  [|\widetilde{X}|] )    + \mathbb{E} \left[|ZZ'|\right] .
\end{aligned}
\end{equation*}
The above is finite.  This follows since $\mathbb{E} [|Z'|]$ and $\mathbb{E} [|Z|]$ are finite, as $(Z, Z')$ are Gaussian RVs with finite variance.  Moreover, $\mathbb{E} [|ZZ'|]$ is finite since according to Hölder's inequality, $\mathbb{E} [|ZZ'|]\leq (\mathbb{E} [Z^2])^{1/2} (\mathbb{E} [Z'^2])^{1/2}$. Similarly, all expectations involving $|X|, |\widetilde{X}|$ or their products or squares are finite by assumption \textbf{(A3)}.

We have therefore shown that \textbf{(C1)}-\textbf{(C5)} follows from assumptions \textbf{(A1)}-\textbf{(A4)}.

\subsection{Step 3}
\label{Step 3}
Now that we have justified $\textbf{(C1)} - \textbf{(C5)}$, we make an appeal to Theorem~\ref{thm:Berthier} and the SLLN in order to finally prove \eqref{eq:main}.
The first result in \eqref{eq:main}, namely the asymptotic result for $\phi_m$ uniformly PL(2), follows \emph{almost} immediately by applying Theorem~\ref{thm:Berthier} using $ \rho_m = \phi_m$. Namely, by Theorem~\ref{thm:Berthier}, 
$$\lim_m ( \phi_m({\ramp}^t, {\w}) -  \mathbb{E}_{{\Z_1}}[\phi_m({\w} + \sqrt{\lambda_t^2 - \sigma_w^2} \, {\Z_1}, {\w})])\overset{p}{=}0$$
since $\phi_m$ is to be uniformly PL(2) as was shown in Step 1 above. Note that in the above, $\Z_1 \sim \mathcal{N}(\mathbf{0}, \pmb{\mathbb{I}}_m)$ is independent of $\w$.

To complete the proof, we will finally prove that
\begin{equation}
\begin{split}
\lim_m  \mathbb{E}_{{\Z_1}}[\phi_m({\w} + \sqrt{\lambda_t^2 - \sigma_w^2} \, {\Z_1}, {\w})]
\overset{a.s.}{=} \mathbb{E}[\phi(W + \sqrt{\lambda_t^2 - \sigma_w^2} \,  {Z_1}, W)],
\label{eq:limit_num1}
 \end{split}
 \end{equation}
where $W \sim f(W)$ is independent of $Z_1$ standard Gaussian. 
Noticing that
$$\lim_m  \mathbb{E}_{{\Z_1}}[\phi_m({\w} + \sqrt{\lambda_t^2 - \sigma_w^2} \, {\Z_1}, {\w})] = \lim_m  \frac{1}{m} \sum_{i=1}^m\mathbb{E}_{{\Z_1}}[\phi({w}_i + \sqrt{\lambda_t^2 - \sigma_w^2} \, {[\Z_1]_i}, {w_i})],$$
the result follows by the SLLN (Theorem~\ref{thm:LLN}) since the terms in the sum are i.i.d., if we are able to show that $\mathbb{E}[\phi(W + \sqrt{\lambda_t^2 - \sigma_w^2} \,  {Z_1}, W)]$ is finite.
By Lemma \ref{prop_Lipschitz}  it can be shown that if $\phi: \mathbb{R}^2 \rightarrow \mathbb{R}$ is PL(2), then there is a constant $L' = \max\{2L, |\phi(\mathbf{0})|\} > 0$ such that for all $\mathbf{x}\in\mathbb{R}^2$ : $|\phi(\x)|\leq L'(1+||\x||^2/2).$ Using this,
\begin{equation}
\begin{aligned}
\label{eq:W_equation}
\lvert \phi(W + \sqrt{\lambda_t^2 - \sigma_w^2} \,  {Z_1}, W)\lvert 
&\leq L'(1+ \frac{1}{2}\lvert  \lvert (W + \sqrt{\lambda_t^2 - \sigma_w^2} \,  {Z_1}, W)   \lvert  \lvert^2)\\
&\leq L'(1+\frac{3}{2} W^2 +  (\lambda_t^2 - \sigma_w^2)  Z_1^2),
\end{aligned}
\end{equation}
where we have used the
property that for any $a_1, a_2$ scalars 
and any $r >0$, $(|a_1| +|a_2|)^r \leq 2^{r-1}(|a_1|^r +|a_2|^r)$. 
Thus,
\begin{align*} 
\lvert \lvert (W + \sqrt{\lambda_t^2 - \sigma_w^2} \,  {Z_1}, W)  \lvert \lvert^2 &= (W + \sqrt{\lambda_t^2 - \sigma_w^2} \, {Z_1})^2 + W^2 \leq 3W^2 + 2(\lambda_t^2 - \sigma_w^2)  Z_1^2.
\end{align*}
Now we have shown using \eqref{eq:W_equation},
$$\mathbb{E}\lvert \phi(W + \sqrt{\lambda_t^2 - \sigma_w^2} \,  {Z_1}, W)\lvert 
\leq  L' (1+\frac{3}{2} \mathbb{E}[W^2] + (\lambda_t^2 - \sigma_w^2)  \mathbb{E}[Z_1^2]) < \infty,$$
where the final inequality uses the boundedness of the moments of the noise in assumption \textbf{(A2)}.

The second result of \eqref{eq:main} requires a bit more care as it is not immediate that the function $\gamma_n: \mathbb{R}^{2n} \rightarrow \mathbb{R}$, defined as $\gamma_n(\mathbf{a}, \mathbf{b}) := \psi_n(\mathbf{a}, \mathbf{b}, \widetilde{\x})
\label{eq:gamma_def}$ for a sequence of SI $\{\widetilde{\x}\}_n$, is uniformly PL(2) as needed to apply Theorem~\ref{thm:Berthier}. The next step of the proof
deals with carefully handling this issue. We note that once we have shown that
\begin{equation}
\label{eq:to_show1}
\lim_n ( \psi_n(\x^t + {\A}^T {\ramp}^t,\x, \widetilde{\x}) -  \mathbb{E}_{{\Z_2}}[\psi_n({\x} + \lambda_t {\Z_2}, {\x}, \widetilde{\x})] )\overset{p}{=}0,
\end{equation}
then the last step showing that
\begin{align*}
&\lim_n  \mathbb{E}_{{\Z_2}}[\psi_n({\x} + \lambda_t {\Z_2}, {\x}, \widetilde{\x})] = \mathbb{E}[\psi({X} + \lambda_t {Z_2}, {X},\widetilde{X})],
 \end{align*}
follows by the SLLN along with Assumption \textbf{(A3)} as in \eqref{eq:limit_num1} - \eqref{eq:W_equation}.  

However, the function $\gamma_n$ is not obviously uniformly PL(2)  since an upper bound on $|\psi_n(\mathbf{a}, \widetilde{\mathbf{a}}, \widetilde{\x}) - \psi_n(\mathbf{b}, \widetilde{\mathbf{b}}, \widetilde{\x})|$ necessarily has an $||\widetilde{\x}||/\sqrt{n}$ factor, and therefore the use of Theorem~\ref{thm:Berthier} to give result \eqref{eq:to_show1} needs to be justified in more detail.  However, this is mainly a technicality as $||\widetilde{\x}||/\sqrt{n}$ is bounded by a constant (independent of $n$) with high probability. 

To show \eqref{eq:to_show1}, we would like to show that for any $\epsilon > 0$, 
\begin{equation}
P\Big(\Big \lvert  \psi_n(\x^t + {\A}^T {\ramp}^t,\x, \widetilde{\x}) -  \mathbb{E}_{{\Z_2}}[\psi_n({\x} + \lambda_t {\Z_2}, {\x}, \widetilde{\x})] \Big \lvert > \epsilon\Big) \rightarrow 0,
\label{eq:conv_in_prob}
\end{equation}
as $n \rightarrow \infty$.  Define a pair of events $\mathcal{T}_n(\epsilon)$ and $\mathcal{B}_n(C)$ as 
\[\mathcal{T}_n(\epsilon) := \Big\{\Big\lvert  \psi_n(\x^t + {\A}^T {\ramp}^t,\x, \widetilde{\x}) -  \mathbb{E}_{{\Z_2}}[\psi_n({\x} + \lambda_t {\Z_2}, {\x}, \widetilde{\x})] \Big \lvert > \epsilon \Big\},\]
and for constant $C>0$ independent of $n$,
$$\mathcal{B}_n(C) := \Big\{\widetilde{\x} \in \mathbb{R}^n : \frac{1}{\sqrt{n}}||\widetilde{\x}|| < C\Big\}.$$
Then demonstrating \eqref{eq:conv_in_prob} means showing, for any $\epsilon > 0
$, that $\lim_n P(\mathcal{T}_n(\epsilon)) = 0$.  Note that,
\begin{equation}
\begin{split}
P(\mathcal{T}_n(\epsilon)) &=  P(\mathcal{T}_n(\epsilon) \text{ and } \mathcal{B}_n(C)) + P(\mathcal{T}_n(\epsilon)  \text{ and not } \mathcal{B}_n(C) ) \leq  P(\mathcal{T}_n(\epsilon) \lvert \mathcal{B}_n(C))+ P(\text{not } \mathcal{B}_n(C)).
\label{eq:T_decompose}
\end{split}
\end{equation}

Considering \eqref{eq:T_decompose}, we argue that the first term on the right side approaches $0$ as $n$ grows due to Theorem~\ref{thm:Berthier}. First notice that, in the notation just introduced, Theorem~\ref{thm:Berthier} implies that as $n$ grows,
\begin{equation}
P\Big(\mathcal{T}_n(\epsilon) \, \Big \lvert \, \mathcal{B}_p(C) \text{ for all integers }  p \Big)  \rightarrow 0,
\label{eq:thm_application}
\end{equation}
where we note that there is a slight technicality relating to the fact  that we must now justify the conditions needed for the proof, namely, \textbf{(C1)}-\textbf{(C5)}, in the conditional probability space.  For the moment, we assume that \eqref{eq:thm_application} is true. By \eqref{eq:thm_application}, we find that the first term on the right side of \eqref{eq:T_decompose} approaches $0$ as $n \rightarrow \infty$ gets large since $\mathcal{T}_n(\epsilon)$ conditional on $\mathcal{B}_n(C)$ is independent of $\mathcal{B}_p(C)$ for $p \neq n$, and so 
$$P\Big(\mathcal{T}_n(\epsilon) \, \Big \lvert \, \mathcal{B}_n(C)\Big) = P\Big(\mathcal{T}_n(\epsilon) \, \Big \lvert \, \mathcal{B}_p(C)  \text{ for all integers }  p \Big).$$

Next, by choosing $C$ large enough, the second probability in \eqref{eq:T_decompose}, namely $P(\text{not } \mathcal{B}_n(C))$ goes to zero by the SLLN as $||\widetilde{\x}||^2/n$ concentrates to  the elementwise squared expectation of the SI, namely $\mathbb{E}[\widetilde{X}^2]$.

Now we argue that it is straightforward to justify the conditions  \textbf{(C1)}-\textbf{(C5)} in the conditional probability space, leading to the result in \eqref{eq:thm_application}.  Using that $\widetilde{\x}$ is independent of the random elements $\A$ and $\w$, this requires arguing that \textbf{(C3)} and \textbf{(C5)} are true, conditional on $\mathcal{B}_p(C)$ for all integers $p$.  We give a sketch of how to justify \textbf{(C3)} and \textbf{(C5)} in the conditional probability space.  

First consider \textbf{(C3)}. We argue that \textbf{(C3)} holds by demonstrating that
$$P\Big(\Big \lvert \frac{1}{n}||\x||^2 - \text{const}\Big \lvert \geq \epsilon \, \Big \lvert \, \mathcal{B}_p(C) \text{ for all integers }  p \Big)  \rightarrow 0.$$
As above, we notice that ${||\x||^2}/{n}$ is independent of $\mathcal{B}_p(C)$ for $p \neq n$ when we condition on $\mathcal{B}_n(C)$, and therefore
$$P\Big(\Big \lvert \frac{1}{n}||\x||^2 - \text{const}\Big \lvert \geq \epsilon \, \Big \lvert \, \mathcal{B}_p(C) \text{ for all integers }  p \Big) = P\Big(\Big \lvert \frac{1}{n}||\x||^2 - \text{const}\Big \lvert \geq \epsilon \, \Big \lvert \, \mathcal{B}_n(C) \Big).$$
Now we note that
\begin{equation}
\begin{aligned}
P\Big(\Big \lvert \frac{1}{n}||\x||^2 - \text{const}\Big \lvert \geq \epsilon \, \Big \lvert \, \mathcal{B}_n(C) \Big)  &= \frac{P\Big(\Big \lvert \frac{1}{n}||\x||^2 - \text{const}\Big \lvert \geq \epsilon; \mathcal{B}_n(C) \Big)}{P\Big(\mathcal{B}_n(C) \Big)}\\
&\leq \frac{P\Big(\Big \lvert \frac{1}{n}||\x||^2 - \text{const}\Big \lvert \geq \epsilon\Big)}{P\Big(\mathcal{B}_n(C) \Big)}.
\label{eq:string1}
\end{aligned}
\end{equation}
As we argued earlier, $P(\text{not } \mathcal{B}_n(C))  \rightarrow 0$ by the SLLN and therefore $P(\mathcal{B}_n(C)) \rightarrow 1$.  So for some $N_0$, we have, say, $P(\mathcal{B}_n(C)) > 1/2$ for all $n > N_0$.  Then the above goes to $0$ since (unconditionally) ${||\x||^2}/{n}$ concentrates on a constant by the SLLN as argued in Section \ref{sec:step2}.

Finally, the same strategy can work to prove \textbf{(C5)}.  We would like to show, for example, that
$$P\Big(\Big \lvert \frac{1}{n}\sum_{i=1}^n \mathbb{E}_{\Z}[x_i \widetilde{\eta}^t_n(x_i + Z_i)] - \text{const}\Big \lvert \geq \epsilon \, \Big \lvert \, \mathcal{B}_p(C) \text{ for all integers }  p \Big)  \rightarrow 0.$$
Noticing that $\frac{1}{n}\sum_{i=1}^n \mathbb{E}_{\Z}[x_i \widetilde{\eta}^t_n(x_i + Z_i)]  = \frac{1}{n}\sum_{i=1}^n \mathbb{E}_{\Z}[x_i \eta^t_n(x_i + Z_i, \widetilde{x}_i)] $ is independent of $\mathcal{B}_p(C)$ for $p \neq n$ when we condition on $\mathcal{B}_n(C)$, we therefore have that
\begin{align*}
&P\Big(\Big \lvert \frac{1}{n}\sum_{i=1}^n \mathbb{E}_{\Z}[x_i \widetilde{\eta}^t_n(x_i + Z_i)]  - \text{const}\Big \lvert \geq \epsilon \, \Big \lvert \, \mathcal{B}_p(C) \text{ for all integers }  p \Big) \\
&\hspace{3cm} = P\Big(\Big \lvert \frac{1}{n}\sum_{i=1}^n \mathbb{E}_{\Z}[x_i \widetilde{\eta}^t_n(x_i + Z_i)]  - \text{const}\Big \lvert \geq \epsilon \, \Big \lvert \, \mathcal{B}_n(C) \Big).
\end{align*}
As in \eqref{eq:string1}, 
\begin{align*}
& P\Big(\Big \lvert \frac{1}{n}\sum_{i=1}^n \mathbb{E}_{\Z}[x_i \widetilde{\eta}^t_n(x_i + Z_i)]  - \text{const}\Big \lvert \geq \epsilon \, \Big \lvert \, \mathcal{B}_n(C) \Big) \leq \frac{P\Big(\Big \lvert \frac{1}{n}\sum_{i=1}^n \mathbb{E}_{\Z}[x_i \widetilde{\eta}^t_n(x_i + Z_i)]  - \text{const}\Big \lvert \geq \epsilon \Big)}{P\Big(\mathcal{B}_n(C) \Big)},
\end{align*}
and the result follows by what was previously shown in Section~\ref{sec:step2}.

\section{Proof of Theorem~\ref{thm:non-separable}}\label{main_proof_non-separable}
In this section, we prove Theorem~\ref{thm:non-separable}. 
Again, we use Berthier \emph{et al}.~\cite[Theorem 14]{Berthier2017}, restated above in Theorem~\ref{thm:Berthier}. The proof follows similarly to that of Theorem~\ref{thm:SE} given in Section~\ref{main_proof}.  The main difference is carefully handling the fact that allowing a general joint distribution for $(\x, \widetilde{\x})$ means that there are now (possibly) non-trivial dependencies between the elements of $\x$ and $\widetilde{\x}$. Before we get to the proof, we state and prove a lemma that provides a weak law of large numbers when the elements of the sum are dependent\cite{Cacoullos1989}.
\begin{lemma}
\label{lem:WLLN}
Let $X_1, X_2,...$ be a sequence of RVs, each having the same mean $\mu$, with
$\mathrm{Var}(X_i)\leq c < \infty$, and $\mathrm{Cov}(X_i,X_j)\to 0$ when $|i-j|\to \infty$. Then
\begin{equation}\label{eq:WLLN}
\lim_{n\to\infty}P\left(\left|\frac{1}{n}(X_1+X_2+...+X_n)-\mu \right|\geq\epsilon\right)=0.
\end{equation}
In words, the partial averages $\frac{1}{n}(X_1+X_2+...+X_n)$ converge in probability
to $\mu < \infty$.
\end{lemma}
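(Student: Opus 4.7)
The plan is to use Chebyshev's inequality to reduce the statement to showing that $\mathrm{Var}(\bar{X}_n) \to 0$, where $\bar{X}_n := \frac{1}{n}\sum_{i=1}^n X_i$. Concretely, Chebyshev gives
\begin{equation*}
P\!\left(\left|\bar{X}_n - \mu\right| \geq \epsilon\right) \leq \frac{\mathrm{Var}(\bar{X}_n)}{\epsilon^2},
\end{equation*}
so the whole proof reduces to showing $\mathrm{Var}(\bar{X}_n) \to 0$ as $n \to \infty$.

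Next, I would expand the variance of the average as a double sum of covariances and split it into the diagonal and off-diagonal parts,
\begin{equation*}
\mathrm{Var}(\bar{X}_n) = \frac{1}{n^2} \sum_{i=1}^n \mathrm{Var}(X_i) + \frac{1}{n^2} \sum_{i \neq j} \mathrm{Cov}(X_i, X_j).
\end{equation*}
The diagonal part is at most $c/n$ by the uniform variance bound in the hypothesis, and this clearly goes to $0$. So the main work is controlling the off-diagonal sum.

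For the off-diagonal part, the key observation is that $\mathrm{Cov}(X_i, X_j) \to 0$ as $|i-j| \to \infty$, so for any $\epsilon' > 0$ there exists an integer $K$ (independent of $n$) such that $|\mathrm{Cov}(X_i, X_j)| < \epsilon'$ whenever $|i - j| > K$. For the remaining pairs with $|i-j| \leq K$, I would fall back on the universal bound $|\mathrm{Cov}(X_i, X_j)| \leq \sqrt{\mathrm{Var}(X_i)\mathrm{Var}(X_j)} \leq c$ from Cauchy--Schwarz. Counting pairs, there are at most $2Kn$ pairs with $|i-j| \leq K$ and at most $n^2$ pairs overall, giving
\begin{equation*}
\frac{1}{n^2}\sum_{i \neq j} |\mathrm{Cov}(X_i, X_j)| \leq \frac{2Kc}{n} + \epsilon'.
\end{equation*}
Letting $n \to \infty$ leaves at most $\epsilon'$, and since $\epsilon' > 0$ was arbitrary, $\mathrm{Var}(\bar{X}_n) \to 0$.

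The only slightly delicate step is the bookkeeping in the covariance splitting: choosing $K$ uniformly for $\epsilon'$, then letting $n \to \infty$, and finally letting $\epsilon' \to 0$ in the right order so that the $2Kc/n$ term genuinely vanishes before $\epsilon'$ is sent to zero. This is the main (very mild) obstacle; everything else is a direct application of Chebyshev and Cauchy--Schwarz combined with the two hypotheses (bounded variances and decaying covariances).
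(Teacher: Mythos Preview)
Your proof is correct and follows essentially the same strategy as the paper: reduce to $\mathrm{Var}(\bar X_n)\to 0$ via Chebyshev, then split the covariance double sum into a near-diagonal band (bounded by $c$ via Cauchy--Schwarz) and a far-from-diagonal part where the covariance is small by hypothesis. The only cosmetic difference is that the paper takes the band width to be $\lfloor\sqrt{n}\rfloor$, growing with $n$, so that the bound $c/n + 2c/\sqrt{n} + 2\varepsilon(n)$ is a single sequence tending to zero, whereas you fix the band width $K$ depending on $\epsilon'$ and then send $n\to\infty$ followed by $\epsilon'\to 0$; both packagings are standard and equivalent.
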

\begin{proof}
Denote $S_n:=X_1+X_2+...+X_n$.  
For any fixed $n$, consider all pairs $1 \leq i,j \leq n$ such that
$|i-j|>\lfloor \sqrt{n} \rfloor$ and denote a set $\textrm{pairs}(n) = \{(i,j) :1 \leq i,j \leq n \text{ and } |i-j|>\lfloor \sqrt{n} \rfloor\}$.  Then let $\varepsilon(n) := \max_{(i,j) \in \textrm{pairs}(n)} |\mathrm{Cov}(X_i,X_j)|$.  Then we know that $\lim_{n \rightarrow \infty} \epsilon(n) = 0.$
Thus, for any fixed $n$, labeling $n_0 = \lfloor \sqrt{n} \rfloor$,
\begin{equation}
\begin{split}
&\mathrm{Var}\left(S_n \right) =\sum_{i=1}^n\mathrm{Var}(x_i)+2\sum_{i=1}^n\sum_{j=i+1}^n\mathrm{Cov}(X_i,X_j) \leq nc+2\sum_{i=1}^n\sum_{j=i+1}^n\mathrm{Cov}(X_i,X_j)\\
&\leq nc+2\sum_{i=1}^n\sum_{j=i+1}^{i+n_0}\mathrm{Cov}(X_i,X_j)+2\sum_{i=1}^n\sum_{j=i+1+n_0}^{n}\mathrm{Cov}(X_i,X_j) \leq nc+2nn_0c+2n^2\varepsilon(n),
\end{split}
\end{equation}
where the final inequality follows Cauchy-Schwarz as follows: $|\mathrm{Cov}(X_i,X_j)|\leq\sqrt{\mathrm{Var}(X_i)\mathrm{Var}(X_j)}\leq c$. Therefore, 
\begin{equation}
\begin{split}
\mathrm{Var}\Big(\frac{S_n} {n}\Big)=\frac{1}{n^2}\mathrm{Var}(S_n) \leq \frac{c}{n}+\frac{2c}{\sqrt{n}}+2\varepsilon(n).
\end{split}
\end{equation}
Since $\varepsilon(n)$ goes to $0$ with $n$, we see that  $\mathrm{Var}(S_n/n)\to0$ as $n \to \infty$. 
Then by Chebyshev's inequality,
\[\lim_{n\to\infty}P\left(\Big|\frac{S_n}{n}-\mu\Big|>\epsilon\right)\leq\lim_{n\to\infty}\frac{1}{\epsilon^2}\mathrm{Var}\Big(\frac{S_n}{n}\Big)=0.\]
\end{proof}

To begin the proof,
we demonstrate that our assumptions $\textbf{(B1)} - \textbf{(B5)}$ stated in Section~\ref{main_result} are enough to satisfy the technical conditions $\textbf{(C1)} - \textbf{(C5)}$ needed to apply Theorem~\ref{thm:Berthier}.  
First, assumptions $\textbf{(B1)}$ and $\textbf{(B5)}$ are identical to $\textbf{(C1)},$ and $\textbf{(C5)}$, respectively.

Next consider assumption $\textbf{(C2)}$. The sequence of non-separable denoisers $\widetilde{\eta}_{n}^t(\x)$ are those that apply the denoiser $g_t$ defined in \eqref{eq:eta_2} as follows: $\widetilde{\eta}_{n}^t({\x}) := g_t({\x}, \widetilde{{\x}})$.  From $\textbf{(B2)}$, we know that $\{g_t(\x, \widetilde{\x})\}_{t \geq 0}$ are uniformly Lipschitz continuous, for length-$n$ vectors $\x_1, \x_2$, and fixed SI $\widetilde{\x}$,
\[\frac{||g_t({\x_1, \widetilde{\x}}) - g_t({\x_2,\widetilde{\x}})||^2}{n} \leq L^2 \frac{||(\x_1, \widetilde{\x}) - (\x_2, \widetilde{\x})||^2}{2n} = L^2 \frac{||\x_1- \x_2||^2}{2n},\]
for a constant $L > 0$ that does not depend on $n$.  Therefore,
\begin{equation*}
\begin{split}
\frac{||\widetilde{\eta}_{n}^t({\x_1}) - \widetilde{\eta}_{n}^t({\x_2})||^2}{ n}&=\frac{||g_t({\x_1, \widetilde{\x}}) - g_t({\x_2,\widetilde{\x}})||^2}{n} \\
&\leq L^2 \frac{||\x_1 - \x_2||^2}{2n} = \frac{L^2}{2} \frac{||\x_1 - \x_2||^2}{n}.
\end{split}
\end{equation*}
The Lipschitz constant 
$L^2/2$ does not depend on $n$, so $\widetilde{\eta}_{n}^t({\cdot})$ is uniformly Lipschitz.

Next we show that assumption $\textbf{(C3)}$ can be met. From assumption $\textbf{(B3)}$, each entry of the signal has the same finite second moment, which we will call $\mathbb{E}[X_1^2]$.  Moreover, each entry $i \in \{1, 2, \ldots, n\}$ has finite variance $\mathrm{Var}(X_i^2)$ since $\mathrm{Var}(X_i^2) = \mathbb{E}(X_i^4) - [\mathbb{E}(X_i^2)]^2 < \infty$ and  $ [\mathbb{E}(X_i^2)]^2 \leq \mathbb{E}(X_i^4) < \infty$ where the first inequality follows by Jensen's Inequality and the second by assumption. 
Then applying Lemma \ref{lem:WLLN},
\begin{equation*}
\lim_{n\to\infty}\frac{1}{n}||\textbf{x}||_2^2=\lim_{n\to\infty}\frac{1}{n}\sum_{i=1}^n x_i^2
\overset{p}{=}\mathbb{E}[X_1^2],
\end{equation*}
where we have used that $\mathrm{Cov}(X_i^2, X_j^2) \rightarrow 0$ as $|i-j| \rightarrow \infty$.

Next, consider the assumption $\textbf{(C4)}$. From $\textbf{(B4)}$, the noise $\w$ sampled i.i.d.\ from $\sim f(W)$ with finite $\mathbb{E}[W^2]$. Therefore, by SLLN, 
\begin{equation*}
\lim_{m\to\infty}\frac{1}{m}||\w||_2^2=\lim_{m\to\infty}\frac{1}{m}\sum_{i=1}^m w_i^2
=\sigma_w^2 < \infty.
\end{equation*}

Therefore,  the technical conditions $\textbf{(C1)} - \textbf{(C5)}$ are satisfied and we may apply Theorem~\ref{thm:Berthier}.
Now, the first result in \eqref{eq:main2}, namely the asymptotic result for $\kappa_m$, a uniformly pseudo-Lipschitz function, follows immediately by applying Theorem~\ref{thm:Berthier} using $\rho_m = \kappa_m$. 

We consider the second result in \eqref{eq:main2}, namely the asymptotic result for $\nu_n$ uniformly pseudo-Lipschitz: we want to show
\[\lim_n \nu_n({\x}^{t} + {\A}^T {\ramp}^t, {\x},\widetilde{\x}) \overset{p}{=} \lim_n \mathbb{E}_{{\Z_2}}\left[\nu_n({\x} + \lambda_t {\Z_2}, {\x},\widetilde{\x})\right].\]
We note that applying Theorem~\ref{thm:Berthier} using $\gamma_n: \mathbb{R}^{2n} \rightarrow \mathbb{R}$ defined as $\gamma_n(\mathbf{a}, \mathbf{b}) := \nu_n(\mathbf{a}, \mathbf{b}, \widetilde{\x})
\label{eq:gamma_def2}$ for a sequence of SI $\{\widetilde{\x}\}_n$ would give the desired result, however it is not immediately obvious that such a function $\gamma_n$ is uniformly PL(k) as needed to apply Theorem~\ref{thm:Berthier} even under the assumption that $\nu_n$ is PL(k).  Justifying that we can still apply Theorem~\ref{thm:Berthier} to the function $\gamma_n$ follows similarly to the same step in the proof in Section \ref{Step 3}.  Namely, the problem lies in the fact that an upperbound on $||\gamma_n(\mathbf{a}, \mathbf{b}) - \gamma_n(\widetilde{\mathbf{a}}, \widetilde{\mathbf{b}})||/\sqrt{2n}$ necessarily has an  $(||\widetilde{\x}||/\sqrt{n})^{k-1}$ factor meaning that it is not obviously uniformly pseudo-Lipschitz.  We deal with this by showing that  $(||\widetilde{\x}||/\sqrt{n})^{k-1}$ is bounded by a constant, independent of $n$, with high probability.

Define events $\mathcal{T}'_n(\epsilon)$ and $\mathcal{B}'_n(C')$ as follows,
\[\mathcal{T}'_n(\epsilon) := \{\lvert \nu_n(\x^t + {\A}^T {\ramp}^t,\x, \widetilde{\x}) -  \mathbb{E}_{{\Z_2}}[\nu_n({\x} + \lambda_t {\Z_2}, {\x}, \widetilde{\x})]  \lvert > \epsilon \},\] and for constant $C'>0$ independent of $n$, 
\[\mathcal{B}'_n(C') := \{\widetilde{\x} \in \mathbb{R}^n : \big(||\widetilde{\x}||/\sqrt{n}\big)^{k-1} < C'\}.\] 
Then to prove  the second result in \eqref{eq:main2}, we would like to prove that $\lim_{n\to \infty} P(\mathcal{T}'_n(\epsilon)) = 0$ for any $\epsilon > 0$.
Again, as in \eqref{eq:T_decompose}, we have an upper bound on $P(\mathcal{T}'_n(\epsilon))$ given by
\begin{equation}\label{eq:T_decompose_noniid}
\begin{split}
P(\mathcal{T}'_n(\epsilon))\leq  P(\mathcal{T}'_n(\epsilon) \, \lvert \, \mathcal{B}'_n(C'))+ P(\text{not } \mathcal{B}'_n(C')).
\end{split}
\end{equation}
Now considering \eqref{eq:T_decompose_noniid}, the first term on the right side approaches $0$ as $n$ grows by Theorem~\ref{thm:Berthier}. To see this, notice that, in the notation just introduced, Theorem~\ref{thm:Berthier} implies that as $n$ grows,
\begin{equation}
P\Big(\mathcal{T}'_n(\epsilon) \, \Big \lvert \, \mathcal{B}'_p(C') \text{ for all integers }  p \Big)  \rightarrow 0.
\label{eq:thm_application2}
\end{equation}
As in the proof in Section \ref{Step 3}, in order for the above to be true, we must additionally justify the conditions needed for the proof, namely, \textbf{(C1)}-\textbf{(C5)}, in the conditional probability space.  We assume that \eqref{eq:thm_application2} is true for now.

Now we use this to show that the probability in \eqref{eq:T_decompose_noniid} goes to $0$.  The first term, $P(\mathcal{T}'_n(\epsilon) \, \lvert \, \mathcal{B}'_n(C'))$, approaches $0$ as $n$ gets large due to the fact that $\mathcal{T}'_n(\epsilon)$ is independent of $\mathcal{B}'_p(C')$ for $p\neq n$, when conditioning on $\mathcal{B}'_p(C')$, or in other words,
\[P\Big(\mathcal{T}'_n(\epsilon) \, \Big \lvert \, \mathcal{B}'_p(C') \text{ for all integers }  p \Big)  = P\Big(\mathcal{T}'_n(\epsilon) \, \Big \lvert \, \mathcal{B}'_n(C')\Big). \] 
Next, by choosing $C'$ large enough, the second probability, $P(\text{not } \mathcal{B'}_n(C'))$, goes to zero by Lemma \ref{lem:WLLN}, because $( ||\widetilde{\x}||^2/n)^{(k-1)/2}$ concentrates to $(\mathbb{E}[\widetilde{X}_1^2])^{{(k-1)}/{2}}$ using the assumptions in \textbf{(B3)} and arguments like that justifying \textbf{(C3)} above.

Now we deal with the fact that we need to justify the assumptions for the proof\textbf{(C1)}-\textbf{(C5)}, in the conditional probability space.  This is done in a manner largely similar to what was done in the proof in Section \ref{Step 3}. First, since $\widetilde{\x}$ is independent of $\A$ and $\w$, so is the conditioning event $\mathcal{B'}_p(C')$ for all integers $p$.  Therefore, the arguments justifying \textbf{(C1)}, \textbf{(C2)}, and \textbf{(C4)} remain the same so in what follows we only consider \textbf{(C3)} and \textbf{(C5)}. Now we give a sketch of how to justify \textbf{(C3)} and \textbf{(C5)} in the conditional probability space.

First consider \textbf{(C3)}. 
Notice that ${||\x||^2}/{n}$ is independent of $\mathcal{B'}_p(C')$ for $p \neq n$ when we condition on $\mathcal{B'}_n(C')$, and therefore
$$P\Big(\Big \lvert \frac{1}{n}||\x||^2 - \text{const}\Big \lvert \geq \epsilon \, \Big \lvert \, \mathcal{B'}_p(C') \text{ for all integers }  p \Big) = P\Big(\Big \lvert \frac{1}{n}||\x||^2 - \text{const}\Big \lvert \geq \epsilon \, \Big \lvert \, \mathcal{B'}_n(C') \Big).$$
Then, as in \eqref{eq:string1},
\begin{align}
&P\Big(\Big \lvert \frac{1}{n}||\x||^2 - \text{const}\Big \lvert \geq \epsilon \, \Big \lvert \, \mathcal{B'}_n(C') \Big)  \leq \frac{P\Big(\Big \lvert \frac{1}{n}||\x||^2 - \text{const}\Big \lvert \geq \epsilon\Big)}{P\Big(\mathcal{B'}_n(C') \Big)}.
\label{eq:string2}
\end{align}
As we argued earlier, $P(\text{not } \mathcal{B'}_n(C'))  \rightarrow 0$ by Lemma \ref{lem:WLLN} and therefore $P(\mathcal{B'}_n(C')) \rightarrow 1$.  Then the above \eqref{eq:string2} converges to zero, because ${||\x||^2}/{n}$ converges in probability (unconditionally) to a constant, as shown by Lemma \ref{lem:WLLN}.

Finally, we sketch how to show the first result in \textbf{(C5)}.  In particular, we need to show
$$P\Big( \Big \lvert \frac{1}{n} \mathbb{E}_{\Z}[\x^T \widetilde{\eta}^t_n(\x+\Z)] - \text{ const} \Big \lvert > \epsilon \, \Big \lvert \, \mathcal{B'}_p(C') \text{ for all integers }  p \Big)  \rightarrow 0.$$
Noticing that $\frac{1}{n} \mathbb{E}_{\Z}[\x^T \widetilde{\eta}^t_n(\x+\Z)] = \frac{1}{n} \mathbb{E}_{\Z}[\x^T {g}_t(\x+\Z, \widetilde{\x})]$ is independent of $\mathcal{B'}_p(C')$ for $p \neq n$ when we condition on $\mathcal{B'}_n(C')$, we have
\begin{align*}
&P\Big( \Big \lvert \frac{1}{n} \mathbb{E}_{\Z}[\x^T \widetilde{\eta}^t_n(\x+\Z)] - \text{ const} \Big \lvert > \epsilon \, \Big \lvert \, \mathcal{B'}_p(C') \text{ for all integers }  p \Big)  \\
&\hspace{3cm} = P\Big( \Big \lvert \frac{1}{n} \mathbb{E}_{\Z}[\x^T \widetilde{\eta}^t_n(\x+\Z)] - \text{ const} \Big \lvert > \epsilon \, \Big \lvert \, \mathcal{B'}_n(C') \Big) .
\end{align*}
As in \eqref{eq:string2}, we can upper bound this as follows:
\begin{align*}
& P\Big( \Big \lvert \frac{1}{n} \mathbb{E}_{\Z}[\x^T \widetilde{\eta}^t_n(\x+\Z)] - \text{ const} \Big \lvert > \epsilon \, \Big \lvert \, \mathcal{B'}_n(C') \Big)  \leq \frac{P\Big(\Big \lvert \frac{1}{n} \mathbb{E}_{\Z}[\x^T \widetilde{\eta}^t_n(\x+\Z)] - \text{ const} \Big \lvert > \epsilon \Big)}{P\Big(\mathcal{B'}_n(C') \Big)},
\end{align*}
and from $\textbf{(B5)}$, the limit $\lim_{n\to\infty} \frac{1}{n} \mathbb{E}_{\Z}[\x^T \widetilde{\eta}^t_n(\x+\Z)]$ exists and is finite, so that $P\Big(\Big \lvert \frac{1}{n} \mathbb{E}_{\Z}[\x^T \widetilde{\eta}^t_n(\x+\Z)] - \text{ const} \Big \lvert > \epsilon \Big) \rightarrow 0$.

\section{Conclusion}
In this paper, we prove that under various technical conditions, approximate message passing using side information (AMP-SI), as quantified by uniformly pseudo-Lipschitz loss, can be characterized by a state evolution (SE) when the signal and SI have either i.i.d.\ or 
non-i.i.d.\ entries. Our technical conditions require the measurement matrix to be i.i.d.\ Gaussian, 
and the joint distribution of the input signal and SI should satisfy some finite moment constraints.

The main difficulty is to establish that Bayes AMP-SI denoisers are uniformly Lipschitz when also conditioning on the SI. When the signal and SI have i.i.d.\ entries, the Bayes AMP-SI denoiser is 
separable and, in certain cases, Lipschitz continuous. As illustrative examples, we provided conditional denoisers 
for the AMP-SI algorithm and its corresponding SE in two signal and SI models: the Gaussian signal/Gaussian noise model and Bernoulli-Gaussian signal/Gaussian noise model. 
When there exist dependencies between the signal and SI pair, the Bayes AMP-SI denoiser is non-separable, 
but in many cases remains uniformly Lipschitz, as we demonstrate for the block-sparse signal model.

\section*{Acknowledgment}
We thank You (Joe) Zhou for insightful conversations and valuable advice. 

\bibliographystyle{IEEEtran}
\bibliography{bibliography}

\begin{thebibliography}{10}
\providecommand{\url}[1]{#1}
\csname url@samestyle\endcsname
\providecommand{\newblock}{\relax}
\providecommand{\bibinfo}[2]{#2}
\providecommand{\BIBentrySTDinterwordspacing}{\spaceskip=0pt\relax}
\providecommand{\BIBentryALTinterwordstretchfactor}{4}
\providecommand{\BIBentryALTinterwordspacing}{\spaceskip=\fontdimen2\font plus
\BIBentryALTinterwordstretchfactor\fontdimen3\font minus
  \fontdimen4\font\relax}
\providecommand{\BIBforeignlanguage}[2]{{%
\expandafter\ifx\csname l@#1\endcsname\relax
\typeout{** WARNING: IEEEtran.bst: No hyphenation pattern has been}%
\typeout{** loaded for the language `#1'. Using the pattern for}%
\typeout{** the default language instead.}%
\else
\language=\csname l@#1\endcsname
\fi
#2}}
\providecommand{\BIBdecl}{\relax}
\BIBdecl

\bibitem{Liu2019ISIT}
H.~Liu, C.~Rush, and D.~Baron, ``An analysis of state evolution for approximate
  message passing with side information,'' in \emph{Proc. IEEE Int. Symp. Inf.
  Theory (ISIT)}, July 2019.

\bibitem{DMM2009}
D.~L. Donoho, A.~Maleki, and A.~Montanari, ``{Message passing algorithms for
  compressed sensing},'' \emph{Proc. Nat. Academy Sci.}, vol. 106, no.~45, pp.
  18\,914--18\,919, Nov. 2009.

\bibitem{Arguello2011}
H.~Arguello and G.~Arce, ``Code aperture optimization for spectrally agile
  compressive imaging,'' \emph{J. Opt. Soc. Am.}, vol.~28, no.~11, pp.
  2400--2413, Nov. 2011.

\bibitem{Hastie2001}
T.~Hastie, R.~Tibshirani, and J.~H. Friedman, \emph{The Elements of Statistical
  Learning}.\hskip 1em plus 0.5em minus 0.4em\relax Springer, Aug. 2001.

\bibitem{RanganGAMP2010}
S.~Rangan, ``Generalized approximate message passing for estimation with random
  linear mixing,'' \emph{Arxiv preprint arXiv:1010.5141}, Oct. 2010.

\bibitem{Cover06}
T.~M. Cover and J.~A. Thomas, \emph{Elements of Information Theory}.\hskip 1em
  plus 0.5em minus 0.4em\relax New York, NY, USA: Wiley-Interscience, 2006.

\bibitem{Dror2017}
D.~Baron, A.~Ma, D.~Needell, C.~Rush, and T.~Woolf, ``Conditional approximate
  message passing with side information,'' in \emph{Proc. IEEE Asilomar Conf.
  Signals, Syst. Comput.}, Nov. 2017.

\bibitem{Ma2018}
A.~Ma, Y.~Zhou, C.~Rush, D.~Baron, and D.~Needell, ``An approximate message
  passing framework for side information,'' \emph{IEEE Trans. Signal Proc.},
  vol.~67, no.~7, pp. 1875--1888, Apr. 2019.

\bibitem{saleh1987statistical}
A.~Saleh and R.~Valenzuela, ``A statistical model for indoor multipath
  propagation,'' \emph{IEEE J. Select. Areas Commun.}, vol.~5, no.~2, pp.
  128--137, Feb. 1987.

\bibitem{Bayati2011}
M.~Bayati and A.~Montanari, ``The dynamics of message passing on dense graphs,
  with applications to compressed sensing,'' \emph{IEEE Trans. Inf. Theory},
  vol.~57, no.~2, pp. 764--785, Feb. 2011.

\bibitem{Rush_Finite18}
C.~Rush and R.~Venkataramanan, ``Finite sample analysis of approximate message
  passing algorithms,'' \emph{IEEE Trans. Inf. Theory}, vol.~64, no.~11, pp.
  7264--7286, Nov. 2018.

\bibitem{Berthier2017}
R.~Berthier, A.~Montanari, and P.~M. Nguyen, ``State evolution for approximate
  message passing with non-separable functions,'' \emph{Inst. Math. Inf.
  Infer}, Jan. 2019.

\bibitem{wang2015approximate}
X.~Wang and J.~Liang, ``Approximate message passing-based compressed sensing
  reconstruction with generalized elastic net prior,'' \emph{Signal Process.
  Image}, vol.~37, pp. 19--33, Sept. 2015.

\bibitem{DCSAMP}
J.~Ziniel and P.~Schniter, ``Dynamic compressive sensing of time-varying
  signals via approximate message passing,'' \emph{IEEE Trans. Signal
  Process.}, vol.~61, no.~21, pp. 5270--5284, Nov. 2013.

\bibitem{manoel2017streaming}
A.~Manoel, F.~Krzakala, E.~W. Tramel, and L.~Zdeborov{\'a}, ``Streaming
  {B}ayesian inference: theoretical limits and mini-batch approximate
  message-passing,'' in \emph{Proc. Annual Allerton Conf. on Commun., Control,
  and Comput.}, Oct. 2017, pp. 1048--1055.

\bibitem{javanmard2013state}
A.~Javanmard and A.~Montanari, ``State evolution for general approximate
  message passing algorithms, with applications to spatial coupling,''
  \emph{Inf. Inference}, vol.~2, no.~2, pp. 115--144, 2013.

\bibitem{Joseph2012}
A.~Joseph and A.~R. Barron, ``Least squares superposition codes of moderate
  dictionary size are reliable at rates up to capacity,'' \emph{IEEE Trans.
  Inf. Theory}, vol.~58, no.~5, pp. 2541--2557, May. 2012.

\bibitem{BarbierKrzakala2014ISIT}
J.~Barbier and F.~Krzakala, ``Replica analysis and approximate message passing
  decoder for superposition codes,'' in \emph{Proc. IEEE Int. Symp. Inf. Theory
  (ISIT)}, Honolulu, HI, June 2014, pp. 1494--1498.

\bibitem{Rush2017}
C.~Rush, A.~Greig, and R.~Venkataramanan, ``Capacity-achieving sparse
  superposition codes via approximate message passing decoding,'' \emph{IEEE
  Trans. Inf. Theory}, vol.~63, no.~3, pp. 1476--1500, Mar. 2017.

\bibitem{Giuseppe2018}
A.~Fengler, P.~Jung, and G.~Caire, ``{SPARC}s for unsourced random access,''
  \emph{arXiv preprint arXiv:1809.04745}, 2018.

\bibitem{Rush2020}
V.~K. Amalladinne, A.~K. Pradhan, C.~Rush, J.-F. Chamberland, and K.~R.
  Narayanan, ``On approximate message passing for unsourced access with coded
  compressed sensing,'' \emph{arXiv preprint arXiv:2001.03705}, 2020.

\bibitem{Giuseppe2020}
A.~Fengler, P.~Jung, and G.~Caire, ``Unsourced multiuser sparse regression
  codes achieve the symmetric mac capacity,'' \emph{arXiv preprint
  arXiv:2001.04217}, 2020.

\bibitem{JR2006}
J.~S. Rosenthal, \emph{{A} {F}irst {L}ook at {R}igorous {P}robability
  {T}heory}, 2nd~ed.\hskip 1em plus 0.5em minus 0.4em\relax {W}orld
  {S}cientific {P}ublishing {C}o. {P}te. {L}td., 2006.

\bibitem{Cacoullos1989}
T.Cacoullos, \emph{Exercises in Probability}.\hskip 1em plus 0.5em minus
  0.4em\relax Springer, 1989.

\end{thebibliography}

\appendix

\section{Various Technical Results} \label{app:technical}

\subsection{Proof of Lemma~\ref{prop_Lipschitz}}
\begin{proof}
Letting $\mathbf{0}\in\mathbb{R}^n$ denote the all-zero vector, then $\mathbf{a}\in\mathbb{R}^m$ defined as $\mathbf{a} := \phi(\mathbf{0})$ is a constant vector. Using the Cauchy-Schwarz inequality, it can be seen that,
\begin{equation*}
\begin{aligned}
\frac{1}{m} \left(||\phi(\x)||-||\mathbf{a}||\right)^2 \leq \frac{1}{m} \left|\left|\phi(\x)-\phi(\mathbf{0})\right|\right|^2.
\end{aligned}
\end{equation*}
Therefore, by Definition~\ref{def:PLfunc} there exists some constant $L > 0$ such that
\begin{equation*}
\begin{aligned}
\frac{1}{\sqrt{m}}\Big(||\phi(\x)||-||\mathbf{a}||\Big)\leq \frac{1}{\sqrt{m}}\Big|||\phi(\x)||-||\mathbf{a}||\Big|\leq\frac{1}{\sqrt{m}}\left|\left|\phi(\x)-\phi(\mathbf{0})\right|\right|\leq L\Big(1+ \Big(\frac{||\x||}{\sqrt{n}}\Big)^{k-1}\Big) \frac{||\x||}{\sqrt{n}}.
\end{aligned}
\end{equation*}
Rearranging, we find
\begin{equation*}
\begin{aligned}
 \frac{||\phi(\x)||}{\sqrt{m}}&\leq L\Big(1+ \Big(\frac{||\x||}{\sqrt{n}}\Big)^{k-1}\Big) \frac{||\x||}{\sqrt{n}}+\frac{||\mathbf{a}||}{\sqrt{m}}\leq L\Big(\frac{||\x||}{\sqrt{n}}\Big)+ L\Big(\frac{||\x||}{\sqrt{n}}\Big)^{k} +\frac{||\mathbf{a}||}{\sqrt{m}}.
\end{aligned}
\end{equation*}
From the bound just above, letting $L' = \max\{2L, ||\mathbf{a}||/\sqrt{m}\}$, we consider two cases.  First, if $||\x||/\sqrt{n} \leq 1$, then
\begin{equation*}
\begin{aligned}
 \frac{||\phi(\x)||}{\sqrt{m}}&\leq L\Big(\frac{||\x||}{\sqrt{n}}\Big)+ L\Big(\frac{||\x||}{\sqrt{n}}\Big)^{k} +\frac{||\mathbf{a}||}{\sqrt{m}} \leq L\Big(1 +\Big(\frac{||\x||}{\sqrt{n}}\Big)^{k}\Big) +\frac{||\mathbf{a}||}{\sqrt{m}} \leq L' \Big( 1 + \Big(\frac{||\x||}{\sqrt{n}}\Big)^{k} \Big),
\end{aligned}
\end{equation*}
giving the desired result. Next, if $||\x||/\sqrt{n} > 1$ we find the same bound:
\begin{equation*}
\begin{aligned}
 \frac{||\phi(\x)||}{\sqrt{m}}&\leq L\Big(\frac{||\x||}{\sqrt{n}}\Big)+ L\Big(\frac{||\x||}{\sqrt{n}}\Big)^{k} +\frac{||\mathbf{a}||}{\sqrt{m}} \leq 2L\Big(\frac{||\x||}{\sqrt{n}}\Big)^{k} +\frac{||\mathbf{a}||}{\sqrt{m}}  \leq L' \Big( 1 + \Big(\frac{||\x||}{\sqrt{n}}\Big)^{k} \Big).
\end{aligned}
\end{equation*}
\end{proof}

\subsection{Technical Details for the Proof of Corollary~\ref{cor:SE-nonsep}} \label{Appendix_Cor2.2.1}

Here we aim to show that $\nu_n^2(\x, \y, \z) = \frac{1}{n}||g_t(\x,\z)-\y||^2$ is a uniformly PL(2) function.  Recall that this means we want to demonstrate the following upper bound,
\begin{equation*}
\begin{split}
&|\nu_n^2(\x, \y, \z)-\nu_n^2(\widetilde{\x}, \widetilde{\y}, \widetilde{\z})| \leq L\left(1+\frac{||(\x,\y,\z)||}{\sqrt{3n}}+\frac{||(\widetilde{\x},\widetilde{\y},\widetilde{\z})||}{\sqrt{3n}}\right)\frac{||(\x,\y,\z)-(\widetilde{\x},\widetilde{\y},\widetilde{\z})||}{\sqrt{3n}},
\end{split}
\end{equation*}
where $L>0$ is a positive constant that does not depend on $n$.

We first note,
\begin{equation}
\begin{split}
&|\nu_n^2(\x, \y, \z)-\nu_n^2(\widetilde{\x}, \widetilde{\y}, \widetilde{\z})| = \frac{1}{n}\left|||g_t(\x,\z)-\y||^2 -||g_t(\widetilde{\x},\widetilde{\z})- \widetilde{\y}||^2\right|\\
&=\frac{1}{n}\left|\sum_{i=1}^{n}\left(\left([g_t(\x,\z)]_i-y_i\right)^2-\left([g_t(\widetilde{\x},\widetilde{\z})]_i-\widetilde{y}_i\right)^2\right)\right|\leq\frac{1}{n}\sum_{i=1}^{n}\left|\left([g_t(\x,\z)]_i-y_i\right)^2-\left([g_t(\widetilde{\x},\widetilde{\z})]_i-\widetilde{y}_i\right)^2\right|\\
&=\frac{1}{n}\sum_{i=1}^{n}\Big|\left([g_t(\x,\z)]_i-y_i\right)+\left([g_t(\widetilde{\x},\widetilde{\z})]_i-\widetilde{y}_i\right)\Big|\,\,\Big|\left([g_t(\x,\z)]_i-y_i\right)-\left([g_t(\widetilde{\x},\widetilde{\z})]_i-\widetilde{y}_i\right)\Big|.
\label{eq:cor_bound1}
\end{split}
\end{equation}
By Cauchy-Schwarz, $(|a_1| +|a_2|)^r \leq 2^{r-1}(|a_1|^r +|a_2|^r)$ for any $r >0$, and thus
\begin{equation}
\begin{split}
&\Big(\sum_{i=1}^{n}\Big|\left([g_t(\x,\z)]_i-y_i\right)+\left([g_t(\widetilde{\x},\widetilde{\z}\right)]_i-\widetilde{y}_i)\Big|\,\,\Big|\left([g_t(\x,\z)]_i-[g_t(\widetilde{\x},\widetilde{\z})]_i\right)-(y_i-\widetilde{y}_i)\Big|\Big)^2\\
&\leq \Big[\sum_{i=1}^{n}\left(([g_t(\x,\z)]_i-y_i)+([g_t(\widetilde{\x},\widetilde{\z})]_i-\widetilde{y}_i)\right)^2 \Big] \Big[ \sum_{j=1}^{n}\left(([g_t(\x,\z)]_j-[g_t(\widetilde{\x},\widetilde{\z})]_j)-(y_j-\widetilde{y}_j)\right)^2 \Big]\\
&\leq \Big[4\sum_{i=1}^{n}\left(([g_t(\x,\z)]_i^2 +y_i^2)+ ([g_t(\widetilde{\x},\widetilde{\z})]_i^2 + \widetilde{y}_i^2)\right) \Big] \Big[ 2\sum_{j=1}^{n}\left(([g_t(\x,\z)]_j-[g_t(\widetilde{\x},\widetilde{\z})]_j)^2+(y_j-\widetilde{y}_j)^2\right)\Big] \\
&= 8 \left(||g_t(\x,\z)||^2+||g_t(\widetilde{\x},\widetilde{\z})||^2+||\y||^2+||\widetilde{\y}||^2\right)\left(||g_t(\x,\z)-g_t(\widetilde{\x},\widetilde{\z})||^2 +||\y-\widetilde{\y}||^2\right).
\label{eq:cor_bound2}
\end{split}
\end{equation}
Plugging \eqref{eq:cor_bound2} into \eqref{eq:cor_bound1}, we have shown
\begin{equation}
\begin{split}
&|\nu_n^2(\x, \y, \z)-\nu_n^2(\widetilde{\x}, \widetilde{\y}, \widetilde{\z})|^2 \\
&\leq 8 \left(\frac{||g_t(\x,\z)||^2}{n}+\frac{||g_t(\widetilde{\x},\widetilde{\z})||^2}{n}+\frac{||\y||^2}{n}+\frac{||\widetilde{\y}||^2}{n}\right)\left(\frac{||g_t(\x,\z)-g_t(\widetilde{\x},\widetilde{\z})||^2}{n} +\frac{||\y-\widetilde{\y}||^2}{n}\right).
\label{eq:lipschitz_res1}
\end{split}
\end{equation}
Now recall by assumption \textbf{(B2)} that the sequence of denoisers $g_t(\cdot, \cdot)$ is uniformly Lipschitz in $n$, meaning (per Definition~\ref{def:PLfunc}) that there exists a positive constant $L'>0$ that does not depend on $n$  such that,
\begin{equation}
\frac{||g_t(\x,\z)-g_t(\widetilde{\x},\widetilde{\z})||}{\sqrt{n}} \leq L' \frac{||(\x,\z)-(\widetilde{\x},\widetilde{\z})||}{\sqrt{2n}},
\label{eq:lipschitz_known1}
\end{equation}
 and, moreover, by Lemma~\ref{prop_Lipschitz} with $L'' = \max\{2L', ||g_t(\mathbf{0}, \mathbf{0})||/\sqrt{n}\}$, along with Cauchy-Schwarz,
\begin{equation}
\frac{||g_t(\x,\z)||}{\sqrt{n}} \leq L''\left(1 + \frac{||(\x,\z)||}{\sqrt{2n}}\right) \quad \implies \quad \frac{||g_t(\x,\z)||^2}{n} \leq 2L''^2\left(1 + \frac{||(\x,\z)||^2}{2n}\right).
\label{eq:lipschitz_known2}
\end{equation}
First, using \eqref{eq:lipschitz_known1} to bound the second term on the right side of \eqref{eq:lipschitz_res1} gives,
\begin{equation}
\begin{split}
&|\nu_n^2(\x, \y, \z)-\nu_n^2(\widetilde{\x}, \widetilde{\y}, \widetilde{\z})|^2\\
&\leq 8 \max\{1,L'^2\} \left(\frac{||g_t(\x,\z)||^2}{n}+\frac{||g_t(\widetilde{\x},\widetilde{\z})||^2}{n}+\frac{||\y||^2}{n}+\frac{||\widetilde{\y}||^2}{n}\right)\left(\frac{||(\x,\z)-(\widetilde{\x},\widetilde{\z})||^2}{2n} +\frac{||\y-\widetilde{\y}||^2}{n}\right) \\
&\leq 24 \max\{1,L'^2\} \left(\frac{||g_t(\x,\z)||^2}{n}+\frac{||g_t(\widetilde{\x},\widetilde{\z})||^2}{n}+\frac{||\y||^2}{n}+\frac{||\widetilde{\y}||^2}{n}\right)\frac{||(\x, \y, \z)-(\widetilde{\x}, \widetilde{\y}, \widetilde{\z})||^2}{3n}.
\label{eq:lipschitz_res2}
\end{split}
\end{equation}
Next, we use \eqref{eq:lipschitz_known2} to bound the first term on the right side of \eqref{eq:lipschitz_res2} to give the desired result:
\begin{equation}
\begin{split}
&|\nu_n^2(\x, \y, \z)-\nu_n^2(\widetilde{\x}, \widetilde{\y}, \widetilde{\z})|^2 \\
&\leq  24 \max\{1,L'^2\} \left(2L''^2 \left(1 + \frac{||(\x,\z)||^2}{2n}\right)+2L''^2\left(1 + \frac{||(\widetilde{\x},\widetilde{\z})||^2}{2n}\right)+\frac{||\y||^2}{n}+\frac{||\widetilde{\y}||^2}{n}\right)\frac{||(\x, \y, \z)-(\widetilde{\x}, \widetilde{\y}, \widetilde{\z})||^2}{3n} \\
&\leq  72 \max\{1,L'^2\}\max\{1,2L''^2\} \left(1+   \frac{||(\x,\y, \z)||^2}{3n} + \frac{||(\widetilde{\x}, \widetilde{\y},\widetilde{\z})||^2}{3n}\right) \frac{||(\x, \y, \z)-(\widetilde{\x}, \widetilde{\y}, \widetilde{\z})||^2}{3n}.
\label{eq:lipschitz_res3}
\end{split}
\end{equation}
Finally, \eqref{eq:lipschitz_res3} implies
\begin{equation*}
\begin{split}
&|\nu_n^2(\x, \y, \z)-\nu_n^2(\widetilde{\x}, \widetilde{\y}, \widetilde{\z})| \\
&\leq  6 \sqrt{2} \max\{1,L'\}\max\{1,\sqrt{2}L''\} \left(1+   \frac{||(\x,\y, \z)||^2}{3n} + \frac{||(\widetilde{\x}, \widetilde{\y},\widetilde{\z})||^2}{3n}\right)^{1/2} \frac{||(\x, \y, \z)-(\widetilde{\x}, \widetilde{\y}, \widetilde{\z})||}{\sqrt{3n}} \\
&\leq  6 \sqrt{2} \max\{1,L'\}\max\{1,\sqrt{2}L''\} \left(1+   \frac{||(\x,\y, \z)||}{\sqrt{3n}} + \frac{||(\widetilde{\x}, \widetilde{\y},\widetilde{\z})||}{\sqrt{3n}}\right) \frac{||(\x, \y, \z)-(\widetilde{\x}, \widetilde{\y}, \widetilde{\z})||}{\sqrt{3n}},
\end{split}
\end{equation*}
giving the desired result since $6 \sqrt{2} \max\{1,L'\}\max\{1,\sqrt{2}L''\} > 0$ is a constant not depending on $n$.  We note that this uses the fact that $L'' = \max\{2L', ||g_t(\mathbf{0}, \mathbf{0})||/\sqrt{n}\}$, and by assumption $||g_t(\mathbf{0}, \mathbf{0})||/\sqrt{n} \leq C$ where $C > 0$ is some constant not depending on $n$.

\section{Technical Proofs used in Examples} \label{app:examples}

\subsection{Technical Details of the Bernoulli-Gaussian Example}

We study the partial derivatives of the denoiser given in \eqref{eq:BGdenoiser} in order to use  Lemma \ref{lem:Lipschitz} to show that the denoiser is Lipschitz.

Denote
\begin{equation}
f_{a,b} :=\frac{ \sigma^2 a + \lambda_{t}^2 b}{ \sigma^2 + \lambda_{t}^2 + \sigma^2 \lambda_{t}^2}.
\label{eq:fab_def}
\end{equation}
Combining \eqref{eq:BGprobability}, \eqref{eq:Tab_def} and \eqref{eq:fab_def}, we find
$
\eta_{t}(a,b) = (1 + T_{a,b})^{-1} f_{a,b}.
$
Then,
\begin{equation}
\begin{split}
\label{eq:partial_BG_1_V1}
\Big \lvert \frac{\partial \eta_{t}(a,b) }{\partial a}  \Big \lvert &= \Big \lvert\frac{1}{1 + T_{a,b}} \Big[ \frac{\partial f_{a,b}}{\partial a} \Big] -\frac{f_{a,b}}{(1+T_{a,b})^{2}} \Big[\frac{\partial T_{a,b}}{\partial a}\Big]  \Big \lvert \\
&\leq \frac{ (1 + 2T_{a,b} )}{(1 + T_{a,b})^{2}}\Big \lvert  \frac{\partial f_{a,b}}{\partial a} \Big\lvert +  \frac{1}{(1 + T_{a,b})^{2}}  \Big \lvert   \frac{\partial (  T_{a,b} f_{a,b})}{\partial a} \Big \lvert.
\end{split}
\end{equation}
Now we show upper bounds for the two terms of \eqref{eq:partial_BG_1_V1} separately.  For the first term, 
$\frac{\partial f_{a,b}}{\partial a} \leq 1$, and
\[\frac{ (1 + 2T_{a,b} )}{(1 + T_{a,b})^{2}}\Big \lvert  \frac{\partial f_{a,b}}{\partial a} \Big\lvert \leq 1.\]

Consider the second term of \eqref{eq:partial_BG_1_V1}. First we note that
\[ \frac{1}{(1 + T_{a,b})^{2}} \Big \lvert  T_{a,b} \Big[ \frac{\partial}{\partial a} f_{a,b}\Big] + f_{a,b} \Big[\frac{\partial}{\partial a} T_{a,b}\Big]  \Big \lvert \leq  \Big \lvert \frac{\partial}{\partial a}  \Big[ T_{a,b} f_{a,b}\Big]  \Big \lvert.\]
Then from \eqref{eq:Tab_def} and \eqref{eq:fab_def},
\begin{equation*}
\begin{split}
T_{a,b} f_{a,b} 
&= \Big(\frac{1 - \epsilon}{\epsilon}\Big) \sqrt{2 \pi}(\sigma^2 a + \lambda_{t}^2 b) \rho_{ \nu_{t} }(\sigma^2 a + \lambda_{t}^2 b),
\end{split}
\end{equation*}
then using that $\frac{\partial}{\partial x} \rho_{\tau^2}(x) = - \frac{x}{\tau^2}  \rho_{\tau^2}(x)$, we have
\begin{equation}
\begin{split}
\label{eq:partial_BG_1}
\Big \lvert \frac{\partial}{\partial a}  \Big[ T_{a,b} f_{a,b}\Big]   \Big \lvert  &= \Big(\frac{1 - \epsilon}{\epsilon}\Big) \sqrt{2 \pi}\Big \lvert \sigma^2  \rho_{ \nu_{t} }(\sigma^2 a + \lambda_{t}^2 b)-\frac{\sigma^2(\sigma^2 a + \lambda_{t}^2 b)^2}{\nu_{t}}  \rho_{ \nu_{t} }(\sigma^2 a + \lambda_{t}^2 b)\Big \lvert \\
&= \Big(\frac{1 - \epsilon}{\epsilon}\Big) \frac{\sqrt{2 \pi} \sigma^2}{\nu_{t}}  \rho_{ \nu_{t} }(\sigma^2 a + \lambda_{t}^2 b) 
\Big \lvert \nu_{t} - (\sigma^2 a + \lambda_{t}^2 b)^2 \Big \lvert.
\end{split}
\end{equation}
To upper bound the above, we use $\exp\{-x\} \leq \frac{1}{1+x}$ when $x \geq 0$, and so
\[\rho_{\tau^2}(x) = \frac{1}{\sqrt{2\pi \tau^2 }} \exp\Big\{\frac{-x^2}{2 \tau^2}\Big\} \leq \sqrt{\frac{2}{\pi}} \Big(\frac{\tau}{2 \tau^2 + x^2}\Big).\]
Using this in \eqref{eq:partial_BG_1}, we find
\begin{equation}\label{eq:lastpartial_a_bound}
\begin{split}
\Big \lvert \frac{\partial}{\partial a}  \Big[ T_{a,b} f_{a,b}\Big]  \Big \lvert  &\leq \frac{2 \sigma^2}{\sqrt{ \nu_{t}}} \Big(\frac{1 - \epsilon}{\epsilon}\Big)\frac{  \lvert \nu_{t} - (\sigma^2 a + \lambda_{t}^2 b)^2 \lvert}{2  \nu_{t} + (\sigma^2 a + \lambda_{t}^2 b) ^2} \leq \frac{2 \sigma^2}{\sqrt{ \nu_{t}}} \Big(\frac{1 - \epsilon}{\epsilon}\Big) \leq \frac{2(1 - \epsilon)}{ \sigma_w  \epsilon},
\end{split}
\end{equation}
where in the final inequality we use $ \lambda_{t} \geq \sigma_w$ by \eqref{eq:BG_SE}, and 
\begin{equation}
\begin{aligned}
\frac{ \sigma^2}{\sqrt{ \nu_{t}}}& = \frac{ \sigma}{  \lambda_{t} \sqrt{ \sigma^2+\lambda_{t}^2 + \sigma^2 \lambda_{t}^2}} = \frac{ 1}{\lambda_{t} \sqrt{ 1+ \frac{\lambda_{t}^2}{\sigma^2} + \lambda_{t}^2}} \leq \frac{ 1}{  \lambda_{t}}.
\end{aligned}
\end{equation}
Using the above in \eqref{eq:partial_BG_1_V1}, we have 
\begin{equation*} 
\Big\lvert \frac{\partial}{\partial a} \eta_{t}(a,b) \Big\lvert \leq  1+  \frac{2(1 - \epsilon)}{\sigma_w \epsilon} .
\end{equation*}
As in \eqref{eq:partial_BG_1_V1}, we can show
\begin{equation*}
\begin{split}
\Big \lvert \frac{\partial}{\partial b} \eta_{t}(a,b) \Big \lvert  &\leq \frac{ (1 + 2T_{a,b} )}{(1 + T_{a,b})^{2}}\Big \lvert \Big[ \frac{\partial}{\partial b} f_{a,b}\Big] \Big\lvert 
+ \frac{1}{(1 + T_{a,b})^{2}} \Big \lvert  T_{a,b} \Big[ \frac{\partial}{\partial b} f_{a,b}\Big] + \Big[\frac{\partial}{\partial b} T_{a,b}\Big]f_{a,b}  \Big \lvert.
\end{split}
\end{equation*}
Then,
\begin{equation*}
\begin{split}
 \frac{ (1 + 2T_{a,b} )}{(1 + T_{a,b})^{2}}\Big \lvert \frac{\partial}{\partial b} f_{a,b} \Big\lvert  &\leq  1,
\end{split}
\end{equation*}
and a bound as in \eqref{eq:partial_BG_1} - \eqref{eq:lastpartial_a_bound} gives
\begin{equation*}
\begin{split}
 \frac{1}{(1 + T_{a,b})^{2}} \Big \lvert  T_{a,b} \Big[ \frac{\partial}{\partial b} f_{a,b}\Big] +f_{a,b} \Big[\frac{\partial}{\partial b} T_{a,b}\Big]  \Big \lvert \leq  \Big \lvert  \frac{\partial}{\partial b}  \Big[ T_{a,b} f_{a,b}\Big]   \Big \lvert &\leq  \frac{2  \lambda_{t}^2}{\sqrt{\nu_{t} }} \Big(\frac{1 - \epsilon}{\epsilon}\Big) \frac{ \lvert \nu_{t} - (\sigma^2 a + \lambda_{t}^2 b)^2 \lvert}{2 \nu_{t}  + (\sigma^2 a + \lambda_{t}^2 b)^2} \\
 &\leq \frac{2  \lambda_{t}^2}{\sqrt{\nu_{t} }} \Big(\frac{1 - \epsilon}{\epsilon}\Big) \leq \frac{2(1 - \epsilon)}{ \sigma  \epsilon} .
\end{split}
\end{equation*}

\subsection{Technical Details of the Block-sparse Signal Model Example}

In this section, we include the proofs of various lemmas that were used in presenting the example related to block-sparse signals.

\begin{proof}[Proof of Lemma \ref{lem:denoiser_simplify}]
We would like to study the expectation $\mathbb{E}[X_i\Big|\X_{(\ell)}+\lambda_t\Z_1=\mathbf{a},\X_{(\ell)}+\sigma \Z_2=\mathbf{b}],$ where $\X_{(\ell)}$ is a $K$-length vector with a single non-zero value equal to one and $i \in \{1, 2, \ldots, K\}$ is an index.  Moreover, $\Z_1, \Z_2$ are independent, $K$-length random vectors with standard Gaussian entries.  Since we only consider a single section $\ell \in \{1, 2, \ldots, L\}$ throughout this appendix, we drop the explicit $\ell$ subscript on $\X_{(\ell)}$, writing simply $\X$. Then we have,
\begin{equation}\label{eq:Blocksparse_denoiser}
\begin{aligned}
&\mathbb{E}\left[X_i\Big|\X +\lambda_t\Z_1=\mathbf{a},\X+\sigma \Z_2=\mathbf{b}\right]=P(X_i=1, X_j=0; j \in \ell, \, j \neq i \, | \, \X+\lambda_t\Z_1=\mathbf{a}, \X+\sigma \Z_2=\mathbf{b})\\
&=\frac{P(X_i=1,X_j=0; j \in \ell, \, j \neq i)P(\X+\lambda_t\Z_1=\mathbf{a},\X+\sigma \Z_2=\mathbf{b} \, | \, X_i=1,X_j=0; j \in \ell, \, j \neq i)}{P(\X+\lambda_t\Z_1=\mathbf{a},\X+\sigma\Z_2=\mathbf{b})}\\
&=\Big(\frac{1}{K}\Big)\frac{P(\X+\lambda_t\Z_1=\mathbf{a},\X+\sigma \Z_2=\mathbf{b} \, | \, X_i=1,X_j=0; j \in \ell,\,  j \neq i)}{P(\X+\lambda_t\Z_1=\mathbf{a},\X+\sigma\Z_2=\mathbf{b})}.
\end{aligned}
\end{equation}
WLOG, we assume that $i=1$, i.e., it is the first element in section $\ell$ and $j\in \{2, 3, \dots, K\}$. Now we simplify the two remaining probabilities in \eqref{eq:Blocksparse_denoiser}.  
First note that
\begin{equation}\label{eq:Blocksparse_prob1}
\begin{aligned}
&P(\X+\lambda_t\Z_1=\mathbf{a},\X+\sigma\Z_2=\mathbf{b} \, |\, X_1=1, X_2 = X_3 = \ldots = X_K =0)\\
&=P(\lambda_t \Z_1={\mathbf{a}- [1,0, \ldots, 0]^T}, \sigma \Z_2={\mathbf{b}-[1,0, \ldots, 0]^T})\\
&=P\left(Z_{11}=\frac{a_1-1}{\lambda_t}\right)P\left(Z_{21}=\frac{b_1-1}{\sigma}\right)\prod_{j=2}^{K} P\left(Z_{1j}=\frac{a_j}{\lambda_t}\right)P\left(Z_{2j}=\frac{b_j}{\sigma}\right)\\
&=\left(\frac{1}{2\pi {\lambda_t \sigma}}\right)^K \exp\left(\frac{-\lambda_t^2-\sigma^2}{2\lambda_t^2\sigma^2}\right)\exp\Big(-\sum_{j=1}^{K}\Big(\frac{a_j^2}{2\lambda_t^2}+\frac{b_j^2}{2\sigma^2}\Big)\Big)\exp\left(\frac{a_1}{\lambda_t^2}+\frac{b_1}{\sigma^2}\right).
\end{aligned}
\end{equation}
Similarly, we have
\begin{equation}\label{eq:Blocksparse_prob2}
\begin{aligned}
&P(\X+\lambda_t\Z_1=\mathbf{a},\X+\sigma\Z_2=\mathbf{b})\\
&=\sum_{k=1}^{K}P(X_k=1, X_j=0; j \in \ell, \, j \neq k) P(\X +\lambda_t\Z_1=\mathbf{a}, \X +\sigma \Z_2=\mathbf{b} \, | \, X_k=1, X_j=0; j \in \ell, \, j \neq k)\\
&= \frac{1}{K} \left(\frac{1}{2\pi\lambda_t \sigma}\right)^K \exp\left(\frac{-\lambda_t^2-\sigma^2}{2\lambda_t^2\sigma^2}\right)\exp\Big(-\sum_{j=1}^{K}\Big(\frac{a_j^2}{2\lambda_t^2}+\frac{b_j^2}{2\sigma^2}\Big)\Big)\sum_{k=1}^{K} \exp\left(\frac{a_k}{\lambda_t^2}+\frac{b_k}{\sigma^2}\right).
\end{aligned}
\end{equation}
Thus, combining \eqref{eq:Blocksparse_denoiser}-\eqref{eq:Blocksparse_prob2}, we find the desired result.
\end{proof}

\begin{proof}[Proof of Lemma \ref{lem:B2_proof}]
We will now appeal to Lemma \ref{lem:Lipschitz_non-separable} in order to show that the blockwise denoiser given in \eqref{eq:block_denoiser} is Lipschitz.  First, consider the partials w.r.t. $\mathbf{a}$,
\begin{equation*}
\begin{aligned}
&\left\lvert\frac{\partial}{\partial a_i} [g^{\ell}_t(\mathbf{a}, \mathbf{b})]_j \right \lvert
=\Big\lvert\frac{\partial}{\partial a_i}\Big( \frac{\exp\left(({a_j}/{\lambda_t^2})+({b_j}/{\sigma^2})\right)}{\sum_{k=1}^{K}\exp\left(({a_k}/{\lambda_t^2})+({b_k}/{\sigma^2})\right)} \Big)\Big \lvert \\
&= \frac{1}{\lambda_t^2}\Big( \frac{\exp\left(({a_j}/{\lambda_t^2})+({b_j}/{\sigma^2})\right)}{\sum_{k=1}^{K}\exp\left(({a_k}/{\lambda_t^2})+({b_k}/{\sigma^2})\right)} \Big) \Big\lvert\frac{\sum_{k=1}^{K}\exp\left(({a_k}/{\lambda_t^2})+({b_k}/{\sigma^2})\right)\mathbb{I}\{i=j\}-\exp\left(({a_i}/{\lambda_t^2})+ ({b_i}/{\sigma^2})\right) }{\sum_{k=1}^{K}\exp\left(({a_k}/{\lambda_t^2})+({b_k}/{\sigma^2})\right)}\Big\lvert.
\end{aligned}
\end{equation*}
We will show that
\begin{equation}
\frac{\exp\left(({a_j}/{\lambda_t^2})+({b_j}/{\sigma^2})\right)}{\sum_{k=1}^{K}\exp\left(({a_k}/{\lambda_t^2})+({b_k}/{\sigma^2})\right)} \Big\lvert\frac{\sum_{k=1}^{K}\exp\left(({a_k}/{\lambda_t^2})+({b_k}/{\sigma^2})\right)\mathbb{I}\{i=j\}-\exp\left(({a_i}/{\lambda_t^2})+ ({b_i}/{\sigma^2})\right) }{\sum_{k=1}^{K}\exp\left(({a_k}/{\lambda_t^2})+({b_k}/{\sigma^2})\right)}\Big\lvert \leq 1,
\label{eq:toshow1}
\end{equation}
in which case 
\[\left\lvert\frac{\partial}{\partial a_i} [g^{\ell}_t(\mathbf{a}, \mathbf{b})]_j \right \lvert\leq \frac{1}{\lambda_t^2}\leq \frac{1}{\sigma_w^2}.\]
Now we show \eqref{eq:toshow1}.  First note that \eqref{eq:toshow1} is obviously true if $i \neq j$, because
\[\frac{\exp\left(({a_j}/{\lambda_t^2})+ ({b_j}/{\sigma^2})\right)\exp\left(({a_i}/{\lambda_t^2})+ ({b_i}/{\sigma^2})\right)}{\left(\sum_{k=1}^{K}\exp\left(({a_k}/{\lambda_t^2})+({b_k}/{\sigma^2})\right)\right)^2}\leq1,\] 
so we consider the case $i=j$.  Then,
\begin{align*}
\frac{\exp\left(({a_i}/{\lambda_t^2})+({b_i}/{\sigma^2})\right)}{\sum_{k=1}^{K}\exp\left(({a_k}/{\lambda_t^2})+({b_k}/{\sigma^2})\right)} &\Big\lvert\frac{\sum_{k=1}^{K}\exp\left(({a_k}/{\lambda_t^2})+({b_k}/{\sigma^2})\right)-\exp\left(({a_i}/{\lambda_t^2})+ ({b_i}/{\sigma^2})\right) }{\sum_{k=1}^{K}\exp\left(({a_k}/{\lambda_t^2})+({b_k}/{\sigma^2})\right)}\Big\lvert \\
&=\frac{\exp\left(({a_i}/{\lambda_t^2})+ ({b_i}/{\sigma^2})\right) \sum_{k\neq i}\exp\left(({a_k}/{\lambda_t^2})+({b_k}/{\sigma^2})\right) }{\left(\exp\left(({a_i}/{\lambda_t^2})+ ({b_i}/{\sigma^2})\right) + \sum_{k\neq i}\exp\left(({a_k}/{\lambda_t^2})+({b_k}/{\sigma^2})\right)\right)^2}.
\end{align*}
The above is upper bounded by $1$ since the numerator is positive.

We note that the bound $\left\lvert\frac{\partial}{\partial b_i} [g^{\ell}_t(\mathbf{a}, \mathbf{b})]_j \right \lvert \leq \frac{1}{\sigma^2}$ can be shown similarly.

Now applying Lemma~\ref{lem:Lipschitz_non-separable}, we have shown that $[g^{\ell}_t(\mathbf{a}, \mathbf{b})]_j$ is Lipschitz with constant $\sqrt{\frac{2K^2}{\sigma_w^4} + \frac{2K^2}{\sigma^4}}.$
Therefore, for any index  $j\in \{1, 2, \ldots, K\}$ and $\ell \in \{ 1, 2, \ldots, L\}$,
\begin{equation}
    \label{eq:lemma_application1}
\Big\lvert [g^{\ell}_t(\mathbf{a}, \mathbf{b})]_j - [g^{\ell}_t(\widetilde{\mathbf{a}}, \widetilde{\mathbf{b}})]_j    \Big \lvert^2  \leq 2K^2 \Big(\frac{1}{\sigma_w^4} + \frac{1}{\sigma^4}\Big) \frac{ \lvert  \lvert (\mathbf{a}, \mathbf{b}) - (\widetilde{\mathbf{a}}, \widetilde{\mathbf{b}})  \lvert  \lvert^2}{2K} = K \Big(\frac{1}{\sigma_w^4} + \frac{1}{\sigma^4}\Big)  \lvert \lvert (\mathbf{a}, \mathbf{b}) - (\widetilde{\mathbf{a}}, \widetilde{\mathbf{b}})  \lvert\lvert^2.
\end{equation}
Now considering the denoiser in \eqref{eq:overall_denoiser}, we have
\begin{equation*}
\begin{split}
& \lvert  \lvert g_t(\mathbf{x}, \mathbf{y}) - g_t(\mathbf{x}, \mathbf{y})  \lvert \lvert^2 
= \sum_{\ell=1}^L \sum_{k=1}^K\Big \lvert [g^{\ell}_t(\mathbf{x}_{(\ell)}, \mathbf{y}_{(\ell)})]_k - [g^{\ell}_t(\widetilde{\mathbf{x}}_{(\ell)}, \widetilde{\mathbf{y}}_{(\ell)})]_k \Big \lvert^2 \\
&\leq \sum_{\ell=1}^L \sum_{k=1}^K K \Big(\frac{1}{\sigma_w^4} + \frac{1}{\sigma^4}\Big) \lvert   \lvert (\mathbf{x}_{(\ell)}, \mathbf{y}_{(\ell)}) - (\widetilde{\mathbf{x}}_{(\ell)}, \widetilde{\mathbf{y}}_{(\ell)})  \lvert  \lvert^2 =  K^2 \Big(\frac{1}{\sigma_w^4} + \frac{1}{\sigma^4}\Big)\lvert \lvert (\mathbf{x}, \mathbf{y}) - (\widetilde{\mathbf{x}}, \widetilde{\mathbf{y}})  \lvert  \lvert^2.
\end{split}
\end{equation*}

Therefore, we have the desired result:
\[\frac{\lvert \lvert g_t(\mathbf{x}, \mathbf{y}) - g_t(\mathbf{x}, \mathbf{y})  \lvert  \lvert}{\sqrt{n}}\leq \sqrt{2}K \Big(\frac{1}{\sigma_w^2} + \frac{1}{\sigma^2}\Big)\frac{ \lvert   \lvert (\mathbf{x}, \mathbf{y}) - (\widetilde{\mathbf{x}}, \widetilde{\mathbf{y}}) \lvert  \lvert}{\sqrt{2n}}.\]
\end{proof}

\begin{proof}[Proof of Lemma \ref{lem:B5_proof}]
We first study $\lim_{n\to\infty}\frac{1}{n} \mathbb{E}_{\Z}[\x^T g_t(\x+\Z,\widetilde{\x})]$.
Note that we have used the notation $\x_{(\ell)}$ to denote the $\ell^{th}$ block of $\x$.  Recall that $\x_{(\ell)}$ is a $K$-length vector, and we let $[\x_{(\ell)}]_k$ 
denote its $k^{th}$ entry for $k \in \{1, 2, \ldots, K\}$.  First, using the closed-form value of the blockwise denoisers given in Lemma \ref{lem:denoiser_simplify}, namely
\begin{equation*}
[g^{\ell}_t(\mathbf{a}, \mathbf{b})]_k =\frac{\exp\left((a_{k}/\lambda_t^2)+(b_k/\sigma^2)\right)}{\sum_{i=1}^{K}\exp\left(({a_i}/{\lambda_t^2})+({b_i}/{\sigma^2})\right)},
\end{equation*}
we have
\begin{equation}
\begin{aligned}
\lim_{n\to\infty}\frac{1}{n} \mathbb{E}_{\Z}[\x^T g_t(\x+\Z,\widetilde{\x})]&=\lim_{n\to\infty}\frac{1}{n} \sum_{\ell=1}^{L} \sum_{k=1}^K \mathbb{E}_{\Z}\left\{ [\x_{(\ell)}]_k  [g^{\ell}_t(\x+\Z,\widetilde{\x})]_k\right\}\\
&=\lim_{n\to\infty}\frac{1}{n} \sum_{\ell=1}^{L} \sum_{k=1}^K \mathbb{E}_{\Z}\left[\frac{[\x_{(\ell)}]_k \exp\left(([\x_{(\ell)}+\Z_{(\ell)}]_k/\lambda_t^2)+([\widetilde{\x}_{(\ell)}]_k/\sigma^2)\right)}{\sum_{i=1}^{K}\exp\left(({[\x_{(\ell)}+\Z_{(\ell)}]_i}/{\lambda_t^2})+([\widetilde{\x}_{(\ell)}]_i/{\sigma^2})\right)}\right]\\
&=\lim_{L \to\infty}\frac{1}{L} \sum_{\ell=1}^{L} \frac{1}{K} \mathbb{E}_{\Z}\left[\frac{\sum_{k=1}^K  [\x_{(\ell)}]_k \exp\left(([\x_{(\ell)}+\Z_{(\ell)}]_k/\lambda_t^2)+([\widetilde{\x}_{(\ell)}]_k/\sigma^2)\right)}{\sum_{i=1}^{K}\exp\left(({[\x_{(\ell)}+\Z_{(\ell)}]_i}/{\lambda_t^2})+([\widetilde{\x}_{(\ell)}]_i/{\sigma^2})\right)}\right].
\label{eq:first_bound}
\end{aligned}
\end{equation}
 
Now we consider the expectation
\[\mathbb{E}_{\Z}\left[\frac{\sum_{k=1}^K  [\x_{(\ell)}]_k \exp\left(([\x_{(\ell)}+\Z_{(\ell)}]_k/\lambda_t^2)+([\widetilde{\x}_{(\ell)}]_k/\sigma^2)\right)}{\sum_{i=1}^{K}\exp\left(({[\x_{(\ell)}+\Z_{(\ell)}]_i}/{\lambda_t^2})+([\widetilde{\x}_{(\ell)}]_i/{\sigma^2})\right)}\right].\]
We note that the $K$-length vector $\x_{(\ell)}$ has a single non-zero value taking the value one.  WLOG assume that $[\x_{(\ell)}]_1 =1$ and $[\x_{(\ell)}]_2 = [\x_{(\ell)}]_3 = \ldots [\x_{(\ell)}]_K =0$.  Then,
\begin{equation*}
    \begin{aligned}
&\mathbb{E}_{\Z}\left[\frac{\sum_{k=1}^K  [\x_{(\ell)}]_k \exp\left(([\x_{(\ell)}+\Z_{(\ell)}]_k/\lambda_t^2)+([\widetilde{\x}_{(\ell)}]_k/\sigma^2)\right)}{\sum_{i=1}^{K}\exp\left(({[\x_{(\ell)}+\Z_{(\ell)}]_i}/{\lambda_t^2})+([\widetilde{\x}_{(\ell)}]_i/{\sigma^2})\right)}\right] \\
&= \mathbb{E}_{\Z}\left[\frac{ \exp\left((1/\lambda_t^2) + ( [\Z_{(\ell)}]_1/\lambda_t^2) +([\widetilde{\x}_{(\ell)}]_1/\sigma^2)\right)}{ \exp\left((1/\lambda_t^2) + ( [\Z_{(\ell)}]_1/\lambda_t^2)+([\widetilde{\x}_{(\ell)}]_1/\sigma^2)\right) + \sum_{i=2}^{K}\exp\left(({[\Z_{(\ell)}]_i}/{\lambda_t^2})+([\widetilde{\x}_{(\ell)}]_i/{\sigma^2})\right)}\right]\\
&= \mathbb{E}_{\Z}\left[\frac{ \exp\left((1/\lambda_t^2) + ( [\Z_{(\ell)}]_1/\lambda_t^2) +(1/\sigma^2) + ([\widetilde{\z}_{(\ell)}]_1/\sigma^2)\right)}{ \exp\left((1/\lambda_t^2) + ( [\Z_{(\ell)}]_1/\lambda_t^2)+(1/\sigma^2) + ([\widetilde{\z}_{(\ell)}]_1/\sigma^2)\right) + \sum_{i=2}^{K}\exp\left(({[\Z_{(\ell)}]_i}/{\lambda_t^2})+([\widetilde{\z}_{(\ell)}]_i/{\sigma^2})\right)}\right],
\end{aligned}
\end{equation*}
where in the last step we have used that in our model, $\widetilde{\x} = \x+\widetilde{\z} = \x + \mathcal{N}(0, \sigma^2 \mathbb{I}_n)$.  Now define the RV (where the randomness is w.r.t. realizations from a RV $\widetilde{\Z}$),
\[Y_{\ell} : =  \frac{1}{K}\mathbb{E}_{\Z}\left[\frac{  \exp\left((1/\lambda_t^2) + ( [\Z_{(\ell)}]_1/\lambda_t^2) +(1/\sigma^2) + ([\widetilde{\z}_{(\ell)}]_1/\sigma^2)\right)}{ \exp\left((1/\lambda_t^2) + ( [\Z_{(\ell)}]_1/\lambda_t^2)+(1/\sigma^2) + ([\widetilde{\z}_{(\ell)}]_1/\sigma^2)\right) + \sum_{i=2}^{K}\exp\left(({[\Z_{(\ell)}]_i}/{\lambda_t^2})+([\widetilde{\z}_{(\ell)}]_i/{\sigma^2})\right)}\right]. \]
Plugging this into \eqref{eq:first_bound},
\begin{equation*}
\begin{aligned}
\lim_{n\to\infty}\frac{1}{n} \mathbb{E}_{\Z}[\x^T g_t(\x+\Z,\widetilde{\x})]
=\lim_{L\to\infty}\frac{1}{L} \sum_{\ell=1}^{L} Y_{\ell} = \mathbb{E}_{\widetilde{\Z}}[Y_{1}].
\end{aligned}
\end{equation*}
We can show that the above limit exists using the SLLN (stated in Thm \ref{thm:LLN}), since the $Y_{\ell}$ are i.i.d.\ for $\ell = 1, 2, \ldots, L$ with expectation $\mathbb{E}_{\widetilde{\Z}}[Y_{1}].$  It is not hard to show that $\mathbb{E}_{\widetilde{\Z}}[Y_{1}] < \infty.$

The second equation in $\textbf{(B5)}$ can be shown similarly, 
\begin{equation}
\begin{aligned}
&\lim_{n \to\infty}\frac{1}{n} \mathbb{E}_{\Z,\Z'}\left[g_t(\x+ \Z,\widetilde{\x})^T
g_s(\x+ \Z',\widetilde{\x})\right]\\
&=\lim_{n\to\infty}\frac{1}{n} \sum_{\ell=1}^{L} \sum_{k=1}^K \mathbb{E}_{\Z,\Z'}\left[[g^{\ell}_t(\x+ \Z,\widetilde{\x})]_k [g^{\ell}_s(\x+ \Z',\widetilde{\x})]_k \right]\\
&=\lim_{n\to\infty}\frac{1}{n} \sum_{\ell=1}^{L} \sum_{k=1}^K \mathbb{E}_{\Z,\Z'}\left[\frac{\exp\left(([\x_{(\ell)}+\Z_{(\ell)}]_k/{\lambda_t^2})+({[\widetilde{\x}_{(\ell)}]_k}/{\sigma^2})\right) \exp\left(([\x_{(\ell)}+\Z'_{(\ell)}]_k/{\lambda_t^2})+({[\widetilde{\x}_{(\ell)}]_k}/{\sigma^2})\right)}{\sum_{i=1}^{K}\exp\left(([\x_{(\ell)}+\Z_{(\ell)}]_i/{\lambda_t^2})+({[\widetilde{\x}_{(\ell)}]_i}/{\sigma^2})\right)\sum_{j=1}^{K}\exp\left(([\x_{(\ell)}+\Z'_{(\ell)}]_j/{\lambda_t^2})+({[\widetilde{\x}_{(\ell)}]_j}/{\sigma^2})\right)}\right]\\
&=\lim_{L\to\infty}\frac{1}{L} \sum_{\ell=1}^{L}\frac{1}{K}  \mathbb{E}_{\Z,\Z'}\left[\frac{\sum_{k=1}^K \exp\left(([\x_{(\ell)}+\Z_{(\ell)}]_k/{\lambda_t^2})+({[\widetilde{\x}_{(\ell)}]_k}/{\sigma^2})\right) \exp\left(([\x_{(\ell)}+\Z'_{(\ell)}]_k/{\lambda_t^2})+({[\widetilde{\x}_{(\ell)}]_k}/{\sigma^2})\right)}{\sum_{i=1}^{K}\exp\left(([\x_{(\ell)}+\Z_{(\ell)}]_i/{\lambda_t^2})+({[\widetilde{\x}_{(\ell)}]_i}/{\sigma^2})\right)\sum_{j=1}^{K}\exp\left(([\x_{(\ell)}+\Z'_{(\ell)}]_j/{\lambda_t^2})+({[\widetilde{\x}_{(\ell)}]_j}/{\sigma^2})\right)}\right].
\label{eq:B5_limit2}
\end{aligned}
\end{equation}
Again, we assume WLOG that $[\x_{(\ell)}]_1 =1$ and $[\x_{(\ell)}]_2 = [\x_{(\ell)}]_3 = \ldots [\x_{(\ell)}]_K =0$. Consider the numerator inside the expectation on the right side of \eqref{eq:B5_limit2}.  We have
\begin{equation}
\begin{split}
&\sum_{k=1}^K \exp\left(\frac{[\x_{(\ell)}+\Z_{(\ell)}]_k}{\lambda_t^2}+\frac{[\widetilde{\x}_{(\ell)}]_k}{\sigma^2}\right) \exp\left(\frac{[\x_{(\ell)}+\Z'_{(\ell)}]_k}{\lambda_t^2}+\frac{[\widetilde{\x}_{(\ell)}]_k}{\sigma^2}\right) \\
&=  \exp\left(\frac{1 + [\Z_{(\ell)}]_1}{\lambda_t^2}+ \frac{1+[\Z'_{(\ell)}]_1}{\lambda_t^2}+\frac{2[\widetilde{\x}_{(\ell)}]_1}{\sigma^2}\right) + \sum_{k=2}^K \exp\left(\frac{[\Z_{(\ell)}]_k}{\lambda_t^2}+\frac{[\Z'_{(\ell)}]_k}{\lambda_t^2}+\frac{2[\widetilde{\x}_{(\ell)}]_k}{\sigma^2}\right)\\
&=  \exp\left( \frac{1+[\Z_{(\ell)}]_1}{\lambda_t^2}+ \frac{1+[\Z'_{(\ell)}]_1}{\lambda_t^2}+\frac{2(1+[\widetilde{\z}_{(\ell)}]_1)}{\sigma^2}\right) + \sum_{k=2}^K \exp\left(\frac{[\Z_{(\ell)}]_k}{\lambda_t^2}+\frac{[\Z'_{(\ell)}]_k}{\lambda_t^2}+\frac{2[\widetilde{\z}_{(\ell)}]_k}{\sigma^2}\right),
\label{eq:num}
\end{split}
\end{equation}
where in the last step we have used that in our model, $\widetilde{\x} = \x+\widetilde{\z} = \x + \mathcal{N}(0, \sigma^2 \mathbb{I})$.  Similarly for the denominator in \eqref{eq:B5_limit2},
\begin{equation}
\begin{split}
&\sum_{i=1}^{K}\exp\left(\frac{[\x_{(\ell)}+\Z_{(\ell)}]_i}{\lambda_t^2}+\frac{[\widetilde{\x}_{(\ell)}]_i}{\sigma^2}\right)\sum_{j=1}^{K}\exp\left(\frac{[\x_{(\ell)}+\Z'_{(\ell)}]_j}{\lambda_t^2}+\frac{[\widetilde{\x}_{(\ell)}]_j}{\sigma^2}\right)\\
&=\left(\exp\left(\frac{1+[\Z_{(\ell)}]_1}{\lambda_t^2}+\frac{1+[\widetilde{\Z}_{(\ell)}]_1}{\sigma^2}\right) + \sum_{i=2}^{K}\exp\left(\frac{[\Z_{(\ell)}]_i}{\lambda_t^2}+\frac{[\widetilde{\z}_{(\ell)}]_i}{\sigma^2}\right)\right)\times \\
&\qquad \left(\exp\left(\frac{1+[\Z'_{(\ell)}]_1}{\lambda_t^2}+\frac{1+[\widetilde{\z}_{(\ell)}]_1}{\sigma^2}\right) + \sum_{j=2}^{K}\exp\left(\frac{[\Z'_{(\ell)}]_j}{\lambda_t^2}+\frac{[\widetilde{\z}_{(\ell)}]_j}{\sigma^2}\right)\right).
\label{eq:denom}
\end{split}
\end{equation}
Plugging \eqref{eq:num} and \eqref{eq:denom} into \eqref{eq:B5_limit2}, 
\begin{equation}
\begin{aligned}
&\lim_{n \to\infty}\frac{1}{n} \mathbb{E}_{\Z,\Z'}\left[g_t(\x+ \Z,\widetilde{\x})^T
g_s(\x+ \Z',\widetilde{\x})\right]\\
&=\lim_{L\to\infty}\frac{1}{LK} \sum_{\ell=1}^{L}  \mathbb{E}_{\Z,\Z'}\left[\frac{e^{ \frac{1+[\Z_{(\ell)}]_1}{\lambda_t^2}+ \frac{1+[\Z'_{(\ell)}]_1}{\lambda_t^2}+\frac{2(1+[\widetilde{\Z}_{(\ell)}]_1)}{\sigma^2}} + \sum_{k=2}^K e^{\frac{[\Z_{(\ell)}]_k}{\lambda_t^2}+\frac{[\Z'_{(\ell)}]_k}{\lambda_t^2}+\frac{2[\widetilde{\Z}_{(\ell)}]_k}{\sigma^2}}}{\Big[e^{\frac{1+[\Z_{(\ell)}]_1}{\lambda_t^2}+\frac{1+[\widetilde{\Z}_{(\ell)}]_1}{\sigma^2}} + \sum_{i=2}^{K}e^{\frac{[\Z_{(\ell)}]_i}{\lambda_t^2}+\frac{[\widetilde{\Z}_{(\ell)}]_i}{\sigma^2}}\Big]\Big[e^{\frac{1+[\Z'_{(\ell)}]_1}{\lambda_t^2}+\frac{1+[\widetilde{\Z}_{(\ell)}]_1}{\sigma^2}} + \sum_{j=2}^{K}e^{\frac{[\Z'_{(\ell)}]_j}{\lambda_t^2}+\frac{[\widetilde{\Z}_{(\ell)}]_j}{\sigma^2}}\Big]}\right],
\label{eq:final_limit}
\end{aligned}
\end{equation}
and as before we define a RV
\[W_{\ell}:= \frac{1}{K}  \mathbb{E}_{\Z,\Z'}\left[\frac{e^{ \frac{1+[\Z_{(\ell)}]_1}{\lambda_t^2}+ \frac{1+[\Z'_{(\ell)}]_1}{\lambda_t^2}+\frac{2(1+[\widetilde{\Z}_{(\ell)}]_1)}{\sigma^2}} + \sum_{k=2}^K e^{\frac{[\Z_{(\ell)}]_k}{\lambda_t^2}+\frac{[\Z'_{(\ell)}]_k}{\lambda_t^2}+\frac{2[\widetilde{\Z}_{(\ell)}]_k}{\sigma^2}}}{\Big[e^{\frac{1+[\Z_{(\ell)}]_1}{\lambda_t^2}+\frac{1+[\widetilde{\Z}_{(\ell)}]_1}{\sigma^2}} + \sum_{i=2}^{K}e^{\frac{[\Z_{(\ell)}]_i}{\lambda_t^2}+\frac{[\widetilde{\Z}_{\ell}]_i}{\sigma^2}}\Big]\Big[e^{\frac{1+[\Z'_{(\ell)}]_1}{\lambda_t^2}+\frac{1+[\widetilde{\Z}_{(\ell)l}]_1}{\sigma^2}} + \sum_{j=2}^{K}e^{\frac{[\Z'_{(\ell)}]_j}{\lambda_t^2}+\frac{[\widetilde{\Z}_{(\ell)}]_j}{\sigma^2}}\Big]}\right],\]
where the randomness is w.r.t. $\widetilde{\Z}$.  Then, from \eqref{eq:final_limit},
\begin{equation*}
\begin{aligned}
&\lim_{n \to\infty}\frac{1}{n} \mathbb{E}_{\Z,\Z'}\left[g_t(\x+ \Z,\widetilde{\x})^T
g_s(\x+ \Z',\widetilde{\x})\right] = \lim_{L\to\infty}\frac{1}{L} \sum_{\ell=1}^{L} W_{\ell} = \mathbb{E}_{\widetilde{\Z}}[W_{1}].
\end{aligned}
\end{equation*}
We can show that the above limit holds due to the SLLN since $W_\ell$ are i.i.d.\ for $\ell = 1, 2, \ldots, L$ having expectation $\mathbb{E}_{\widetilde{\Z}}[W_{1}]$.  It is not hard to show that $\mathbb{E}_{\widetilde{\Z}}[W_{1}]< \infty$.

\end{proof}

\end{document}